\documentclass{article}

\usepackage[preprint]{neurips_2026}

\usepackage[utf8]{inputenc} % allow utf-8 input
\usepackage[T1]{fontenc}    % use 8-bit T1 fonts
\usepackage{hyperref}       % hyperlinks
\usepackage{url}            % simple URL typesetting
\usepackage{booktabs}       % professional-quality tables
\usepackage{amsfonts}       % blackboard math symbols
\usepackage{nicefrac}       % compact symbols for 1/2, etc.
\usepackage{microtype}      % microtypography
\usepackage{xcolor}         % colors

\usepackage{amsthm}
 \usepackage{amsmath}
\usepackage{amssymb}
\usepackage{algorithm}
\usepackage{wrapfig}
\usepackage{parskip}
\usepackage[noend]{algpseudocode}
\usepackage{orcidlink}
\usepackage{lineno}
\usepackage{tikzsymbols}
\usepackage{tikz}
\usepackage{mathtools}
\usepackage{thmtools}
\usepackage{thm-restate}
\usepackage[capitalise]{cleveref}
\usepackage{xspace}
\usepackage[new]{old-arrows}
\usepackage{float}
\usepackage{wrapfig}
\usepackage{caption}
\usepackage{subcaption}
\usepackage{graphicx}
\usepackage{mathabx}
\usepackage{stmaryrd}
\usepackage{wasysym}
\usepackage{geometry}
\usepackage{pgfplots}
\pgfplotsset{compat=1.18}
\usepackage{graphicx}
\usepackage{tikz}
\usetikzlibrary{automata,positioning,arrows.meta}
\usepackage{array}
\usepackage{longtable}
\usepackage{multirow}
\usepackage{enumitem}
\newtheorem{lemma}{Lemma}
\newtheorem{theorem}{Theorem}
\newtheorem{definition}{Definition}

\newtheorem{property}{Property}
\newtheorem{remark}{Remark}
\newtheorem{claim}{Claim}
\newtheorem{assumption}{Assumption}
\renewcommand{\epsilon}{\varepsilon}
\renewcommand{\phi}{\varphi}
\usepackage[most]{tcolorbox}
\newtcolorbox{softbox}[1][Definition]{ % Default title is "Definition"
  colframe=blue!20!white,      % The border line (light blue)
  colback=blue!5!white,        % The body background (very pale blue)
  colbacktitle=blue!15!white,  % The TITLE background (slightly darker pale blue)
  
  coltitle=black,              % Title text is black
  coltext=black,               % Body text is black
  
  fonttitle=\bfseries,         % Bold title
  arc=5mm,                     % Rounded corners
  boxrule=0.5pt,               % Thin border
  toptitle=1mm, bottomtitle=1mm, % Padding around the title
  
  title={#1}
}
\newtcolorbox{simplebox}{
  colback=red!5!white,
  colframe=red!20!white,
  coltext=black,
  arc=3mm,
  boxrule=0.5pt,
  left=1.5mm, right=1.5mm,
  top=1mm, bottom=1mm
}
\usepackage[framemethod=TikZ]{mdframed}

\newcounter{quest}[section]

\setlist[itemize]{leftmargin=*}

\newcommand{\size}{\operatorname{size}}

\newcommand{\etr}{\exists\Rb}

\newcommand{\Nb}{\mathbb{N}}
\newcommand{\Rb}{\mathbb{R}}

\newcommand{\Qb}{\mathbb{Q}}

\newcommand{\NP}{\mathsf{NP}}

\newcommand{\PSpace}{\mathsf{PSPACE}}

\newcommand{\Ocomplexity}{\mathcal{O}}

\usepackage{xargs} 
\newcommand{\states}{\mathcal{S}}
\newcommand{\trans}{P}
\newcommand{\cost}{c}
\newcommand{\R}{\mathbb{R}}

\newcommand{\val}{\boldsymbol{v}}
\newcommand{\optval}{\boldsymbol{v}^*}
\newcommand{\discount}{\gamma}
\newcommand{\discountfactor}{\discount}
\newcommand{\discountedpayoff}{\textit{Disc}}
\newcommand{\mc}{M}
\newcommand{\rmc}{\mathcal{M}}
\newcommand{\uncert}{\mathcal{P}}
\newcommand{\uncertaintyset}{\uncert}
\newcommand{\envpol}{\tau}
\newcommand{\actions}{\mathcal{A}}

\newcommand{\rmdp}{\mathcal{R}}

\newcommand{\agentpol}{\sigma}
\newcommand{\radius}{\delta}
\newcommand{\envpolopt}{\envpol^*}
\newcommand{\agentpolopt}{\agentpol^*}
\newcommand{\valopt}{\val^*}
\newcommand{\I}{\mathbb{I}}
\newcommand{\identitymatrix}{\I}
\newcommand{\Bellman}{\mathcal{T}}
\newcommand{\valvector}{\val}
\newcommand{\costvector}{\mathbf{c}}
\newcommand{\pvector}{\mathfrak{p}}
\DeclareMathOperator*{\argmin}{argmin}
\DeclareMathOperator*{\argmax}{argmax}
\newcommand{\nominal}{\hat{\pvector}}

\newcommand{\uvector}{\boldsymbol{u}}
\newcommand{\vvector}{\boldsymbol{v}}
\newcommand{\infnorm}[1]{\left\| #1 \right\|_\infty}

\newcommand{\States}{\states}
\newcommand{\lone}{L_1}
\newcommand{\linf}{L_\infty}
\renewcommand{\succ}{\textit{Succ}}

\newcommand{\Threshold}{\alpha}
\newcommand{\absorbing}{\bot}
\newcommand{\Lp}{L_p}
\newcommand{\defas}{\coloneqq}
\newcommand{\discrmdp}{\mathsf{DiscRMDP}}
\newcommand{\discrmc}{\mathsf{DiscRMC}}

\title{On the Complexity of Discounted Robust MDPs with $L_p$ Uncertainty Sets}

\author{%
  Ali Asadi\\
  Institute of Science and Technology Austria \\
  Klosterneuburg, Austria \\
  \texttt{ali.asadi@ista.ac.at} \\
  \And
  Krishnendu Chatterjee\\
  Institute of Science and Technology Austria \\
  Klosterneuburg, Austria \\
  \texttt{krishnendu.chatterjee@ista.ac.at} \\
  \And
  Alipasha Montaseri\\
  Institute of Science and Technology Austria \\
  Klosterneuburg, Austria \\
  \texttt{alipasha.montaseri@ista.ac.at} \\
  \And
  Ali Shafiee\\
  Institute of Science and Technology Austria \\
  Klosterneuburg, Austria \\
  \texttt{ali.shafiee@ista.ac.at} \\
}

\begin{document}

\maketitle

\begin{abstract}
    A basic model in sequential decision making is the Markov decision process (MDP), which is extended to Robust MDPs (RMDPs) by allowing uncertainty in transition probabilities and optimizing against the worst-case transition probabilities from the uncertainty sets. 
    The class of $(s,a)$-rectangular RMDPs with $L_p$ uncertainty sets provides a flexible and expressive model for such problems. We study this class of RMDPs with a discounted-sum cost criterion and a constant discount factor. 
    The existence of an efficient algorithm for this class is a fundamental theoretical question in optimization and sequential decision making.
    Previous results only establish a strongly polynomial-time algorithm for $L_\infty$ uncertainty sets.
    In this work, our main results are as follows:
   (a)~we show that for any compact uncertainty set, the policy iteration algorithm for RMDPs is strongly polynomial with oracle access to solutions of Robust Markov chains (RMCs);
   (b)~we present strongly polynomial-time bounds on the policy iteration algorithm for RMCs with $L_1$ and $L_\infty$ uncertainty sets; and 
   (c)~we establish hardness results for RMCs with $L_p$ uncertainty sets for integer $p$ satisfying $1<p<\infty$.
   Finally, motivated by our theoretical bounds, we present experimental results showing how fast policy iteration converges for RMDPs with $L_1$ and $L_\infty$ uncertainty sets.
\end{abstract}

\section{Introduction}

\paragraph{Robust Markov Decision Processes.}
In a Markov decision process (MDP), an agent interacts with a stochastic environment of finite state space, where, given the current state and the choice of an action of the agent, the interaction with the environment updates the state
probabilistically~\cite{puterman94}.
This is the basic model in sequential decision making, where the common assumption is that the transition function is known.
In many practical scenarios, the transition function is not known, but rather  
constructed from data, which results in a level of uncertainty. 
This leads to the study of \emph{Robust Markov decision processes} (RMDPs)~\citep{nilim2005robust,iyengar2005robust}. 
In RMDPs, the assumption of MDPs is relaxed by considering that there is an
uncertainty set containing the original transition function, which is 
chosen by an adversary. 
We consider $(s,a)$-rectangular RMDPs where the uncertainty sets are separate for each state-action pair. This class has been the standard in the literature~\cite{iyengar2005robust,nilim2005robust}.

\paragraph{Discounted-sum costs and values.} The classical objective in MDPs and RMDPs for the agent is to minimize the discounted-sum cost criterion, where each state is assigned a cost and given a discount factor; the discounted cost of a run (infinite trajectory) is the discounted sum of the costs of states appearing in the run.
The value of an RMDP is the minimum expected discounted cost that can be guaranteed by the agent against all possible choices of the transition function from the uncertainty sets by the adversary. 

\paragraph{Exact value computation.} While approximation of the value is a relevant problem, the solution for
the exact value is important for two reasons. First, the exact value is a fundamental theoretical
question. Second, for discounted-cost MDPs and RMDPs, additive approximation guarantees are easy
to obtain via value iteration. However, multiplicative factor approximation that guarantees a high fraction of the value is difficult, and solving for the exact value gives a stronger guarantee.
%Finally, the exact value problem is relevant in optimal policy synthesis for general discounted-sum cost criterion. In particular, the value iteration does not necessarily give a memoryless policy, but rather a history-dependent one. In contrast, given the exact values, choosing an action in a state that satisfies the Bellman optimality operator is optimal. Hence, exact values yield memoryless optimal policies.

\paragraph{$L_p$ uncertainty sets and motivation.} 
In this work, we focus on constant discount factors and primarily on $L_p$
uncertainty sets, for $p \in \Nb \cup \{\infty\}$. 
The main motivation to study $L_p$ uncertainty sets is as follows: for learning transition probabilities given the experimental samples, estimates of the uncertainty sets can be obtained using 
concentration bounds, which are then naturally approximated with $L_p$ uncertainty sets~\cite{weissman2003inequalities,DBLP:conf/aistats/BehzadianRPH21,DBLP:journals/ai/GivanLD00,iyengar2005robust}. Thus, the study of RMDPs with various $L_p$ uncertainty sets is a fundamental algorithmic problem that we address in this work.

\paragraph{Previous computational results.} Linear programming solves MDPs with discounted cost in polynomial time~\citep{d1963probabilistic,Derman1972FiniteStateMDP}. 
The existence of a strongly polynomial-time algorithm where the number of arithmetic operations is polynomial independent of the bit length is a fundamental theoretical question.
The seminal work of~\cite{ye2011simplex} presents the first strongly-polynomial time algorithm for constant discount factors for MDPs, and~\cite{hansen2013strategy} extends it to
stochastic games. Recently, a strongly-polynomial time bound was established for RMDPs with $L_\infty$ uncertainty sets by~\cite{asadi2026strongly}. While previous results exist for $L_\infty$ uncertainty sets, a comprehensive algorithmic study of RMDPs with $L_p$ uncertainty sets is missing, which is the focus of this work.

% \paragraph{Open problem.} Existing analyses of robust value and policy iteration for $(s, a)$-rectangular RMDPs with discounted-sum payoffs are developed for approximation schemes. Moreover, the best known time complexity for robust policy iteration is only polynomial-time (in the input encoding), and does not yield a strongly polynomial bound. Informally, an algorithm is strongly polynomial if its number of arithmetic operations is bounded by a polynomial in the problem dimensions, e.g., numbers of states, actions, and discount factor, and does not depend on the numerical magnitudes or bit-length of the input data, e.g., rewards and uncertainty sets. This leaves open whether robust policy iteration admits a strongly polynomial runtime guarantee for $\linf$ uncertainty sets.

\paragraph{Our contributions.} Our main contributions are as follows.
Robust Markov chains (RMCs) are a special class of RMDPs where the agent does not have the choice of actions. 
First, we show that for any compact uncertainty sets, the policy iteration algorithm for RMDPs is strongly polynomial given an oracle access to solutions of RMCs (\Cref{theorem:oracle_access}). 
Second, for $L_1$ and $L_\infty$ uncertainty sets, we show that RMCs can be solved in strongly polynomial time. While this result exists for $L_\infty$
sets~\cite{asadi2026strongly}, our analysis improves the previous result by a factor of the number of states, and for $L_1$ sets, our result is novel (\Cref{thm:strongly_poly}).
Third, we establish hardness results for RMCs with $L_p$ uncertainty sets for $1<p<\infty$, and also present upper bounds for $L_p$ uncertainty sets (\Cref{thm:lp-comp-class}).
Finally, motivated by these theoretical bounds, we complement our analysis with experimental results showing how fast policy iteration converges for RMDPs with $L_1$ and $L_\infty$ uncertainty sets.

\paragraph{Technical contributions.} The main technical contributions are threefold. First, we present a
policy iteration algorithm for RMDPs that provides a reduction from RMDPs to RMCs. Second, we introduce a cumulative
transition-probability potential with respect to the optimal distribution chosen by the adversary. With this potential, we obtain a new strongly-polynomial bound for $L_1$ sets and an improved bound for $L_\infty$ sets. To the best of our knowledge, the previous potential function presented in~\cite{asadi2026strongly} does not extend to $L_1$ sets. Third, for other $L_p$ uncertainty sets
with integer $1 <p <\infty$, we prove hardness via a reduction from $p$-ROOT-SUM and give
an upper bound in the existential theory of the reals by encoding the adversary's optimization problem via KKT conditions.

\paragraph{Related works.} The exact value computation for RMDPs with various $L_p$ uncertainty sets is the focus of this work. Several related results have been established for the value approximation for more general models. For discounted-cost RMDPs, the value iteration algorithm is established concurrently in~\cite{nilim2005robust,iyengar2005robust}. Moreover, policy iteration is also introduced in~\cite{iyengar2005robust} for value approximation. Another variant of policy iteration was established in~\cite{DBLP:journals/informs/KaufmanS13} along with convergence guarantees. It is noteworthy that none of the above results present polynomial or strongly-polynomial bounds for the exact value computation.

\section{Preliminaries}

\paragraph{Notation.} Let $S$ be a set. We denote the set of all probability measures over $S$ by $\Delta(S)$, and its power set by $2^S$.
We use $[n]$ to represent the set of all natural numbers from 1 to $n$. The sets of natural numbers and non-negative integer numbers are denoted by $\Nb$ and $\Nb_0$, respectively. For vectors $\uvector$ and $\valvector$, we write $\valvector \succcurlyeq \uvector$ to indicate that every component of $\valvector$ is greater than or equal to the corresponding component of $\uvector$. Moreover, we use $\text{size}(n) = \lceil \log_2(|n|+1) \rceil + 1$, and $\size(\frac{p}{q})=\size(p)+\size(q)$ as the bit-size of integer $n$ and rational number $\frac{p}{q}$.

\paragraph{Robust Markov Decision Processes.} A Robust Markov Decision Process (RMDP) is a tuple $\rmdp = (\states, \cost, \actions, \succ, \uncert)$, where the state space $\states = \{1, \dots, |\states|\}$ is a finite set of size $n$. The action space $\actions$ is a finite set of size $m$. The function $\cost \colon \states \to \R$ gives the cost of visiting each state. The function $\succ \colon \states \times \actions \to 2^\states$ maps a state and an action to possible next states. The function $\uncert \colon \states \times \actions \to 2^{\Delta(\states)}$ gives a set of possible transition probabilities, where $\uncert(s,a) \subseteq \Delta(\succ(s,a))$. We assume that $\uncert(s, a)$ is a compact non-empty set.

\begin{remark}
    We assume $(s,a)$-rectangular uncertainty, which means that the adversary observes the chosen action before selecting a transition function from the uncertainty set. This assumption is standard in the literature~\cite{iyengar2005robust,nilim2005robust}. Other classes of uncertainty are not the focus of our work.
\end{remark}

\paragraph{Dynamics.} At first, the initial state is $s_0 \in \states$. At each time step $t \ge 0$, the agent first chooses an action $a_t \in \actions$. Then, the adversary selects a transition probability distribution $\pvector_t \in \uncert(s_t, a_t)$. The subsequent state $s_{t+1}$ is drawn probabilistically from the distribution $\pvector_t$. A run is an infinite sequence of states and actions $\omega \defas (s_0, a_0, s_1, \ldots)$ where for $t \ge 0$, there exists a distribution $\pvector_t \in \uncert(s_t, a_t)$ such that $\pvector_t(s_{t+1}) > 0$. A history is a finite sequence of states and actions $h \defas (s_0, a_0, \ldots, s_t)$.
%The set of all runs is denoted by $\Omega$, and the set of all runs where $s_0 = s$ is denoted by $\Omega_s$. 

\paragraph{Policies.} A history-dependent agent policy is denoted as $\agentpol \colon (\states \times \actions)^* \times \states \to \Delta(\actions)$. An agent policy is memoryless if it only depends on the current state, i.e., $\agentpol \colon \states \to \Delta(\actions)$ and is pure if it prescribes deterministic actions, i.e., $\agentpol \colon (\states \times \actions)^* \times \states \to \actions$. We say an agent policy is positional if it is pure and memoryless. A history-dependent adversary policy is denoted as $\envpol \colon (\states \times \actions)^+ \to \Delta(\Delta(\states))$ such that for every finite sequence of states and actions $h = (s_0, a_0, \ldots, s_t, a_t)$, we have $\envpol(h) \in \Delta(\uncert(s_t, a_t))$. The notions of memoryless, pure, and positional adversary policies are defined analogously.

\paragraph{Probability Measures.} For a history $h$, the cone of $h$ is the set of runs where $h$ is their prefix. Given an agent policy $\agentpol$, an adversary policy $\envpol$, and an initial state $s_0$, the
unique probability measure over Borel sets of runs is denoted by $\mathbb{P}_{s_0}^{\agentpol, \envpol}(.)$, which is defined by Carathéodory's extension theorem by extending the natural definition over cones of runs~\cite{billingsley2012ProbabilityMeasurea}. The expectation with respect to this probability measure is denoted by $\mathbb{E}_{s_0}^{\agentpol, \envpol}(.)$.

\paragraph{Discounted Cost Criterion.} A criterion 
is a measurable function that assigns a real number to every run. For a run $\omega = (s_0, a_0, \ldots)$, the discounted cost criterion is defined as $\discountedpayoff(\omega) \defas \sum_{t \in \Nb_0} \gamma^t \cost(s_t)$. The objective of the agent is to minimize the expected discounted cost criterion, while the adversary wants to maximize it.

\begin{remark}
    For the discounted cost criterion, positional policies for both agent and adversary are as powerful as history-dependent policies~\cite{iyengar2005robust, nilim2005robust}. Therefore, we restrict our attention to positional policies in the rest of the paper.
\end{remark}

\paragraph{Values.} The discounted value of a state $s$ is defined as
\begin{equation}
    \optval_\rmdp(s) \defas \inf_{\agentpol} \sup_{\envpol} \mathbb{E}_{s}^{\agentpol, \envpol} ( \discountedpayoff(\omega) )\,.
\end{equation}
    
We drop the $\rmdp$ label when it is clear from the context.

\paragraph{Optimal Policies.} An agent policy $\agentpol^*$ is optimal if it guarantees the value for every state $s \in \states$, i.e., 
\[
    \sup_{\envpol} \mathbb{E}_{s}^{\agentpol^*, \envpol} ( \discountedpayoff(\omega) ) = \optval_\rmdp(s)\,.
\]
The notion of optimal adversary policy is defined analogously.

\paragraph{Robust MCs.} A Robust Markov Chain (RMC) is a special case of an RMDP that has one possible action for the agent; hence, we can drop the action set. An RMC is $\rmc = (\states, \cost, \succ, \uncert)$. The value of RMC is denoted by $\optval_\rmc(s)$. We drop the $\rmc$ label when it is clear from the context.

% \paragraph{Markov Decision Processes.} A Markov Decision Process (MDP) is a special case of an RMDP where for every state $s$ and action $a$, the uncertainty set $\uncert(s, a)$ is a singleton. An MDP is $M = (\states, \cost, \actions, \trans)$, where $\trans \colon \states \times \actions \to \Delta(\states)$ gives the fixed transition probabilities. The value of an MDP is denoted by $\val_\mdp(s)$. We drop the $\mdp$ label when it is clear.

\paragraph{Markov Chains.} A Markov Chain (MC) is a special case of an RMC where for every state $s$, $\uncert(s)$ is a singleton. A MC is $\mc = (\states, \cost, \trans)$, where $\trans \colon \states \to \Delta(\states)$ are the fixed transition probabilities. The value of an MC is denoted by $\optval_\mc(s)$. We drop the $\mc$ label when it is clear from the context.

\paragraph{Problem Statement.} We consider the following computational problems.

\begin{tcolorbox}
    \textbf{Problem $\discrmdp$.} Given an RMDP $\rmdp$ with a constant discount factor $\discount \in [0, 1)$, compute the value vector $\optval_\rmdp \in \R^{|\states|}$, optimal agent policy $\agentpolopt \colon \states \to \actions$, and optimal adversary policy $\envpolopt \colon \states \times \actions \to \Delta(\states)$.
\end{tcolorbox}

\begin{tcolorbox}
   \textbf{Problem $\discrmc$.} Given an RMC $\rmc$ with a constant discount factor $\discount \in [0, 1)$, compute the value vector $\optval_\rmc \in \R^{|\states|}$ and optimal adversary policy $\envpolopt \colon \states \to \Delta(\states)$.
\end{tcolorbox}

The decision version of the above problems is, given an initial state $s$ and a threshold $\lambda$, decide whether $\optval(s) \geq \lambda$.

\section{Overview of Results}

In this section, we give an overview of our results.
First, we recall some complexity classes on which our results rely and then state our main theoretical contributions.

\subsection{Complexity Definitions}

\paragraph{Strongly Polynomial Algorithms.} An algorithm is strongly polynomial if both the number of arithmetic operations it performs and the size of intermediate values are bounded by a polynomial in the input size, independent of the numerical values of the coefficients. Recall that an algorithm is said to run in polynomial time if its overall execution time is bounded by a polynomial in the input's bit-length. Note that strongly-polynomial time implies polynomial time, but not vice versa. Thus, strongly-polynomial time gives stronger efficiency guarantees.
%Consequently, the runtime of a standard polynomial-time algorithm can still depend on the bit-size of the input coefficients. A strongly polynomial algorithm differs in that its total number of arithmetic operations is completely independent of the numerical values of these coefficients, and any remaining dependence on the input coefficients is strictly limited to their inherent contribution to the encoding length. 

\paragraph{$\mathbf{p}$-ROOT-SUM Complexity.} The $p$-ROOT-SUM problem asks whether, given a sequence of positive integers $a_i$ (for $i \in [n]$) and an integer threshold $\Threshold$, the following inequality holds:
$$\sum_{i=1}^n a_i^{\frac{1}{p}} \geq \Threshold$$
For $p=2$, this corresponds to the classic Square-Root-Sum problem, which is a major and long-standing open problem in computational geometry since 1976~\cite{DBLP:conf/stoc/GareyGJ76}.  This problem is known to be in $\PSpace$, but neither polynomial-time algorithms are known nor even NP upper bounds are established~\cite{allender2009complexity}. We define the $p$-ROOT-SUM complexity class to contain all computational problems that are polynomial-time reducible to the $p$-ROOT-SUM problem.

% \paragraph{Existential Theory of the Reals ($\exists\Rb$).} A formula of the \emph{existential theory of the reals}, or ETR for short, is a formula of the form$$\exists x_1 \dots \exists x_k~\phi(x_1, \dots, x_k)$$where $\phi$ is a quantifier-free formula written with the symbols $0$, $1$, $=$, $\leq$, $<$, $+$, $-$, $\times$, $\wedge$, $\Leftrightarrow$, $\neg$, and parentheses, adhering to standard syntactic rules and semantics.

\paragraph{Existential theory of the reals} 
A sentence in the \emph{existential theory of the reals} is of the form $$\exists x_1 \dots \exists x_k~\phi(x_1, \dots, x_k)$$ where $\phi(x_1,\dots,x_k)$ is a \emph{quantifier-free} formula in the language of ordered fields over the reals $\Rb$. More concretely, atomic sub-formulas are $f(x_1,\dots,x_k)=0$ or $f(x_1,\dots,x_k)<0$, where $f \in \Qb[x_1,\dots,x_k]$ is a polynomial over $\Qb$, and arbitrary formulas are built by Boolean connectives, e.g., conjunction, disjunction, and negation.  We denote by $\exists\Rb$ the complexity class of problems that can be reduced in polynomial time to the problem of deciding the validity of an ETR sentence. This class naturally contains $\NP$ and is known to be contained within $\PSpace$ \cite{Can88}.

\subsection{Main Results}
In this section, we state our main results. 
The main algorithm to solve RMDPs and RMCs is the policy iteration algorithm (\Cref{algo:RMDPPI} and \Cref{algo:RMCPI}):
In RMDPs, this algorithm iterates over positional strategies of the agent,
and in RMCs, this algorithm iterates over positional strategies of the adversary.
The key goal is to obtain bounds on the number of iterations. 
Our results are as follows:

\begin{algorithm}[h]
\caption{Policy Iteration For RMDPs \texttt{RMDP-PI}}  \label{algo:RMDPPI}
\begin{algorithmic}[1]
    \State \textbf{Input:} RMDP $\rmdp = (\states,\cost,\actions,\succ,\uncert)$, discount factor $\discount$
    \State \textbf{Output:} The optimal policy $\agentpol^*$
    
    \State $\agentpol^0 \gets \text{arbitrary agent policy}$
    \State $t \gets 0$
    
    \Repeat
        \State $\envpol^{t} \gets \texttt{RMC-Oracle}(\rmdp^{\agentpol^t}, \discount)$ \Comment{Policy Evaluation: Invoking the RMC Oracle}
        
        \State $\val^t \gets (\I - \discount \trans^{\agentpol^t, \envpol^t})^{-1} \cost$
        
        \State $t \gets t + 1$

        \ForAll{$s \in \states$} \Comment{Policy Improvement}
            \State $\agentpol^{t}(s) \gets \argmin\limits_{a \in \actions} \max\limits_{\pvector \in \uncert(s,a)} \{ \cost(s) + \discount \pvector^\top \val^{t-1} \}$
        \EndFor
    \Until{$\agentpol^t = \agentpol^{t-1}$}
    
    \State \Return $\agentpol^t$
\end{algorithmic}
\end{algorithm}

\begin{algorithm}[ht] 
\caption{Policy Iteration For RMCs (\texttt{RMC-PI})} \label{algo:RMCPI}
\begin{algorithmic}[1]
    \State \textbf{Input:} RMC $\rmc = (\states,\cost,\succ,\uncert)$, discount factor $\discount$
    \State \textbf{Output:} The optimal adversary policy $\envpol^*$
    
    \State $\envpol^0 \gets \text{arbitrary adversary policy}$
    \State $t \gets 0$
    
    \Repeat
        \State $\val^t \gets (\identitymatrix - \discount \trans^{\envpol^t})^{-1} \cost$ \Comment{Policy Evaluation}
        
        \State $t \gets t + 1$
        
        \ForAll{$s \in \states$} \Comment{Policy Improvement}
            \State $\envpol^{t}(s) \gets \argmax\limits_{\pvector \in \uncert(s)} \{ \cost(s) + \discount \pvector^\top \val^{t - 1} \}$
        \EndFor
    \Until{$\envpol^t = \envpol^{t-1}$}
    
    \State \Return $\envpol^t$
\end{algorithmic}
\end{algorithm}

\begin{theorem} \label{theorem:oracle_access}
    The policy iteration algorithm for Problem~$\discrmdp$ requires $O(n m \frac{\log (1-\discountfactor)}{\log \discountfactor})$ (i.e., strongly polynomial for constant discount factor) iterations and solutions of Problem~$\discrmc$ for any compact uncertainty sets.
\end{theorem}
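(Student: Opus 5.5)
We follow the strategy-iteration analysis of Ye and of Hansen, Miltersen and Zwick for turn-based stochastic games, viewing the RMDP as a game. The agent controls the actions, and the adversary's best response is given exactly by the RMC oracle.

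For a positional agent policy $\agentpol$, let $\val^{\agentpol} := \optval_{\rmdp^{\agentpol}}$ be the value of the induced RMC; this is exactly $\val^t$ computed in \Cref{algo:RMDPPI}. Define the robust Bellman operators
\[
\Bellman(\val)(s) = \min_a \max_{\pvector\in\uncert(s,a)} \cost(s) + \discount\pvector^\top\val,
\qquad
\Bellman_a(\val)(s) = \max_{\pvector\in\uncert(s,a)} \cost(s)+\discount\pvector^\top \val .
\]
Both are monotone $\discount$-contractions in $\infnorm{\cdot}$, and compactness makes the maxima attained. Hence $\val^{\agentpol}$ is the unique fixed point of $\Bellman_{\agentpol}$ and $\optval$ is the unique fixed point of $\Bellman$.

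\textbf{Step 1: monotonicity and a contraction rate.} Since $\agentpol^{t+1}$ is greedy, $\Bellman_{\agentpol^{t+1}}(\val^t) = \Bellman(\val^t) \preccurlyeq \Bellman_{\agentpol^t}(\val^t) = \val^t$. Iterating the monotone contraction $\Bellman_{\agentpol^{t+1}}$ gives $\val^{t+1}\preccurlyeq \Bellman(\val^t)\preccurlyeq\val^t$. Comparing with policy iteration's domination of value iteration, $\val^t-\optval \preccurlyeq \Bellman^t(\val^0)-\optval$, so $\infnorm{\val^t-\optval}\le \discount^t\infnorm{\val^0-\optval}$. If $\agentpol^{t+1}=\agentpol^t$, then $\val^t$ is a fixed point of $\Bellman$, so the policy is optimal.

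\textbf{Step 2: relating values to one-step reduced costs.} For a state $s$ and action $a$, define the reduced cost
\[
r_{s,a} := \Bellman_a(\optval)(s) - \optval(s) \ge 0 .
\]
Let $\envpol$ be an optimal adversary response to $\agentpol$ in $\rmdp^{\agentpol}$, with transition matrix $P^{\agentpol,\envpol}$. By optimality of $\envpol$ for $\val^{\agentpol}$,
\[
\val^{\agentpol}=\cost+\discount P^{\agentpol,\envpol}\val^{\agentpol},
\qquad
\Bellman_{\agentpol}(\optval)\succcurlyeq \cost+\discount P^{\agentpol,\envpol}\optval .
\]
Subtracting the two gives a lower bound:
\[
\val^{\agentpol}-\optval \preccurlyeq (\I-\discount P^{\agentpol,\envpol})^{-1} r_{\agentpol}.
\]
For the upper bound, let $\envpol'$ be the maximiser at $\optval$ for the actions $\agentpol(s)$. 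Then $\Bellman_{\agentpol}(\optval) = \cost+\discount P^{\agentpol,\envpol'}\optval$, while $\val^{\agentpol} \succcurlyeq \cost+\discount P^{\agentpol,\envpol'}\val^{\agentpol}$. This yields
\[
\val^{\agentpol}-\optval \succcurlyeq (\I-\discount P^{\agentpol,\envpol'})^{-1} r_{\agentpol} \succcurlyeq r_{\agentpol}.
\]
Both inequalities use only the compact-set structure.

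Combining the two bounds with $\infnorm{(\I-\discount P)^{-1}}\le \frac{1}{1-\discount}$ and nonnegativity of $r$:
\[
\max_s r_{s,\agentpol^t(s)} \le \infnorm{\val^t-\optval} \le \discount^t\infnorm{\val^0-\optval} \le \frac{\discount^t}{1-\discount}\max_s r_{s,\agentpol^0(s)} .
\]
Fix $(s_0,a_0)$ attaining $\max_s r_{s,\agentpol^0(s)}$, where $a_0=\agentpol^0(s_0)$, and assume this value is positive (otherwise $\agentpol^0$ is already optimal). If $\discount^t/(1-\discount)<1$, then every later policy has $r_{s_0,\agentpol^t(s_0)} < r_{s_0,a_0}$. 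So after $L=\lceil \log(1-\discount)/\log\discount\rceil+1$ iterations, the pair $(s_0,a_0)$ is never used again.

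\textbf{Step 3: counting.} Restart the argument from iteration $L$: every block of $L$ iterations eliminates a new state-action pair, provided the current policy is suboptimal. There are at most $nm$ pairs, so the total number of iterations is $O(nm\,\frac{\log(1-\discount)}{\log\discount})$. Each iteration makes one oracle call. This is strongly polynomial for constant $\discount$.

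\textbf{Main obstacle.} The delicate part is Step~2 in the robust setting: the adversary's optimal response differs between $\val^{\agentpol}$ and $\optval$, so the two sides must use different transition matrices. I handle this as above by choosing $\envpol$ for the lower bound and $\envpol'$ for the upper bound, using the max in $\Bellman_a$ in the appropriate direction each time. A second point to check is that the permanently eliminated pair really is never reselected. This follows because the bound $r_{s_0,\agentpol^t(s_0)}\le\infnorm{\val^t-\optval}$ holds for every iteration $t\ge L$, using the monotone decrease from Step~1.
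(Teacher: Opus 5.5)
Your bound on the number of iterations is correct and follows the paper's proof almost step for step. Your reduced cost satisfies $r_{s,a}=\discount f(s,a)$ for the potential $f$ of \Cref{def:potential_function_rmdp}. Your two resolvent inequalities are the two halves of \Cref{lemma:lowerbound_potential_rmdp}. Your block-by-block elimination is \Cref{lemma:action_elimination} together with \Cref{thm:rmdp:iterations}. Two small remarks:
\begin{itemize}
\item Your labels are swapped. The inequality you call the lower bound, $\val^{\agentpol}-\optval\preccurlyeq(\I-\discount P^{\agentpol,\envpol})^{-1}r_{\agentpol}$, is the upper bound on the error, and vice versa.
\item Your derivation of $\val^{t+1}\preccurlyeq\Bellman(\val^t)$ by iterating the monotone map $\Bellman_{\agentpol^{t+1}}$ is cleaner than the paper's. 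The paper writes $\Bellman\val^t=\cost+\discount\trans^{t+1}\val^t$ as an equality, but only $\succcurlyeq$ holds, because $\trans^{t+1}$ is the adversary's response at $\val^{t+1}$, not at $\val^t$.
\end{itemize}

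The genuine gap is the oracle accounting in Step 3, which the theorem explicitly asserts. You count only the evaluation call on $\rmdp^{\agentpol^t}$. The improvement step, however, needs $\max_{\pvector\in\uncert(s,a)}\pvector^\top\val^t$ for every pair $(s,a)$, including $a\neq\agentpol^t(s)$. Solving $\rmdp^{\agentpol^t}$ returns maximisers only for the currently chosen actions. For an arbitrary compact $\uncert(s,a)$ there is no a priori strongly polynomial procedure for this inner maximisation; for $L_p$ balls with $1<p<\infty$ its value is already irrational in general. So unless this step is also routed through the $\discrmc$ oracle, it does not follow that the whole algorithm is strongly polynomial given $\discrmc$ solutions.

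The paper closes this with \Cref{lemma:line10_computation}. It builds an RMC with $O(nm)$ states: one transient state $z_{s,a}$ per pair, with cost $\cost(s)$ and uncertainty set $\uncert(s,a)$, leading into absorbing states $x_{s'}$ with cost $(1-\discount)\val^t(s')$. A single oracle call then yields every $\cost(s)+\discount\max_{\pvector\in\uncert(s,a)}\pvector^\top\val^t$, giving two calls per iteration. Your count of one call per iteration can be rescued by attaching the states $z_{s,a}$ directly to $\rmdp^{\agentpol^t}$: nothing transitions into them, so the original states keep value $\val^t$. But some construction of this kind must be stated. The iteration bound itself is unaffected.
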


Our following results for RMCs are related to $L_p$ uncertainty sets, which are defined as follows:
\begin{assumption}
    The uncertainty sets are defined as $L_p$ balls, for $p \in \Nb \cup \{\infty\}$, around a nominal transition function $\nominal_s \in \Delta(\succ(s))$. For RMCs, this means $\uncert(s) = \{\pvector \in \Delta(\succ(s)) ~|~ \lVert \pvector - \nominal_s \rVert_p \leq \radius(s) \}$ for a nominal transition $\nominal_s$ and radius $\radius(s)$. This definition extends naturally to RMDPs via $\nominal_{s,a}$ and $\radius(s, a)$ for each state-action pair. 
\end{assumption}

\begin{theorem}\label{thm:strongly_poly}
    The policy iteration algorithm for Problem~$\discrmc$ for $L_1$ and $L_{\infty}$ uncertainty sets requires $\mathcal{O}(n^3 \log n \log \left( \frac{1 - \discountfactor}{n} \right)/\log \discountfactor)$ (i.e., strongly polynomial for constant discount factors) iterations, where each iteration requires two calls to a solver of Problem $\discrmc$. 
\end{theorem}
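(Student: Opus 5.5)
We would follow the Ye / Hansen--Miltersen--Zwick template for strongly polynomial policy iteration, run from the adversary's side of an RMC. The adversary maximizes, and each \texttt{RMC-PI} step is a greedy improvement. Let $\valopt$ be the optimal value and $\envpolopt$ an optimal adversary policy.

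\textbf{Step 1: contraction of the value gap.} Standard monotonicity and contraction arguments for the Bellman operator $\Bellman$ give $\val^t \preccurlyeq \val^{t+1} \preccurlyeq \valopt$ and $\infnorm{\valopt - \val^{t+1}} \le \discount\,\infnorm{\valopt - \val^t}$. We then define the adversary-side reduced costs
\[
r^t(s) \defas \max_{\pvector\in\uncert(s)} \discount \pvector^\top \val^t - \discount (\hat{\pvector}^t_s)^\top \val^t \;\ge 0,
\]
where $\hat{\pvector}^t_s = \envpol^t(s)$. The usual identity
\[
\valopt - \val^t = (\I - \discount \trans^{\envpolopt})^{-1}\bigl(\text{gap of } \envpolopt \text{ w.r.t. } \val^t\bigr)
\]
bounds the gap in both directions: $\infnorm{\valopt - \val^t} \le \frac{n}{1-\discount}\max_s r^t(s)$ up to constants, and the gap is at least each $r^t(s)$. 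Hence after $k = O(\log(\frac{1-\discount}{n})/\log\discount)$ further iterations, the total slack has shrunk by a factor $1/n^{c}$ relative to $\max_s r^t(s)$.

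\textbf{Step 2: potential and elimination.} The difficulty is that the action set is a continuum, so "an action is eliminated" must be replaced by a combinatorial event. For $L_1$ and $L_\infty$ balls the maximizer $\pvector\in\uncert(s)$ for a given $\val$ is determined by the ordering of $\succ(s)$ by $\val$. In the $L_1$ case, it moves mass $\radius(s)/2$ from the lowest-valued successors to the highest-valued one. In the $L_\infty$ case, it saturates the top successors to $\nominal_s+\radius(s)$ and the bottom ones to $\nominal_s - \radius(s)$, with one fractional "pivot" successor. The policy is therefore a vertex of a polytope indexed by the sorted order together with the pivot.

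We introduce the potential suggested in the overview: the cumulative transition probability $F^t(s,j) \defas \sum_{s'\in \text{top-}j \text{ under } \val^*} \envpol^t(s)(s')$, measured with respect to the order induced by the optimal $\valopt$ and the optimal distribution $\envpolopt(s)$. The key claim is the following. If at iteration $t$ state $s$ attains the maximal slack, then after $k$ more iterations some pair $(s,j)$ has $F(s,j)$ permanently agreeing with $F^*(s,j)$, and stays so. This would follow because any deviation at that cumulative level costs at least $\frac{1}{n^{c}}$ times the old slack, which the contraction of Step 1 forbids. There are $O(n^2)$ pairs $(s,j)$, and the $\log n$ factor comes from the $n^{c}$ scaling. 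Together these give $O(n^2\cdot n\log n \cdot \log(\frac{1-\discount}{n})/\log\discount)$ iterations, which is the bound claimed in \Cref{thm:strongly_poly}. The extra factor $n$ would come from a pair being fixed only after one of up to $n$ distinct sorted orders stabilizes, so some slack must be left there.

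\textbf{Main obstacle and the two solver calls.} The main obstacle is proving the "permanently fixed" step for $L_1$. There the maximizer can switch discontinuously between successors with nearly equal values, so the pivot identity is not stable. This is why the cumulative-in-$\valopt$-order potential is needed rather than a per-successor one.

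We would first show a linear inequality: the slack of $\envpol$ at $s$ relative to $\valopt$ is at least $\sum_j (F^*(s,j)-F(s,j))\cdot(\valopt_{(j)}-\valopt_{(j+1)})$. This is an Abel summation, and it holds for any distribution. Combining it with the lower bound on the slack from Step 1 gives the permanence claim.

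The "two calls to a solver of $\discrmc$" we interpret as computing $\valopt$ and $\envpolopt$ in the analysis (or evaluating the current policy). We would state this accordingly, as the paper's theorem phrasing suggests.
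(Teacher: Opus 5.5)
There is a genuine gap at the step you yourself flag as the crux: the claim that, after $k$ more iterations, some pair $(s,j)$ has $F(s,j)=F^*(s,j)$ permanently. The sandwich argument yields something weaker.

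Your Abel-summation identity gives $\valopt_s-\val^{\envpol}_s\ge\discount f_{\envpol}(s,j)$, where $f_{\envpol}(s,j)=(F^*(s,j)-F^{\envpol}(s,j))(\valopt_{s_j}-\valopt_{s_{j+1}})\ge 0$. One also has $\infnorm{\valopt-\val^{\envpol}}\le\frac{\discount n}{1-\discount}\max_{s',j'}f_{\envpol}(s',j')$. Combined with the $\discount^t$ contraction, this shows only the following. If $(s,j)$ maximizes the potential at iteration $t$, then for all $l>t+L$ with $L=\lceil\log_{\discount}\frac{1-\discount}{2n}\rceil$, the gap $F^*(s,j)-F^{\envpol^l}(s,j)$ is at most half its value at $t$ (or at most an $n^{-c}$ fraction if you wait longer).

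It does not force the gap to vanish. A later ``deviation at that cumulative level'' costs only $\discount$ times the \emph{current} gap times $\valopt_{s_j}-\valopt_{s_{j+1}}$, and that can be tiny. In Ye's finite-action argument, reusing an eliminated action reincurs exactly the same reduced cost. Here the attainable gaps are combinations such as $\nominal_{s,a}-\nominal_{s,b}$, which can be arbitrarily small relative to one another. So nothing prevents a pair from being halved over and over. Treating each greedy output as an ``action'' only gives a bound linear in their number, which can be exponential.

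The missing ingredient is the discreteness-of-scales argument the paper uses. By Property~\ref{prop:policy_structure} ($L_1$) and Property~\ref{prop:linf_policy_structure} ($L_\infty$), every gap $F^*(s,i)-F^{\envpol}(s,i)$ at a policy visited by \texttt{RMC-PI} is a signed combination of $O(n)$ fixed reals with constant-bounded integer coefficients. Those reals are the $\nominal_{s,j}$ and $1$, together with $\radius_s/2$ for $L_1$, or the multiples $k\radius_s$ with $k\le n$ for $L_\infty$. The multiples absorb the order-$n$ coefficient on $\radius_s$ produced by the incomplete coordinate.

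Lemma~\ref{lemma:combinatorial_lemma}, proved via Siegel's lemma, shows that such a set has only $O(n\log n)$ distinct values of $\lfloor\log_2|y|\rfloor$. Each halving strictly lowers this exponent. Hence each of the at most $n^2$ pairs can be the potential maximizer in only $O(n\log n)$ windows of length $L$, which gives $O(n^3\log n\cdot L)$. That is the true source of the $n\log n$ factor.

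Your attributions of the factors are not derived: you assign the $\log n$ to ``the $n^c$ scaling'' and the extra $n$ to ``up to $n$ sorted orders''. If permanence held, your argument would give $O(n^2k)$, with the $\log n$ sitting inside $k$ rather than appearing as a multiplicative factor. Also, your Step~1 reduced costs $r^t$ are measured against $\val^t$ and change every iteration, so they cannot drive an elimination argument. Only a potential relative to the fixed $\valopt,\envpolopt$, as in your Step~2, can.
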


\begin{theorem}
\label{thm:lp-comp-class}
    For any integer $p > 1$, the following statements hold:
\begin{enumerate}
    \item \label{item:p-root-sum-hardness} The decision version of Problem~$\discrmc$ with $L_p$ uncertainty sets is $p$-ROOT-SUM-hard;
    \item \label{item:etr-easyness} The decision version of Problem~$\discrmc$ with $L_p$ uncertainty sets is in $\exists\R$.
\end{enumerate}
\end{theorem}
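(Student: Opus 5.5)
For \cref{item:p-root-sum-hardness}, I would reduce from $p$-ROOT-SUM. The plan is to build an RMC in which each value term depends on one $a_i^{1/p}$, and then sum those terms.

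The basic gadget is a state $s_i$ with two successors: an absorbing sink $g$ with value $v(g)$ and an absorbing sink $b$ with value $v(b)$. Since both successors are absorbing, their values are fixed rationals that I set through their costs. At $s_i$ the adversary maximizes a linear function $\pvector^\top \val$ over an $L_p$ ball intersected with the simplex.
\begin{itemize}
\item With only two successors, the simplex constraint forces $\pvector = \nominal + (x,-x)$.
\item The ball constraint then reads $2|x|^p \le \radius^p$, so the optimum is $x = \radius/2^{1/p}$.
\end{itemize}
This gives a $p$-th root, but only of a constant times $\radius$, which is not yet what I want.

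To extract $a_i^{1/p}$ I would use more successors and a suitably chosen linear objective. Take $k$ successors with values $w_1,\dots,w_k$ and the nominal distribution in the interior. The adversary maximizes $w^\top d$ subject to $\sum_j d_j = 0$ and $\|d\|_p \le \radius$. By Hölder duality the optimum equals $\radius \cdot \min_\lambda \|w - \lambda \mathbf{1}\|_q$ with $q = p/(p-1)$, which is not a clean $p$-th root.

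A simpler route is to let the values enter the radius direction. Use successors $g, b$ together with a third "slack" successor whose value equals that of the nominal mix, so the adversary gains only along one coordinate. This does not quite work either, so I would instead take $p$-th roots of the radius $\radius_i = a_i$ itself. Scale $\radius_i = a_i/N^p$ with $N$ large, and use the two-successor gadget. The value gain at $s_i$ is then
\[
\discount\,(v(g)-v(b))\,a_i^{1/p}/(2^{1/p}N).
\]
To stay within the simplex, choose $N = \lceil \max_i a_i \rceil$ and put the nominal at $(1/2,1/2)$. Since $a_i^{1/p} \le N$, the shift is at most $1/2$, so the perturbed distribution remains in the simplex.

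Rationality of the radius is an issue. Radii are given in the input, so $\radius_i = a_i^{1/p}/N$ is not rational. I would avoid this by letting the radius be $1$ and placing the nominal so that the constraint is $\|d\|_p^p = a_i/N^p$; since $p$ is an integer, the polynomial ball condition allows rational data. Alternatively, if the paper encodes $\radius(s)^p$, this step is immediate.

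A root state $r$ with uniform nominal over $s_1,\dots,s_n$ and radius $0$ then averages the gadget values. Its value is an affine function of $\sum_i a_i^{1/p}$ with rational, polynomial-size coefficients. Hence $\optval(r) \ge \lambda$ if and only if $\sum_i a_i^{1/p} \ge \Threshold$, for a threshold $\lambda$ computed from $\Threshold$. The main obstacle is exactly this encoding issue: making the input rational while the optimum is an exact $p$-th root. I would handle it by making $a_i$ appear as the $p$-th power of the radius. If the input format fixes the radius, I would instead use three successors arranged so that the KKT equations give $x^{p-1} \propto a_i$, and hence an $(p-1)$-th root, adjusting the construction accordingly.

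For \cref{item:etr-easyness}, I would write an ETR sentence with existential variables $\val(s)$ and $\pvector_s \in \Rb^{\succ(s)}$ for every $s$. The formula asserts three things:
\begin{itemize}
\item the Bellman equations $\val(s) = \cost(s) + \discount \pvector_s^\top \val$;
\item feasibility of each $\pvector_s$, namely $\pvector_s \ge 0$, $\sum_t \pvector_s(t) = 1$, and $\sum_t (\pvector_s(t) - \nominal_s(t))^p \le \radius(s)^p$, where absolute values are encoded by auxiliary variables $y \ge \pm(\cdot)$ so that the constraint is polynomial for integer $p$;
\item optimality of each $\pvector_s$ via the KKT conditions, with multipliers $\mu_s \ge 0$, $\nu_s$, $\eta_{s,t} \ge 0$, stationarity $\discount \val|_{\succ(s)} = \mu_s \nabla \|\cdot\|_p^p + \nu_s \mathbf{1} - \eta_s$, and complementary slackness.
\end{itemize}

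The problem is convex and Slater's condition holds when $\radius > 0$; the case $\radius = 0$ is handled by forcing the nominal. Hence KKT is necessary and sufficient for the inner maximization. With the adversary acting optimally in every state and $\val$ a fixed point of the robust Bellman operator, $\val = \optval$ by contraction. It remains to add the conjunct $\val(s_0) \ge \lambda$, and the resulting sentence has polynomial size.
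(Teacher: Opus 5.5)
Your plan for Item~2 matches the paper's proof: Bellman equations, a polynomial description of the $L_p$ ball via auxiliary variables $y$, KKT under Slater's condition when $\delta(s)>0$, forcing the nominal when $\delta(s)=0$, and uniqueness of the robust Bellman fixed point. The gap is in Item~1. Your main gadget needs the radius to be $\delta_i=a_i^{1/p}/N$. But $\delta(s)$ is a rational number given in the input, so the $p$-th root has to be produced by the adversary's optimization; it cannot be planted in the data. Neither of your repairs works. The uncertainty set is defined by $\|\pvector-\nominal_s\|_p\le\delta(s)$, so switching to an encoding of $\delta(s)^p$ changes the problem. Moving the nominal while keeping radius $1$ does not change the ball at all: in your two-successor gadget the optimal shift is $\min\{2^{-1/p},\ \nominal_{s_i}(b)\}$. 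That is either the constant $2^{-1/p}$ or a rational number read off the nominal, never $a_i^{1/p}$. The three-successor fallback is not worked out, and its premise is wrong. The KKT multiplier is fixed by the normalisation $\|d\|_p=\delta$, so the optimal value is $\delta\min_\lambda\|w-\lambda\mathbf{1}\|_{p/(p-1)}$, not a $(p-1)$-th root of $a_i$. For $p=2$ a $(p-1)$-th root is the identity, so this route could never give Square-Root-Sum hardness.

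The Hölder expression you set aside is exactly the paper's tool, and your plan is missing two ideas. First, give the gadget $2M$ absorbing successors with symmetric values $\pm u_k^{p-1}$, a uniform nominal, and radius $\delta=1/(2M)$. Here $x_i=\sum_k u_k^p$ comes from the greedy $p$-th power decomposition, which uses $O(\log\log a_i)$ terms. Symmetry makes the optimal $\lambda$ equal to $0$, and the explicit Hölder maximiser $d_j\propto\operatorname{sign}(w_j)|w_j|^{1/(p-1)}$ stays in the simplex. The gadget value is then $\gamma\delta\,(2x_i)^{(p-1)/p}$, with all data rational. Second, this computes $\frac{p}{p-1}$-roots, not $p$-th roots, so you also need a reduction from $p$-ROOT-SUM to $\frac{p}{p-1}$-ROOT-SUM. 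The paper multiplies the inequality by $\prod_j a_j^{k}$ with $(p-1)\mid kp+1$. Each term then becomes $z_i\,y_i^{(p-1)/p}$ with integers $y_i=a_i^{(kp+1)/(p-1)}$ and $z_i=\prod_{j\neq i}a_j^{k}$. It then writes $z_i$ as a short sum of $(p-1)$-th powers $c_{i,j}^{p-1}$, so that $z_i y_i^{(p-1)/p}=\sum_j (c_{i,j}^{p}y_i)^{(p-1)/p}$. Without this second step, even a corrected gadget proves hardness only for a $\frac{p}{p-1}$-root sum, not for $p$-ROOT-SUM as Item~1 states.
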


\paragraph{Significance.} We now discuss the significance of our results, which present a comprehensive picture of RMDPs and RMCs with $L_p$ uncertainty sets and a constant discount factor. Our first result shows that RMDPs can be solved in strongly polynomial time given oracle access to solutions of RMCs. This result is established via an analysis of the policy iteration algorithm, which we show requires only a polynomial number of iterations and RMC solutions for any compact uncertainty set, and thus the problem reduces to RMCs.
The second result presents strongly polynomial time bounds for RMCs with $L_1$ and $L_\infty$ uncertainty sets.
Although the strongly-polynomial-time bounds for the $L_\infty$ case were previously established by~\cite{asadi2026strongly}, our analysis yields a tighter bound by a factor of $n$; and the result for $L_1$ uncertainty sets is novel.
Finally, the first item of the third result shows computational hardness for $L_p$ uncertainty sets with $1<p<\infty$ (i.e., other than $L_1$ and $L_\infty$).
The last item of the third result presents upper bounds for RMCs with $L_p$ uncertainty sets.
Together, these results complete the study of efficient algorithms for RMDPs and RMCs with a constant discount factor with $L_p$ uncertainty sets.
In the following sections, we provide the main proof ideas of all these results.

\section{Overview of \texorpdfstring{\cref{theorem:oracle_access}}{Theorem 1}: Strongly Polynomial PI for RMDPs via RMC Oracles}
In this section, we provide a brief overview of the proof of \cref{theorem:oracle_access}. The theorem states that for any compact uncertainty set and a constant discount factor, the policy iteration algorithm for Problem~$\discrmdp$ (Algorithm~\ref{algo:RMDPPI}) terminates after $\Ocomplexity\left(nm \frac{\log(1-\discountfactor)}{\log \discountfactor}\right)$ iterations given oracle access to the solution of Problem~$\discrmc$. The full proof is presented in Appendix~\ref{app:thm1}. The key insight is that, after fixing an agent policy $\agentpol$, we obtain an RMC, and consequently, the policy-evaluation step can be handled by an RMC oracle call. Therefore, our main task is to show that the number of iterations is polynomial. We organize the proof into four main steps.

\paragraph{Robust Bellman Optimality Operator.} We first define the robust Bellman optimality operator as
\[
    (\Bellman \valvector)_s \defas \cost(s) + \discountfactor \min_{a \in \actions} \max_{\pvector \in \uncert(s,a)} \pvector^\top \valvector .
\]
Since $\actions$ is finite and each uncertainty set $\uncert(s,a)$ is compact, the minimum and maximum are attained. The operator $\Bellman$ is a $\discountfactor$-contraction in the infinity norm and therefore has a unique fixed point $\valopt$, which is the optimal value vector. Moreover, for every fixed agent policy $\agentpol$, the induced value vector $\valvector^\agentpol$ satisfies $\Bellman \valvector^\agentpol \preccurlyeq \valvector^\agentpol$, and consequently $\valopt \preccurlyeq \valvector^\agentpol$. Using this contraction property, we show that \texttt{RMDP-PI} generates a monotonically non-increasing sequence of value vectors, i.e., $\valvector^{t+1} \preccurlyeq \valvector^t$. Moreover, the value error decreases exponentially, i.e., 
\[
    \infnorm{\valvector^t - \valopt}
    \leq
    \discountfactor^t \infnorm{\valvector^0 - \valopt},
\]
which gives a convergence rate that is later used to bound the number of iterations.

\paragraph{Potential Function.} We next define a potential function that quantifies the cost of choosing an action $a$ instead of an optimal action at state $s$, where both choices are evaluated against the optimal value vector. Formally,
\[
    f(s,a)
    \defas
    \max_{\pvector \in \uncert(s,a)} \pvector^\top \valopt
    -
    \max_{\pvector \in \uncert(s,\agentpolopt(s))} \pvector^\top \valopt .
\]
Intuitively, $f(s,a)$ measures the continuation cost incurred by choosing $a$ at state $s$ instead of $\agentpolopt(s)$.

For any agent policy $\agentpol$, the potential gives two complementary bounds. First, if $\agentpol(s)=a$, then the value error at state $s$ satisfies
\[
    \valvector^\agentpol_s - \valopt_s \geq \discountfactor f(s,a).
\]
Second, if $\hat{s} \in \argmax_{s \in \states} f(s,\agentpol(s))$, then the value error is bounded by
\[
    \infnorm{\valvector^\agentpol - \valopt}
    \leq
    \frac{\discountfactor}{1-\discountfactor} f(\hat{s},\agentpol(\hat{s})) .
\]
These inequalities establish lower and upper bounds on the difference between an arbitrary policy's value vector and the optimal value vector, and are later used to prove that Algorithm~\ref{algo:RMDPPI} terminates in a polynomial number of iterations.

\paragraph{Elimination of Actions with Maximum Potential.} We combine the lower and upper bounds above with the exponential convergence rate. Let $\agentpol^l$ be the policy at iteration $l$, and let $\hat{s} \in \argmax_{s \in \states} f(s,\agentpol^l(s))$. If a later policy $\agentpol^k$ selects the same action at $\hat{s}$ after more than $L \defas \left\lceil \log_{\discountfactor}(1-\discountfactor) \right\rceil$ iterations, then the lower bound on the potential function contradicts the exponential decay of the value error. Hence, the state-action pair $(\hat{s},\agentpol^l(\hat{s}))$ is permanently removed after $L$ iterations.
There are at most $nm$ state-action pairs. Since at least one state-action pair is eliminated every $L$ iterations, the total number of policy-iteration steps is bounded by $\Ocomplexity(nmL) = \Ocomplexity\!\left(nm \frac{\log(1-\discountfactor)}{\log \discountfactor}\right)$. 

\paragraph{Reduction to One RMC Oracle Call.} Finally, we show that the policy-improvement step can be implemented using an RMC oracle. For every state-action pair $(s,a)$, the algorithm needs to compute
\[
    \cost(s) + \discountfactor \max_{\pvector \in \uncert(s,a)} \pvector^\top \valvector^{t-1} .
\]
Instead of solving these maximization problems separately, we construct a batch RMC $\rmc_{\mathrm{batch}}$. This RMC has one transient state $z_{s,a}$ for every pair $(s,a) \in \states \times \actions$ and one absorbing state $x_s$ for every original state $s \in \states$. The absorbing state $x_s$ is assigned value $\valvector^{t-1}_s$, and the uncertainty set at $z_{s,a}$ is exactly $\uncert(s,a)$. Therefore, we have $\valvector^*_{\mathrm{batch}}(z_{s,a}) = \cost(s) + \discountfactor \max_{\pvector \in \uncert(s,a)} \pvector^\top \valvector^{t-1}$. Thus, all action-improvement quantities can be computed with one RMC oracle call on an RMC with $\Ocomplexity(nm)$ states. Although the constructed RMC has $\Ocomplexity(nm)$ states, its description has bit-length polynomial in the input RMDP.

\section{Overview of \texorpdfstring{\cref{thm:strongly_poly}}{Theorem 2}: Complexity of RMC with \texorpdfstring{$L_1, L_\infty$}{L1, L-infinity} Uncertainty Sets}

In this section, we provide a brief overview of the proof of \cref{thm:strongly_poly}. The theorem states that for $\lone$ and $\linf$ uncertainty sets and a constant discount factor, the policy iteration algorithm for Problem~$\discrmc$ (Algorithm~\ref{algo:RMCPI}) terminates in a polynomial number of iterations. Since each iteration runs in strongly polynomial time, this yields a strongly polynomial algorithm under $\lone$ and $\linf$ uncertainty sets with a constant discount factor. Combined with \cref{theorem:oracle_access}, this completes the proof that RMDPs with $\lone$ or $\linf$ uncertainty sets and a constant discount factor are solvable in strongly polynomial time. The full proof is presented in Appendix~\ref{app:thm2}.

The challenge of analyzing policy iteration for RMCs, in contrast to standard MDPs, is that the adversary's action space is the continuous uncertainty set $\uncert(s)$. Although the worst-case distribution lies on the boundary of the $\lone$ or $\linf$ ball, the boundary can have exponentially many vertices (e.g. $L_1$ ball of dimension $n$ has up to $2^n$ vertices). Thus, a naive enumeration over boundary points fails. Our argument bypasses this by exploiting the algebraic structure of the distributions returned by the policy improvement step and combining it with a number-theoretic combinatorial bound. We organize the proof into four main steps.

\paragraph{Policy Improvement and Structural Property.} The policy improvement step of Algorithm~\ref{algo:RMCPI} requires solving $\max_{\pvector \in \uncert(s)} \pvector^\top \valvector$. For $\lone$, the homotopy method of~\cite{ho2021partial} sorts the successor states by descending value and greedily shifts the available budget from the lowest-value tail to the highest-value head. For $\linf$, the algorithm by~\cite{behzadian2021fast} performs an analogous transfer subject to the entry-wise constraint $|\pvector_{s'} - \nominal_{s,s'}| \leq \radius_s$. Both procedures run in linear time after sorting the values, and expose a clean structural property: every coordinate of the resulting distribution is either equal to the nominal $\nominal_{s,s'}$, shifted from it by a fixed value ($\pm\radius_s$ for $\linf$ and $\pm\radius_s/2$ for $\lone$), or saturated at $0$ or $1$, with at most one ``incomplete'' coordinate. This property is used in the convergence analysis.

\paragraph{Potential Function and Sandwich Bounds.} Fix a state $s$ and order its successors $s_1, \ldots, s_{|\succ(s)|}$ in descending order of $\valopt$. For an adversary policy $\envpol$, define the cumulative transition $F^\envpol(s,i) \defas \sum_{j=1}^{i} \pvector^\envpol_{s,s_j}$. A first observation is that $F^*(s,i) \geq F^\envpol(s,i)$ for every feasible $\envpol$ and every pair of state and index $(s,i)$. We then define the potential function $f_\envpol(s,i)$ which is guaranteed to be non-negative:
\[
    f_\envpol(s,i) \;\defas\; \bigl(F^*(s,i) - F^\envpol(s,i)\bigr)\bigl(\valopt_{s_i} - \valopt_{s_{i+1}}\bigr).
\]
Using the Bellman equations and summation by parts, we obtain a \emph{lower bound} on the per-state value gap, $\valopt_s - \valvector^\envpol_s \geq \discountfactor f_\envpol(s,i)$. A complementary \emph{upper bound} is shown to bound the global value error by the worst-case potential, namely
\[
    \infnorm{\valopt - \valvector^\envpol} \;\leq\; \frac{\discountfactor n}{1-\discountfactor}\, \hat{f}_\envpol, \qquad \text{where } \hat{f}_\envpol \defas \max_{s \in \states,\, i} f_\envpol(s,i).
\]

These inequalities sandwich the global value error by a single scalar quantity associated with the worst state-index pair.

\paragraph{Halving of the Cumulative Gap.} Combining the sandwich bounds with the standard exponential convergence $\infnorm{\valopt - \valvector^t} \leq \discountfactor^{t} \infnorm{\valopt - \valvector^0}$ yields the key halving lemma: if $(s,i)$ maximizes the potential at iteration $t$, then within $L \defas \lceil \log_{\discountfactor}\!\bigl(\tfrac{1-\discountfactor}{2n}\bigr) \rceil$ subsequent iterations the gap $F^*(s,i) - F^\envpol(s,i)$ must shrink by at least a factor of two; otherwise, the lower bound at iteration $l$ and the upper bound at iteration $t$ would contradict the exponential decay rate of policy iteration.

\paragraph{Combinatorial Bound on the Number of Halvings.} The structural property of policy iteration (The first step) implies that for every state $s$ and every policy $\envpol$ visited by the algorithm, the cumulative gap $F^*(s,i) - F^\envpol(s,i)$ is a signed sum of at most $n + \Ocomplexity(1)$ numbers (the nominal probabilities $\nominal_{s,j}$ together with $\radius_s$ and $1$), with coefficients in $\{-1, 0, +1\}$. A combinatorial lemma~\cite{asadi2026strongly} shows that any such set of signed sums takes $\Ocomplexity(n \log n)$ distinct binary scales $\lfloor \log_2 |y| \rfloor$. Since every halving drops the gap to a smaller binary scale, each of the at most $n^2$ state-index pairs $(s,i)$ can have at most $\Ocomplexity(n \log n)$ halvings. Combined with the $L$-iteration cost per halving from the above step, this bounds the total number of policy iterations by $\Ocomplexity(n^3 \log n \cdot L)$, which gives the polynomial bound of \cref{thm:strongly_poly}. The same argument improves the previously known bound of~\cite{asadi2026strongly} for the $\linf$ case by a factor of $|\States|$, and gives the first polynomial bound in the $\lone$ case.

\section{Overview of \texorpdfstring{\cref{thm:lp-comp-class}}{Theorem 3}: Complexity of RMC with \texorpdfstring{$L_p$}{Lp} Uncertainty Sets}

In this section, we provide a brief overview of the proof for \Cref{thm:lp-comp-class}. This theorem states that for any integer $p > 1$, solving the decision version of the problem $\discrmc$ with $L_p$ uncertainty sets is $p$-ROOT-SUM-hard, and the problem belongs to $\exists\mathbb{R}$ complexity class. 

We break down the proof into three steps: (1)~Establishing the equivalence between the complexity classes $p$-ROOT-SUM and $\frac{p}{p-1}$-ROOT-SUM, (2)~Proving the problem's hardness via a reduction from the $\frac{p}{p-1}$-ROOT-SUM problem to the decision problem, and (3)~Showing the problem belongs to the $\exists \mathbb{R}$ by encoding it as an $\etr$ formula. The full proof of \cref{thm:lp-comp-class} is detailed in Appendix~\ref{app:thm3}.

\paragraph{1. Equivalence of ROOT-SUM Problems.} Initially, we simplify the underlying complexity classes. We prove that the $p$-ROOT-SUM problem is polynomial-time equivalent to the $\frac{p}{p-1}$-ROOT-SUM problem. 
We achieve this through a sequence of reductions. First, we show that the $p$-ROOT-SUM problem is reducible to the $kp$-ROOT-SUM problem for any constant integer $k > 1$. Next, to handle fractional roots, we introduce a ``Greedy $p$-th Power Decomposition'' algorithm, which breaks down any positive integer $n$ into a sum of $p$-th powers using $\mathcal{O}(\log \log n)$ terms in $\mathcal{O}(\log n)$ time.
Using this decomposition, we prove that the $p$-ROOT-SUM problem can be reduced to the $\frac{p}{q}$-ROOT-SUM problem for any integer $q$ where $\gcd(p,q)=1$. Combining these steps shows that the denominator in the fractional root does not change the complexity class, establishing the polynomial-time equivalence between the $p$-ROOT-SUM and $\frac{p}{p-1}$-ROOT-SUM problems.

\paragraph{2. Proving Hardness via Reduction}
We prove that the decision version of the problem $\discrmc$ is at least as hard as solving the $\frac{p}{p-1}$-ROOT-SUM problem. We do this by constructing a three-layered RMC that forces the RMC's Bellman equations to mimic the ROOT-SUM inequality.
In an RMC, the adversary chooses transition probabilities from an $L_p$ uncertainty set to maximize the expected cost. Formally, this corresponds to finding the maximum inner product between the transition probability vector and the value vector of successor states. According to Hölder's inequality \cite{folland1999real}, maximizing an inner product under an $L_p$ constraint evaluates the dual norm, which is the $L_{\frac{p}{p-1}}$ norm. This dual norm computes the fractional powers for the $\frac{p}{p-1}$-ROOT-SUM problem. By this property, we construct the three-layered RMC as follows:
\begin{itemize}
    \item \textit{Absorbing States:} At the base layer, we create absorbing states with fixed costs, which are equal to the decomposition of the input integers $a_i$ by the ``Greedy $p$-th Power Decomposition'' algorithm. 
    \item \textit{Transient States:} The middle layer has transient states for each $a_i$, which are connected to corresponding absorbing states. Because the adversary picks worst-case transitions from an $L_p$ ball, the value of each transient state is proportional to the $L_{\frac{p}{p-1}}$ norm of the corresponding absorbing state values, which is proportional to $a_i^{\frac{p-1}{p}}$.
    \item \textit{Initial State:} The top layer is a single initial state that transitions uniformly to all transient states. Its value is the average of the transient states' values, which is $\sum_{i=1}^n a_i^{\frac{p-1}{p}}$.
\end{itemize}

By structuring the RMC this way, the adversary's optimal choices compute the fractional roots for us. Finally, deciding whether the value of the initial state meets the threshold $\lambda$ is equal to deciding if the sum meets the threshold $\Threshold$. Thus, the decision problem is $p$-ROOT-SUM-hard.

\paragraph{3. ETR Upper Bound}
Finally, we prove that the decision version of problem $\discrmc$ is in $\exists\mathbb{R}$. We show this by modeling the main problem (computing exact values and optimal strategy) as a convex optimization problem. Because this optimization problem satisfies Slater's condition, we characterize its optimal solution using the Karush-Kuhn-Tucker (KKT) conditions \cite{boyd2004convex}.
Because $p$ is a constant integer, the derivative of these constraints produces integer powers. This ensures that the optimization problem can be expressed by polynomial equations and inequalities.
To complete the proof, we combine these polynomial KKT conditions with the Bellman equations to create a system of polynomial equations and inequalities. By definition, finding a valid solution to a system of polynomial equations and inequalities over the real numbers is in $\exists\mathbb{R}$. 

\section{Experiments}

\label{sec:experiments}

In this section, we empirically evaluate \texttt{RMDP-PI} (\Cref{algo:RMDPPI}) across five benchmarks under $L_1$ and $L_\infty$ uncertainty sets, motivated by the strongly polynomial theoretical bounds we established for these specific cases. Our primary goal is to observe how many iterations \Cref{algo:RMDPPI} takes to converge on these benchmarks. Additionally, we analyze the total number of iterations \texttt{RMC-PI} (\Cref{algo:RMCPI}) performed over each iteration of \Cref{algo:RMDPPI}.

\paragraph{Implementation Details.}
All experiments are implemented in Python 3.11.14 using exact calculations. Specifically, every transition probability, reward, discount factor, and uncertainty radius is stored and processed using Python's built-in \texttt{fractions.Fraction} class. The linear system that arises during policy evaluation is solved exactly via a Bareiss fraction-free Gaussian elimination procedure implemented directly on top of arbitrary-precision rationals from the \texttt{gmpy2} library. The inner $L_1$ and $L_\infty$ oracles (\Cref{alg:lone_max} and \Cref{alg:linf_max}) are procedures that sort values and shift probability masses, involving only rational subtractions and comparisons. All experiments were executed on a single machine with an Intel Core Ultra 5 225U processor and 16GB of RAM.

\begin{figure}[h]
  \centering
  \includegraphics[width=\linewidth]{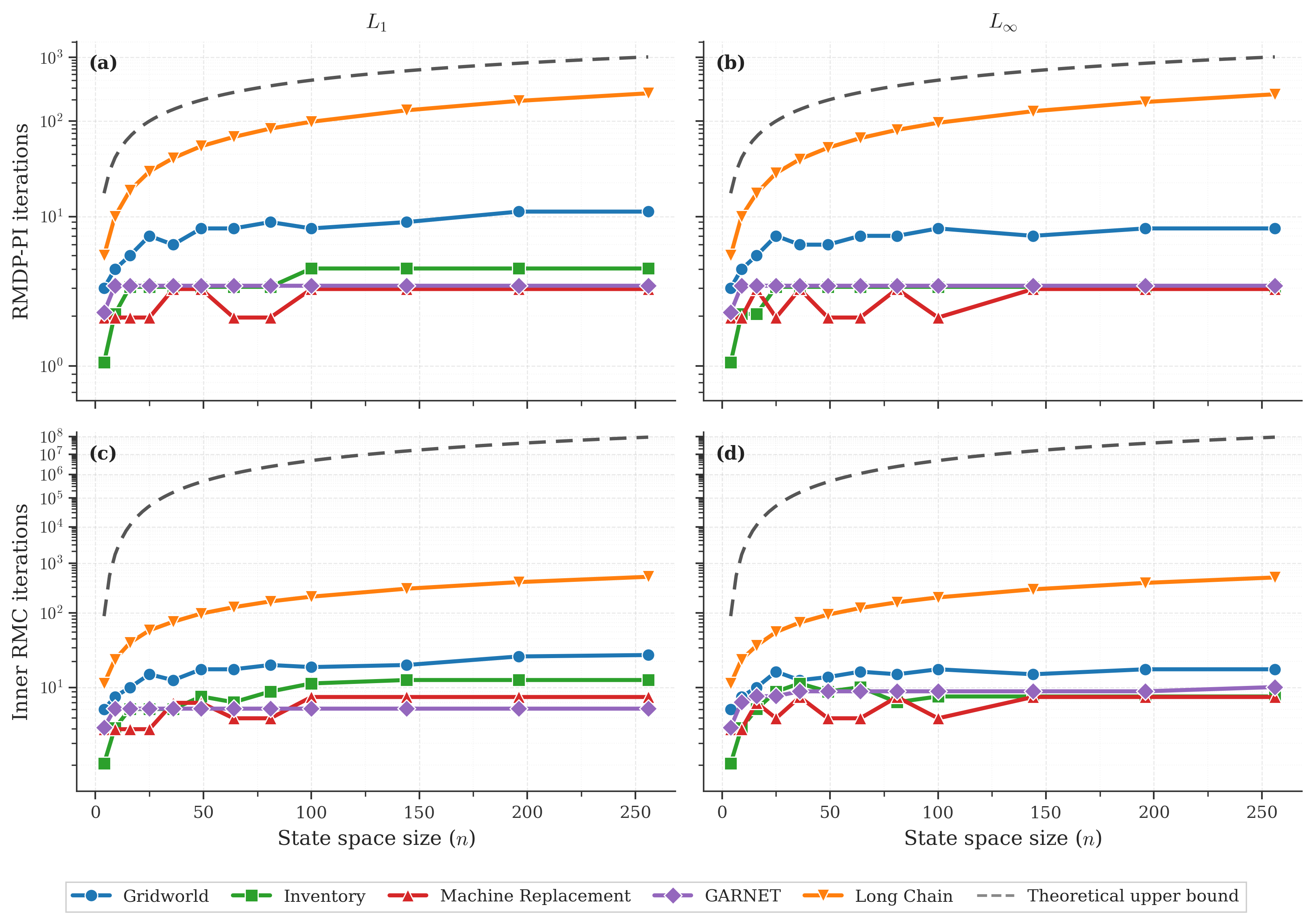}
  \caption{%
     Outer \texttt{RMDP-PI} (a,~b) and total inner \texttt{RMC-PI} (c,~d)
    iterations versus state-space size~$n$ on five benchmarks, under $L_1$
    (left) and $L_\infty$ (right) uncertainty sets, with $\gamma=0.5$,
    $\delta=0.05$. Dashed line: theoretical upper bound.%
  }
  \label{fig:main_results_g05}
\end{figure}

\paragraph{Benchmarks.}
We evaluate \texttt{RMDP-PI} (\Cref{algo:RMDPPI}) on five environments, scaling the state-space size up to $n = 256$. Four are standard: \textbf{Gridworld}~\cite{andrew2018reinforcement}, a stochastic grid-navigation task with an absorbing goal and a trap; \textbf{Inventory Management}~\cite{puterman94}, where the agent chooses order quantities against stochastic demand; \textbf{Machine Replacement}~\cite{wiesemann2013}, in which a degrading machine can be operated, repaired, or replaced; and \textbf{GARNET}~\cite{archibald1995generation}, an unstructured random-MDP baseline. The fifth, \textbf{Long Chain}, is a construction we introduce that forces $\Theta(n)$ outer policy-iteration steps. The complete dynamics for each environment are given in the Appendix~\ref{appendix:experiments}.

\paragraph{Results.}
\Cref{fig:main_results_g05} reports the outer \texttt{RMDP-PI} and total inner \texttt{RMC-PI} iteration counts at $\gamma=0.5$, $\delta=0.05$, under both $L_1$ and $L_\infty$ uncertainty sets; similar plots for different values of $\gamma,\delta$ are presented in Appendix~\ref{appendix:experiments} (\Cref{fig:main_results_g9,fig:main_results_g99,fig:main_results_g995}).

Across both norms, iteration counts on the four standard benchmarks (Gridworld, Inventory, Machine Replacement, GARNET) grow very slowly with $n$ and stay in the low double digits even at $n=256$, while Long Chain shows the $\Theta(n)$ growth it was designed to produce. All curves sit several orders of magnitude below the worst-case upper bound (dashed grey), and the gap only widens as $\gamma \to 1$.

% \begin{figure}[t]
%   \centering
%   \begin{subfigure}[t]{0.95\linewidth}
%     \includegraphics[width=\linewidth]{figures/g5d05.png}
%     \caption{$\gamma=0.5,\ \delta=0.05$}
%     \label{fig:main_results_a}
%   \end{subfigure}

%   \vspace{0.4em}

%   \begin{subfigure}[t]{0.95\linewidth}
%     \includegraphics[width=\linewidth]{figures/g9d05.png}
%     \caption{$\gamma=0.9,\ \delta=0.05$}
%     \label{fig:main_results_b}
%   \end{subfigure}

%   \caption{Outer \texttt{RMDP-PI} iterations (top row of each sub-figure) and total inner \texttt{RMC-PI} iterations (bottom row) versus state-space size $n$, on the four benchmarks under $L_1$ (left column) and $L_\infty$ (right column).}
%   \label{fig:main_results_low}
% \end{figure}

% \begin{figure}[t]
%   \centering
%   \begin{subfigure}[t]{0.95\linewidth}
%     \includegraphics[width=\linewidth]{figures/g99d01.png}
%     \caption{$\gamma=0.99,\ \delta=0.01$}
%     \label{fig:main_results_c}
%   \end{subfigure}

%   \vspace{0.4em}

%   \begin{subfigure}[t]{0.95\linewidth}
%     \includegraphics[width=\linewidth]{figures/g995d05.png}
%     \caption{$\gamma=0.995,\ \delta=0.05$}
%     \label{fig:main_results_d}
%   \end{subfigure}

%   \caption{Outer \texttt{RMDP-PI} iterations (top row of each sub-figure) and total inner \texttt{RMC-PI} iterations (bottom row) versus state-space size $n$, on the four benchmarks under $L_1$ (left column) and $L_\infty$ (right column).}
%   \label{fig:main_results_high}
% \end{figure}

\section{Conclusion and Limitations}
%\paragraph{Concluding Remarks.} 
Our work presents a comprehensive picture of algorithmic studies of RMDPs and RMCs with a constant discount factor. There are some limitations of this work, which naturally give rise to future directions. First, this work focuses on $L_p$ uncertainty sets, and generalization to other sets is an interesting direction. Second, for general $L_p$ uncertainty sets, there is a gap between the lower and upper bound (Square-root-sum hardness and ETR upper bound), and a tighter complexity result is another interesting direction. Third, for $L_1$ and $L_\infty$ uncertainty sets, our work gives strongly polynomial upper bounds for policy iteration. However, lower bounds have not yet been studied in the context of RMDPs.

\bibliography{bibliography}

@article{iyengar2005robust,
  title={Robust dynamic programming},
  author={Iyengar, Garud N},
  journal={Mathematics of Operations Research},
  volume={30},
  number={2},
  pages={257--280},
  year={2005},
  publisher={INFORMS}
}

@article{nilim2005robust,
  title={Robust control of Markov decision processes with uncertain transition matrices},
  author={Nilim, Arnab and El Ghaoui, Laurent},
  journal={Operations Research},
  volume={53},
  number={5},
  pages={780--798},
  year={2005},
  publisher={INFORMS}
}

@article{ho2021partial,
  title={Partial policy iteration for l1-robust markov decision processes},
  author={Ho, Chin Pang and Petrik, Marek and Wiesemann, Wolfram},
  journal={Journal of Machine Learning Research},
  volume={22},
  number={275},
  pages={1--46},
  year={2021}
}

@article{allender2009complexity,
  title={On the complexity of numerical analysis},
  author={Allender, Eric and B{\"u}rgisser, Peter and Kjeldgaard-Pedersen, Johan and Miltersen, Peter Bro},
  journal={SIAM Journal on Computing},
  volume={38},
  number={5},
  pages={1987--2006},
  year={2009},
  publisher={SIAM}
}

@inproceedings{Can88,
author = {Canny, John},
title = {Some algebraic and geometric computations in {PSPACE}},
year = {1988},
publisher = {Association for Computing Machinery},
doi = {10.1145/62212.62257},
booktitle = {Proceedings of the Twentieth Annual ACM Symposium on Theory of Computing {STOC}},
pages = {460–467},
numpages = {8},
series = {STOC '88}
}

@book{boyd2004convex,
  title={Convex optimization},
  author={Boyd, Stephen and Vandenberghe, Lieven},
  year={2004},
  publisher={Cambridge university press}
}

@article{asadi2026strongly,
  title={Strongly polynomial time complexity of policy iteration for $L_\infty$ robust MDPs},
  author={Asadi, Ali and Chatterjee, Krishnendu and Goharshady, Ehsan and Karrabi, Mehrdad and Montaseri, Alipasha and Pagano, Carlo},
  journal={arXiv preprint arXiv:2601.23229},
  year={2026}
}

@article{behzadian2021fast,
  title={Fast Algorithms for $ L_\infty$-constrained S-rectangular Robust MDPs},
  author={Behzadian, Bahram and Petrik, Marek and Ho, Chin Pang},
  journal={Advances in Neural Information Processing Systems},
  volume={34},
  pages={25982--25992},
  year={2021}
}

@article{ye2011simplex,
  title={The simplex and policy-iteration methods are strongly polynomial for the Markov decision problem with a fixed discount rate},
  author={Ye, Yinyu},
  journal={Mathematics of Operations Research},
  volume={36},
  number={4},
  pages={593--603},
  year={2011},
  publisher={INFORMS}
}

@article{andrew2018reinforcement,
  title={Reinforcement learning: an introduction},
  author={Andrew, Barto and Richard S, Sutton},
  year={2018},
  publisher={The MIT Press}
}

@book{puterman94,
  author       = {Martin L. Puterman},
  title        = {Markov Decision Processes: Discrete Stochastic Dynamic Programming},
  series       = {Wiley Series in Probability and Statistics},
  publisher    = {Wiley},
  year         = {1994}
}

@article{wiesemann2013,
  title={Robust Markov decision processes},
  author={Wiesemann, Wolfram and Kuhn, Daniel and Rustem, Ber{\c{c}}},
  journal={Mathematics of Operations Research},
  volume={38},
  number={1},
  pages={153--183},
  year={2013},
  publisher={INFORMS}
}

@article{archibald1995generation,
  title={On the generation of markov decision processes},
  author={Archibald, Thomas Welsh and McKinnon, Ken IM and Thomas, Lyn C},
  journal={Journal of the Operational Research Society},
  volume={46},
  number={3},
  pages={354--361},
  year={1995},
  publisher={Taylor \& Francis}
}

@book{billingsley2012ProbabilityMeasurea,
  title = {Probability and Measure},
  author = {Billingsley, Patrick},
  year = {2012},
  publisher = {Wiley}
}

@book{HindrySilverman,
  author = {Hindry, Marc and Silverman, Joseph H.},
  title = {Diophantine Geometry: An Introduction},
  publisher = {Springer},
  year = {2000}
}

@article{weissman2003inequalities,
  title={Inequalities for the l1 deviation of the empirical distribution},
  author={Weissman, Tsachy and Ordentlich, Erik and Seroussi, Gadiel and Verdu, Sergio and Weinberger, Marcelo J},
  journal={Hewlett-Packard Labs, Tech. Rep},
  pages={125},
  year={2003}
}

@inproceedings{DBLP:conf/aistats/BehzadianRPH21,
  author       = {Bahram Behzadian and
                  Reazul Hasan Russel and
                  Marek Petrik and
                  Chin Pang Ho},
  title        = {Optimizing Percentile Criterion using Robust MDPs},
  booktitle    = {{AISTATS}},
  series       = {Proceedings of Machine Learning Research},
  pages        = {1009--1017},
  publisher    = {{PMLR}},
  year         = {2021}
}

@article{DBLP:journals/ai/GivanLD00,
  author       = {Robert Givan and
                  Sonia M. Leach and
                  Thomas L. Dean},
  title        = {Bounded-parameter Markov decision processes},
  journal      = {Artif. Intell.},
  volume       = {122},
  number       = {1-2},
  pages        = {71--109},
  year         = {2000}
}

@book{Derman1972FiniteStateMDP,
  author    = {Cyrus Derman},
  title     = {Finite State Markovian Decision Processes},
  publisher = {Academic Press},
  year      = {1972}
}

@article{d1963probabilistic,
  title={A probabilistic production and inventory problem},
  author={d'Epenoux, Francois},
  journal={Management Science},
  volume={10},
  number={1},
  pages={98--108},
  year={1963},
  publisher={INFORMS}
}

@article{hansen2013strategy,
  title={Strategy iteration is strongly polynomial for 2-player turn-based stochastic games with a constant discount factor},
  author={Hansen, Thomas Dueholm and Miltersen, Peter Bro and Zwick, Uri},
  journal={Journal of the ACM (JACM)},
  volume={60},
  number={1},
  pages={1--16},
  year={2013},
  publisher={ACM New York, NY, USA}
}

@article{DBLP:journals/informs/KaufmanS13,
  author       = {David L. Kaufman and
                  Andrew J. Schaefer},
  title        = {Robust Modified Policy Iteration},
  journal      = {{INFORMS} J. Comput.},
  volume       = {25},
  number       = {3},
  pages        = {396--410},
  year         = {2013}
}

@inproceedings{DBLP:conf/stoc/GareyGJ76,
  author       = {M. R. Garey and
                  Ronald L. Graham and
                  David S. Johnson},
  title        = {Some NP-Complete Geometric Problems},
  booktitle    = {{STOC}},
  pages        = {10--22},
  publisher    = {{ACM}},
  year         = {1976}
}

@book{folland1999real,
  title={Real analysis: modern techniques and their applications},
  author={Folland, Gerald B},
  year={1999},
  publisher={John Wiley \& Sons}
}
\bibliographystyle{alpha}

% The \author macro works with any number of authors. There are two commands
% used to separate the names and addresses of multiple authors: \And and \AND.
%
% Using \And between authors leaves it to LaTeX to determine where to break the
% lines. Using \AND forces a line break at that point. So, if LaTeX puts 3 of 4
% authors names on the first line, and the last on the second line, try using
% \AND instead of \And before the third author name.

%%%%%%%%%%%%%%%%%%%%%%%%%%%%%%%%%%%%%%%%%%%%%%%%%%%%%%%%%%%%

\appendix

\section{Proof of \texorpdfstring{\cref{theorem:oracle_access}}{Theorem 1}: Strongly Polynomial PI for RMDPs via RMC Oracles} \label{app:thm1}

In this section, we show that if we have access to an oracle that solves RMCs, policy iteration is a strongly polynomial-time algorithm to find the optimal policy and value function for any uncertainty set. We prove this by showing that the algorithm converges in a polynomial number of iterations. Because each iteration runs in strongly polynomial time, the overall complexity bound immediately follows. We detail the policy iteration procedure in Algorithm \ref{algo:RMDPPI}.

Let $\rmdp = (\states,\cost,\actions,\succ,\uncert)$ denote the input RMDP. We use $\rmdp^\agentpol$ to denote the RMC induced by fixing the strategy $\agentpol$ for the agent.  We first define the Bellman operator and state its properties in Lemma~\ref{lemma:contractive_value_iteration_rmdp}. These properties guarantee the exponential convergence rate of \texttt{RMDP-PI}, as shown in Lemma~\ref{lemma:consecutive_policy_iteration_bound_rmdp}. Next, we introduce a potential function. We use this function in Lemma~\ref{lemma:lowerbound_potential_rmdp} to establish lower and upper bounds on the value error of the policies. Finally, Theorem~\ref{thm:rmdp:iterations} applies these bounds to prove that \texttt{RMDP-PI} terminates in a polynomial number of iterations.

\subsection{Robust Bellman Optimality Operator}
We define the robust Bellman optimality operator $\Bellman$ for the RMDP $\rmdp$ as follows:
\[
    (\Bellman \valvector)_s = \costvector_s + \discountfactor \inf_{a \in \actions} \sup_{\pvector \in \uncertaintyset_{s,a}} \pvector^\top \valvector.
\]
Note that because the action space $\actions$ is finite, the infimum is always achieved. Thus, we can use $\inf_{a \in \actions}$ and $\min_{a \in \actions}$ interchangeably. 
Also, because the uncertainty set $\uncertaintyset_{s,a}$ is compact, and the inner product $\pvector^\top \valvector$ is a continuous function with respect to $\pvector$, the supremum over this set is attainable. Consequently, we can similarly use $\sup_{\pvector \in \uncertaintyset_{s,a}}$ and $\max_{\pvector \in \uncertaintyset_{s,a}}$ interchangeably.

Furthermore, for a fixed agent policy $\agentpol$, its value vector $\valvector^\agentpol$ is the unique fixed point of the policy evaluation operator. For any state $s \in \states$, it satisfies the following Bellman equation:

$$\valvector^\agentpol_s = \costvector_s + \discountfactor \max_{\pvector \in \uncertaintyset_{s,\agentpol(s)}} \pvector^\top \valvector^\agentpol$$

Let $\trans^\agentpol$ denote the transition matrix induced by the agent's policy $\agentpol$ and the adversary's optimal response, where the adversary's transition probabilities at each state $s$ are given by $\argmax_{\pvector \in \uncertaintyset_{s,\agentpol(s)}} \pvector^\top \valvector^\agentpol$. In matrix form, the value vector can be expressed as:

\begin{equation} \label{eq:value_definition_rmdp} 
    \valvector^\agentpol = 
    \costvector + \discountfactor \trans^\agentpol \valvector^\agentpol =
    (\identitymatrix - \discountfactor \trans^\agentpol)^{-1} \costvector 
\end{equation}

The following lemma establishes that $\Bellman$ is a contraction mapping with a unique fixed point. This is a standard result in robust dynamic programming \cite{iyengar2005robust}.

\begin{lemma} \label{lemma:contractive_value_iteration_rmdp} \label{lemma:value_iteration_decreasing_rmdp}
    The following statements hold:
    \begin{itemize}
        \item For all $\vvector, \uvector \in \R^n$, $\infnorm{\Bellman \uvector - \Bellman \vvector} \leq \discountfactor \, \infnorm{\uvector - \vvector}$.
        \item For every agent policy $\agentpol$ with value vector $\valvector^\agentpol$, $\Bellman \valvector^\agentpol \preccurlyeq \valvector^\agentpol$.
        \item There exists a unique $\valopt\in \R^n$ such that $\Bellman\valopt= \valopt$.
    \end{itemize}
\end{lemma}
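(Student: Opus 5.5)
The three items are standard; I will prove them in order, relying only on the elementary fact that a max (or min) of functions is $1$-Lipschitz in the sup-norm.

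\emph{Contraction.} Fix $s$ and $a$, and let $\pvector\in\uncert(s,a)$. Since $\pvector$ is a probability vector,
\[
|\pvector^\top\uvector-\pvector^\top\vvector|\le\infnorm{\uvector-\vvector}.
\]
Taking the maximum over the compact set $\uncert(s,a)$ preserves this bound. Concretely, let $\pvector_u$ attain the maximum for $\uvector$. Then
\[
\max_{\pvector}\pvector^\top\uvector-\max_{\pvector}\pvector^\top\vvector\le \pvector_u^\top(\uvector-\vvector),
\]
and the reverse difference is bounded symmetrically. Taking the minimum over the finite set $\actions$ also preserves the bound, by the same argument. Multiplying by $\discountfactor$ and maximizing over $s$ gives
\[
\infnorm{\Bellman\uvector-\Bellman\vvector}\le\discountfactor\infnorm{\uvector-\vvector}.
\]

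\emph{Fixed point.} $\R^n$ with $\infnorm{\cdot}$ is complete, and $\discountfactor<1$. The Banach fixed-point theorem therefore gives a unique $\valopt$ with $\Bellman\valopt=\valopt$.

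\emph{Policy values are supersolutions.} I use the Bellman equation for $\valvector^\agentpol$ stated just before the lemma:
\[
\valvector^\agentpol_s=\costvector_s+\discountfactor\max_{\pvector\in\uncert(s,\agentpol(s))}\pvector^\top\valvector^\agentpol .
\]
The action $\agentpol(s)$ is one candidate in the minimum over $\actions$, so
\[
(\Bellman\valvector^\agentpol)_s\le \costvector_s+\discountfactor\max_{\pvector\in\uncert(s,\agentpol(s))}\pvector^\top\valvector^\agentpol=\valvector^\agentpol_s .
\]
This holds for every $s$, which proves $\Bellman\valvector^\agentpol\preccurlyeq\valvector^\agentpol$.

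The only subtle point is the well-definedness of $\valvector^\agentpol$ as the unique solution of the policy Bellman equation. I will justify it with the same contraction argument applied to the operator in which the minimum over actions is replaced by $\agentpol(s)$. The paper already treats this as given in the display defining $\valvector^\agentpol$. Otherwise the proof is routine, and I expect no real obstacle.
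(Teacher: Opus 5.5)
Your proposal is correct and matches the paper's proof essentially step for step. The paper also bounds the sup/inf differences to get the contraction, restricts the minimum to the action $\agentpol(s)$ for the second item, and applies Banach's fixed-point theorem for the third. Your added remark on the well-definedness of $\valvector^\agentpol$ covers a point the paper simply takes as given.
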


\begin{proof}
    \begin{itemize}
        \item To prove the contraction property, fix a state $s \in \states$. Using the properties of the supremum and infimum, we can bound the difference as follows:
        \begin{align*}
            (\Bellman \uvector)_s - (\Bellman \valvector)_s
            &= \discountfactor \left( \inf_{a \in \actions} \sup_{\pvector \in \uncertaintyset_{s,a}} \pvector^\top \uvector - \inf_{a \in \actions} \sup_{\pvector \in \uncertaintyset_{s,a}} \pvector^\top \valvector \right) & (\text{Definition of } \Bellman) \\
            &\leq \discountfactor \sup_{a \in \actions} \left( \sup_{\pvector \in \uncertaintyset_{s,a}} \pvector^\top \uvector - \sup_{\pvector \in \uncertaintyset_{s,a}} \pvector^\top \valvector \right) & (\inf f - \inf g \leq \sup(f - g)) \\
            &\leq \discountfactor \sup_{a \in \actions} \sup_{\pvector \in \uncertaintyset_{s,a}} \left( \pvector^\top \uvector - \pvector^\top \valvector \right) & (\sup f - \sup g \leq \sup(f - g)) \\
            &= \discountfactor \sup_{a \in \actions} \sup_{\pvector \in \uncertaintyset_{s,a}} \pvector^\top ( \uvector - \valvector ) & (\text{Linearity}) \\
            &\leq \discountfactor \infnorm{\uvector - \valvector} & (\text{Since } \pvector \text{ is a distribution})
        \end{align*}
        Applying symmetric reasoning by swapping $\uvector$ and $\valvector$ yields $(\Bellman \valvector)_s - (\Bellman \uvector)_s \leq \discountfactor \infnorm{\uvector - \valvector}$. Combining these gives $|(\Bellman \uvector)_s - (\Bellman \valvector)_s| \leq \discountfactor \infnorm{\uvector - \valvector}$. Since this holds for all states $s \in \states$, it establishes $\infnorm{\Bellman \uvector - \Bellman \valvector} \leq \discountfactor \infnorm{\uvector - \valvector}$.
        
        \item To prove the second statement, we evaluate the operator on the value vector $\valvector^{\agentpol}$ for a given policy~$\agentpol$. For any state $s \in \states$, we have:
        \begin{align*}
            (\Bellman \valvector^{\agentpol})_s
            &= \costvector_s + \discountfactor \inf_{a \in \actions} \sup_{\pvector \in \uncertaintyset_{s,a}} \pvector^\top \valvector^{\agentpol} & (\text{Definition of } \Bellman) \\
            &\leq \costvector_s + \discountfactor \sup_{\pvector \in \uncertaintyset_{s,\agentpol(s)}} \pvector^\top \valvector^{\agentpol} & (\text{Restricting infimum to } \agentpol(s)) \\
            &= \valvector^{\agentpol}_s & (\text{Definition of } \valvector^{\agentpol})
        \end{align*}
        This proves $\Bellman \valvector^\agentpol \preccurlyeq \valvector^\agentpol$.

        \item Finally, because $\Bellman$ is a contraction mapping on the complete metric space $\R^n$ equipped with the infinity norm, the Banach fixed-point theorem guarantees the existence of a unique vector $\valopt \in \R^n$ such that $\Bellman\valopt = \valopt$.
    \end{itemize}
\end{proof}

Now, we show that $\valvector^*$ has a lower value in each state than any other value vector induced by an arbitrary agent policy.

\begin{lemma} \label{lemma:optimal_value_lower_bound_rmdp}
    For any arbitrary agent policy $\agentpol$ with value vector $\valvector^\agentpol$, the optimal value vector $\valvector^*$ satisfies $\valvector^* \preccurlyeq \valvector^\agentpol$.
\end{lemma}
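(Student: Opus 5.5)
The plan is to iterate the Bellman operator starting from $\valvector^\agentpol$ and use monotonicity of $\Bellman$ together with the contraction property of Lemma~\ref{lemma:contractive_value_iteration_rmdp}. Fix an arbitrary agent policy $\agentpol$.

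\textbf{Step 1: monotonicity of $\Bellman$.} I will first record the auxiliary fact that $\uvector \preccurlyeq \vvector$ implies $\Bellman \uvector \preccurlyeq \Bellman \vvector$. Fix $s$ and $a$. Since every $\pvector \in \uncertaintyset_{s,a}$ is a probability distribution, $\pvector^\top \uvector \le \pvector^\top \vvector$ holds for each such $\pvector$. Taking the supremum over $\pvector \in \uncertaintyset_{s,a}$ preserves this inequality, and so does taking the minimum over $a \in \actions$. Adding $\costvector_s$ and multiplying by $\discountfactor \ge 0$ then gives the claim.

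\textbf{Step 2: a decreasing sequence.} Define $\uvector^0 \defas \valvector^\agentpol$ and $\uvector^{k+1} \defas \Bellman \uvector^k$. The second item of Lemma~\ref{lemma:contractive_value_iteration_rmdp} gives $\uvector^1 = \Bellman \valvector^\agentpol \preccurlyeq \valvector^\agentpol = \uvector^0$. Applying Step 1 inductively yields $\uvector^{k+1} \preccurlyeq \uvector^k$ for every $k$. Hence $\uvector^k \preccurlyeq \valvector^\agentpol$ for all $k \ge 0$.

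\textbf{Step 3: passing to the limit.} By the contraction property and the uniqueness of the fixed point $\valopt$ in Lemma~\ref{lemma:contractive_value_iteration_rmdp}, we have
\[
\infnorm{\uvector^k - \valopt} = \infnorm{\Bellman \uvector^{k-1} - \Bellman \valopt} \le \discountfactor^k \infnorm{\valvector^\agentpol - \valopt} \to 0 .
\]
So $\uvector^k \to \valopt$ componentwise. Componentwise non-strict inequalities are preserved under limits, so $\uvector^k \preccurlyeq \valvector^\agentpol$ for all $k$ gives $\valopt \preccurlyeq \valvector^\agentpol$.

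None of these steps is a real obstacle. The only point needing care is Step 1, which is not stated in the earlier lemma, so it has to be proved here. It relies on the entries of $\pvector$ being nonnegative, which holds because $\uncertaintyset_{s,a} \subseteq \Delta(\succ(s,a))$.

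There is also an alternative route: observe that $\valvector^\agentpol - \Bellman \valvector^\agentpol \succcurlyeq 0$ and bound $\valvector^\agentpol - \valopt$ from below directly. The iterative argument above is cleaner and uses only results already stated, so I will use it.
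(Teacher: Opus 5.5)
Your proposal is correct and follows essentially the same route as the paper: you prove monotonicity of $\Bellman$, iterate from $\valvector^\agentpol$ using $\Bellman \valvector^\agentpol \preccurlyeq \valvector^\agentpol$ to get $\Bellman^k \valvector^\agentpol \preccurlyeq \valvector^\agentpol$, and pass to the limit using the contraction property. Your explicit $\discountfactor^k$ convergence estimate only spells out the limit step that the paper states directly.
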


\begin{proof}
    We first establish the monotonicity of the robust Bellman operator $\Bellman$. Let $\uvector, \vvector \in \R^{|\states|}$ be two value vectors such that $\uvector \preccurlyeq \vvector$. For any state $s \in \states$:
    \begin{align*}
        (\Bellman \uvector)_s &= \costvector_s + \discountfactor \inf_{a \in \actions} \sup_{\pvector \in \uncertaintyset_{s,a}} \pvector^\top \uvector \\
        &\leq \costvector_s + \discountfactor \inf_{a \in \actions} \sup_{\pvector \in \uncertaintyset_{s,a}} \pvector^\top \vvector & (\text{Since } \uvector \preccurlyeq \vvector \text{ and } \pvector \geq \mathbf{0}) \\
        &= (\Bellman \vvector)_s
    \end{align*}
    Thus, $\Bellman$ is a monotonic operator: $\uvector \preccurlyeq \vvector \implies \Bellman \uvector \preccurlyeq \Bellman \vvector$.
    
    From Lemma~\ref{lemma:contractive_value_iteration_rmdp}, we know that for any agent policy $\agentpol$, the operator evaluated at the policy's value vector yields $\Bellman \valvector^\agentpol \preccurlyeq \valvector^\agentpol$. 
    
    By repeatedly applying the monotonic operator $\Bellman$ to both sides of this inequality, we generate a chain of non-increasing vectors:
    \begin{align*}
        \Bellman^2 \valvector^\agentpol &\preccurlyeq \Bellman \valvector^\agentpol \preccurlyeq \valvector^\agentpol \\
        \Bellman^k \valvector^\agentpol &\preccurlyeq \valvector^\agentpol \quad \text{for any integer } k \geq 1
    \end{align*}
    Because $\Bellman$ is a contraction mapping (Lemma~\ref{lemma:contractive_value_iteration_rmdp}), the sequence $\Bellman^k \valvector^\agentpol$ converges to the unique optimal fixed point $\valvector^*$ as $k \to \infty$. Taking the limit of both sides preserves the non-strict inequality:
    \begin{align*}
        \lim_{k \to \infty} \Bellman^k \valvector^\agentpol &\preccurlyeq \valvector^\agentpol \\
        \valvector^* &\preccurlyeq \valvector^\agentpol
    \end{align*}
    This completes the proof.
\end{proof}

We next show that \cref{algo:RMDPPI} generates a non-increasing sequence of value functions that converges exponentially to the optimal value. This property guarantees monotonic policy improvement at each iteration.

\begin{lemma}\label{lemma:consecutive_policy_iteration_bound_rmdp}
    Let $\agentpol^t$ and $\agentpol^{t+1}$ denote consecutive policies generated by \texttt{RMDP-PI} (Algorithm~\ref{algo:RMDPPI}), and let $\valvector^t$ denote the value function of $\agentpol^t$. For all $t \geq 0$, the following statements hold:
    \begin{itemize}
        \item $\valvector^{t+1} \preccurlyeq  \valvector^{t}$.
        % \item $\valvector^{t+1} \preccurlyeq \Bellman \valvector^t$
        \item $\infnorm{\valvector^t - \valvector^*} \leq \discountfactor^t \infnorm{\valvector^0 - \valvector^*}$.
    \end{itemize} 
\end{lemma}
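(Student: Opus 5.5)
The plan is the standard policy-iteration argument, adapted to the robust setting with the monotonicity and contraction of $\Bellman$ from Lemmas~\ref{lemma:contractive_value_iteration_rmdp} and~\ref{lemma:optimal_value_lower_bound_rmdp}. The central intermediate claim is the sandwich
\[
    \valvector^* \preccurlyeq \valvector^{t+1} \preccurlyeq \Bellman \valvector^t \preccurlyeq \valvector^t .
\]
The last inequality is the second item of Lemma~\ref{lemma:contractive_value_iteration_rmdp}, and the first is Lemma~\ref{lemma:optimal_value_lower_bound_rmdp}. Both items of the statement then follow from this chain.

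The main work is the middle inequality $\valvector^{t+1} \preccurlyeq \Bellman \valvector^t$. By the improvement step, $\agentpol^{t+1}(s)$ attains the minimum over $a$ of $\max_{\pvector\in\uncertaintyset_{s,a}} \pvector^\top\valvector^t$. Hence $\Bellman \valvector^t = \Bellman^{\agentpol^{t+1}} \valvector^t$, where $\Bellman^{\agentpol}\uvector \defas \costvector + \discountfactor\bigl(\max_{\pvector \in \uncertaintyset_{s,\agentpol(s)}} \pvector^\top \uvector\bigr)_s$ is the policy evaluation operator. This operator is monotone, by the same argument as for $\Bellman$ in Lemma~\ref{lemma:optimal_value_lower_bound_rmdp}. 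It is also a $\discountfactor$-contraction, by the computation in Lemma~\ref{lemma:contractive_value_iteration_rmdp} with $\actions$ replaced by the singleton $\{\agentpol(s)\}$. Its unique fixed point is $\valvector^{\agentpol^{t+1}} = \valvector^{t+1}$, using the Bellman equation for fixed policies stated before \eqref{eq:value_definition_rmdp}.

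Applying $\Bellman^{\agentpol^{t+1}}$ repeatedly to $\Bellman^{\agentpol^{t+1}}\valvector^t = \Bellman\valvector^t \preccurlyeq \valvector^t$ and using monotonicity gives
\[
    (\Bellman^{\agentpol^{t+1}})^k \valvector^t \preccurlyeq \Bellman^{\agentpol^{t+1}}\valvector^t = \Bellman\valvector^t .
\]
Letting $k\to\infty$ yields $\valvector^{t+1}\preccurlyeq \Bellman\valvector^t$. One subtlety needs checking: the algorithm computes $\val^t$ as $(\I-\discount\trans^{\agentpol^t,\envpol^t})^{-1}\cost$, with $\envpol^t$ the optimal adversary returned by the RMC oracle. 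I take this to equal the fixed point of $\Bellman^{\agentpol^t}$, i.e.\ the RMC value, by correctness of the oracle and \eqref{eq:value_definition_rmdp}. I expect this identification and the monotonicity argument for $\Bellman^{\agentpol}$ to be the only steps needing care.

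For the second item, the chain gives $0 \preccurlyeq \valvector^{t+1}-\valvector^* \preccurlyeq \Bellman\valvector^t - \Bellman\valvector^*$, using $\Bellman\valvector^*=\valvector^*$. Taking the infinity norm of nonnegative vectors and applying the contraction property gives
\[
    \infnorm{\valvector^{t+1}-\valvector^*}\le \discountfactor\infnorm{\valvector^t-\valvector^*} .
\]
Induction on $t$ then gives $\infnorm{\valvector^t-\valvector^*}\le\discountfactor^t\infnorm{\valvector^0-\valvector^*}$.
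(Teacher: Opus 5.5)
Your proof is correct. Its second half, the sandwich $\valvector^*\preccurlyeq\valvector^{t+1}\preccurlyeq\Bellman\valvector^t$ followed by contraction and induction, is exactly the paper's. The difference is in how you obtain $\valvector^{t+1}\preccurlyeq\Bellman\valvector^t$ and the monotone decrease.

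The paper argues linear-algebraically with the fixed matrix $\trans^{t+1}$, that is, policy $\agentpol^{t+1}$ together with the adversary's best response to it. It writes $\valvector^t-\valvector^{t+1}=(\identitymatrix-\discountfactor\trans^{t+1})^{-1}\bigl[\valvector^t-(\costvector+\discountfactor\trans^{t+1}\valvector^t)\bigr]$ and bounds the bracket below by $\valvector^t-\Bellman\valvector^t\succcurlyeq\mathbf{0}$. Entrywise nonnegativity of the Neumann series then gives $\valvector^{t+1}\preccurlyeq\valvector^t$ first. Only afterwards does it deduce $\valvector^{t+1}\preccurlyeq\Bellman\valvector^t$, by writing $\Bellman\valvector^t-\valvector^{t+1}=\discountfactor\trans^{t+1}(\valvector^t-\valvector^{t+1})\succcurlyeq\mathbf{0}$.

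You reverse the order. You get $\valvector^{t+1}\preccurlyeq\Bellman\valvector^t$ directly by monotone iteration of the nonlinear operator $\Bellman^{\agentpol^{t+1}}$ started at $\valvector^t$. The decrease is then immediate from $\Bellman\valvector^t\preccurlyeq\valvector^t$.

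Because you never freeze an adversary matrix, you avoid comparing $\Bellman\valvector^t$ with $\costvector+\discountfactor\trans^{t+1}\valvector^t$. The paper writes this comparison as an equality. In general only $\succcurlyeq$ holds, since $\trans^{t+1}$ is the adversary's response to $\valvector^{t+1}$ rather than to $\valvector^t$; the inequality is all the argument needs.

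The paper's route, in exchange, gives the explicit improvement bound $\valvector^t-\valvector^{t+1}\succcurlyeq(\identitymatrix-\discountfactor\trans^{t+1})^{-1}(\valvector^t-\Bellman\valvector^t)$. It also sets up the resolvent and Neumann-series machinery reused later in the potential-function upper bound. Your route only costs the routine check that $\Bellman^{\agentpol}$ is monotone and a $\discountfactor$-contraction with fixed point $\valvector^{\agentpol}$.
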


\begin{proof}
    \begin{itemize}
        \item We assume $\trans^t$ is the probability transition matrix according to $\agentpol^t$. We first establish that $\valvector^{t+1} \preccurlyeq \valvector^{t}$. We express the difference as follows:
        \begin{align*}
            \valvector^{t} - \valvector^{t + 1}
            & = \valvector^t - (\identitymatrix - \discountfactor \trans^{t+1})^{-1} \costvector & (\text{By Equation \ref{eq:value_definition_rmdp}}) \\
            & = (\identitymatrix - \discountfactor \trans^{t+1})^{-1} (\identitymatrix - \discountfactor \trans^{t+1}) \left[ \valvector^t - (\identitymatrix - \discountfactor \trans^{t+1})^{-1} \costvector \right] & (\text{Multiplying by } \identitymatrix) \\
            & = (\identitymatrix - \discountfactor \trans^{t+1})^{-1} \left[ (\identitymatrix - \discountfactor \trans^{t+1}) \valvector^t - \costvector \right] & (\text{Expanding the product}) \\
            & = (\identitymatrix - \discountfactor \trans^{t+1})^{-1} \left[ \valvector^t - (\costvector + \discountfactor \trans^{t+1} \valvector^t) \right] & (\text{Rearranging terms}) \\
            & \succcurlyeq (\identitymatrix - \discountfactor \trans^{t+1})^{-1} \left[ \valvector^t - \Bellman \valvector^t \right] & (\text{By Definition of $\Bellman$})
        \end{align*}
        The final inequality follows from the definition of $\Bellman$ and the non-negativity of the matrix $(\identitymatrix - \discountfactor \trans^{t+1})^{-1}$, which we establish below. By Lemma \ref{lemma:value_iteration_decreasing_rmdp}, we know that $\valvector^t - \Bellman \valvector^t \succcurlyeq \mathbf{0}$. To evaluate $(\identitymatrix - \discountfactor \trans^{t+1})^{-1}$, we can express it using the Neumann series expansion. Because $\trans^{t+1}$ is a stochastic transition matrix, its spectral radius (maximum magnitude of eigenvalues) is $1$. Multiplying it by the discount factor $\discountfactor \in (0,1)$ bounds the spectral radius of $\discountfactor \trans^{t+1}$ to less than $1$. This shows the inverse exists and can be expressed as the convergent infinite series:
        \[
            (\identitymatrix - \discountfactor \trans^{t+1})^{-1} = \sum_{k=0}^{\infty} (\discountfactor \trans^{t+1})^k
        \]
        Furthermore, because $\discountfactor > 0$ and every entry in the probability matrix $\trans^{t+1}$ is non-negative, every matrix in this infinite sum is also non-negative. Consequently, $(\identitymatrix - \discountfactor \trans^{t+1})^{-1}$ is non-negative, and multiplying it by the non-negative vector $\left[ \valvector^t - \Bellman \valvector^t \right]$ ensures that the final product is non-negative. Therefore, $\valvector^{t} - \valvector^{t+1} \succcurlyeq \mathbf{0}$, which proves $\valvector^{t+1} \preccurlyeq \valvector^{t}$.
        
        \item Next, we show that $\valvector^{t+1} \preccurlyeq \Bellman \valvector^t$. We analyze the difference:
        \begin{align*}
            \Bellman \valvector^t - \valvector^{t+1}
            & = \costvector + \discountfactor \trans^{t+1} \valvector^t - \valvector^{t+1} & (\text{Definition of } \Bellman \text{ and } \agentpol^{t+1}) \\
            & = (\identitymatrix - \discountfactor \trans^{t+1})\valvector^{t+1} + \discountfactor \trans^{t+1} \valvector^t - \valvector^{t+1} & (\text{Substituting for } \costvector) \\
            & = \valvector^{t+1} - \discountfactor \trans^{t+1} \valvector^{t+1} + \discountfactor \trans^{t+1} \valvector^t - \valvector^{t+1} & (\text{Expanding the first term}) \\
            & = \discountfactor \trans^{t+1} (\valvector^t - \valvector^{t+1}) & (\text{Simplifying})
        \end{align*}
        Since $\discountfactor > 0$, $\trans^{t+1} \geq 0$, and $\valvector^t \succcurlyeq \valvector^{t+1}$ (as proved in the previous step), we conclude that $\valvector^{t+1} \preccurlyeq \Bellman \valvector^{t}$. 
        
        Finally, we prove the second statement of the lemma using induction on $t$. The base case for $t=0$ holds trivially because substituting $t=0$ yields $\infnorm{\valvector^0 - \valvector^*} \leq \discountfactor^0 \infnorm{\valvector^0 - \valvector^*}$, which simplifies to $\infnorm{\valvector^0 - \valvector^*} \leq \infnorm{\valvector^0 - \valvector^*}$. For $t > 0$, we have:
        \begin{align*}
            \infnorm{\valvector^t - \valvector^*}
            & \leq \infnorm{\Bellman \valvector^{t-1} - \valvector^*} & ( \valvector^* \preccurlyeq \valvector^t \preccurlyeq \Bellman\valvector^{t-1}) \\
            & = \infnorm{\Bellman \valvector^{t-1} - \Bellman \valvector^*} & (\text{Since } \valvector^* \text{ is the fixed point}) \\
            & \leq \discountfactor \infnorm{\valvector^{t-1} - \valvector^*} & (\text{By contraction in Lemma \ref{lemma:contractive_value_iteration_rmdp}}) \\
            & \leq \discountfactor^t \infnorm{\valvector^0 - \valvector^*} & (\text{By induction})
        \end{align*}
    \end{itemize}
\end{proof}

\subsection{Potential Function}
We next define the potential function. Intuitively, the potential of an action $a$ in state $s$ measures the additional cost incurred by deviating from the optimal policy $\agentpolopt$ and selecting $a$ instead of $\agentpolopt(s)$. 

\begin{definition}[\textbf{Potential function}] \label{def:potential_function_rmdp}
    Given a state $s \in \states$ and an action $a \in \actions$, the potential function $f(s,a)$ is defined as:
    $$f(s,a) := \sup_{\pvector \in \uncertaintyset_{s,a}}{\pvector^\top \valvector^*} - \sup_{\pvector \in \uncertaintyset_{s,\agentpol^*(s)}}{\pvector^\top \valvector^*}.$$
\end{definition}

Next, we use the potential function to establish lower and upper bounds on the difference between an arbitrary policy's value vector and the optimal value.

\begin{lemma} \label{lemma:lowerbound_potential_rmdp} \label{lemma:upperbound_potential_rmdp}
    Let $\agentpol$ be an arbitrary policy for the agent. The following statements hold:
    \begin{itemize}
        \item \textbf{(Lower bound)} If $\agentpol(s) = a$ for some state $s \in \states$, then $\valvector^{\agentpol}_s - \valvector^{*}_s \geq \discountfactor f(s,a)$.
        \item \textbf{(Upper bound)} Let $\hat{s} = \argmax_{s \in \states} f(s,\agentpol(s))$. Then $\infnorm{\valvector^\agentpol - \valvector^*} \leq \frac{\discountfactor}{1 - \discountfactor} f(\hat{s},\agentpol(\hat{s}))$.
    \end{itemize}
\end{lemma}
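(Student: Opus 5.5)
For the lower bound, I would compare the Bellman equation of $\agentpol$ at $s$ with the fixed-point equation for $\valopt$ (Lemma~\ref{lemma:contractive_value_iteration_rmdp}). Since $\valopt=\Bellman\valopt$ and $\agentpolopt(s)$ attains the minimum in $\Bellman$, we have $\valopt_s = \costvector_s + \discountfactor \max_{\pvector \in \uncertaintyset_{s,\agentpolopt(s)}} \pvector^\top \valopt$. Meanwhile $\valvector^\agentpol_s = \costvector_s + \discountfactor \max_{\pvector\in\uncertaintyset_{s,a}} \pvector^\top \valvector^\agentpol$. By Lemma~\ref{lemma:optimal_value_lower_bound_rmdp}, $\valvector^\agentpol \succcurlyeq \valopt$, and every $\pvector$ is nonnegative, so $\max_{\pvector\in\uncertaintyset_{s,a}}\pvector^\top\valvector^\agentpol \ge \max_{\pvector\in\uncertaintyset_{s,a}}\pvector^\top\valopt$. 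Subtracting the two equations then gives $\valvector^\agentpol_s-\valopt_s \ge \discountfactor f(s,a)$ directly.

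For the upper bound, I would bound $\valvector^\agentpol_s-\valopt_s$ recursively for each $s$. Let $a=\agentpol(s)$ and let $\pvector^\agentpol$ be the adversary's maximizer for $\valvector^\agentpol$ at $(s,a)$. Then
\[
\valvector^\agentpol_s-\valopt_s = \discountfactor\Bigl(\pvector^{\agentpol\top}\valvector^\agentpol - \max_{\pvector\in\uncertaintyset_{s,a}}\pvector^\top\valopt\Bigr) + \discountfactor f(s,a).
\]
The first bracket is at most $\pvector^{\agentpol\top}(\valvector^\agentpol-\valopt)$, which in turn is at most $\infnorm{\valvector^\agentpol-\valopt}$. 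Since all differences are nonnegative, the left side is $|\cdot|$. This yields
\[
\valvector^\agentpol_s-\valopt_s \le \discountfactor\infnorm{\valvector^\agentpol-\valopt}+\discountfactor f(\hat s,\agentpol(\hat s)).
\]
Taking the max over $s$ and rearranging, using $\discountfactor<1$ and finiteness of the norm, gives $\infnorm{\valvector^\agentpol-\valopt}\le \frac{\discountfactor}{1-\discountfactor}f(\hat s,\agentpol(\hat s))$.

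The only delicate point is the correct use of the sup-difference inequality $\sup g-\sup h\le \sup(g-h)$ over the common set $\uncertaintyset_{s,a}$, together with the sign information from Lemma~\ref{lemma:optimal_value_lower_bound_rmdp}. We also need $f\ge 0$, which follows since $\agentpolopt(s)$ minimizes the inner max. I do not expect a serious obstacle; the main care is writing $\valopt_s$ with action $\agentpolopt(s)$ rather than $a$.
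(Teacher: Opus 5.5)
Your proof is correct. The lower bound is exactly the paper's argument; the upper bound reaches the same estimate by a different and more elementary route.

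The paper first derives the resolvent identity $\valvector^\agentpol-\valopt=\discountfactor(\identitymatrix-\discountfactor\trans^{\agentpol,\envpol})^{-1}(\trans^{\agentpol,\envpol}-\trans^*)\valopt$ from $A^{-1}-B^{-1}=A^{-1}(B-A)B^{-1}$. It then bounds each row of $(\trans^{\agentpol,\envpol}-\trans^*)\valopt$ by $f(s,\agentpol(s))$, which uses the same inequality $\pvector^{\agentpol\top}\valopt\le\max_{\pvector\in\uncertaintyset_{s,\agentpol(s)}}\pvector^\top\valopt$ that you use. Finally it invokes nonnegativity of the Neumann series and $\infnorm{(\identitymatrix-\discountfactor\trans^{\agentpol,\envpol})^{-1}}\le 1/(1-\discountfactor)$.

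You instead split the one-step Bellman difference at $s$ into $\discountfactor\,\pvector^{\agentpol\top}(\valvector^\agentpol-\valopt)$ plus a term bounded by $\discountfactor f(s,\agentpol(s))$. You then close the self-bounding inequality $\infnorm{\valvector^\agentpol-\valopt}\le\discountfactor\infnorm{\valvector^\agentpol-\valopt}+\discountfactor f(\hat s,\agentpol(\hat s))$ by rearranging. In effect, the geometric series is summed implicitly rather than through the matrix inverse.

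\textbf{What your route buys.} It avoids matrix inversion, the resolvent identity, and positivity of $(\identitymatrix-\discountfactor\trans)^{-1}$. It needs only the two Bellman equations, $\pvector^\agentpol\in\uncertaintyset_{s,\agentpol(s)}$, and $\valvector^\agentpol\succcurlyeq\valopt$ (Lemma~\ref{lemma:optimal_value_lower_bound_rmdp}).

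\textbf{What the paper's route buys.} It yields the finer vector inequality $\valvector^\agentpol-\valopt\preccurlyeq\discountfactor(\identitymatrix-\discountfactor\trans^{\agentpol,\envpol})^{-1}\mathbf{f}^\agentpol$, which weights each state's potential by its discounted visitation. It is also the template the paper reuses for the RMC upper bound (Lemma~\ref{lem:upper_bound}), although your argument would transfer there just as easily.

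One small remark: you say you need $f\ge 0$, but your argument never uses it. The rearrangement goes through without it, though it is of course true.
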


\begin{proof}
    \begin{itemize}
        \item \textbf{(Lower bound)} We lower bound the value difference as follows:
    \begin{align*}
        \valvector^\agentpol_s - \valvector^*_s
        &= \left( \costvector_s + \discountfactor \sup_{\pvector \in \uncertaintyset_{s,a}}{ \pvector^\top \valvector^\agentpol} \right)
         - \left( \costvector_s + \discountfactor \sup_{\pvector \in \uncertaintyset_{s,\agentpol^*(s)}}{ \pvector^\top \valvector^*} \right)
        & (\text{By Equation~\ref{eq:value_definition_rmdp}})
        \\
        & = \discountfactor \left(\sup_{\pvector \in \uncertaintyset_{s,a}}{\pvector^\top \valvector^\agentpol} - \sup_{\pvector \in \uncertaintyset_{s,\agentpol^*(s)}}{\pvector^\top \valvector^*}\right) & (\text{Simplifying})
        \\
        &\geq \discountfactor \left(\sup_{\pvector \in \uncertaintyset_{s,a}}{\pvector^\top \valvector^*} - \sup_{\pvector \in \uncertaintyset_{s,\agentpol^*(s)}}{\pvector^\top \valvector^*}\right)
        & (\valvector^\agentpol \succcurlyeq \valvector^* \text{By \cref{lemma:optimal_value_lower_bound_rmdp}}) \\
        &= \discountfactor f(s,a)
        & (\text{By Definition})
    \end{align*}

    \item \textbf{(Upper bound)} Assume $\envpol$ is the adversary's best response to $\agentpol$, and $\envpol^*$ is the adversary's best response to $\agentpol^*$. Let $\trans^{\agentpol,\envpol}$ and $\trans^{*}$ denote the corresponding transition matrices under these fixed policies. We expand the value difference using Equation~\ref{eq:value_definition_rmdp}:

    \begin{align*}
        \valvector^\agentpol - \valvector^*
        &= (\identitymatrix - \discountfactor\trans^{\agentpol,\envpol})^{-1} \costvector - (\identitymatrix - \discountfactor\trans^*)^{-1} \costvector \\
        &= \bigg( (\identitymatrix - \discountfactor\trans^{\agentpol,\envpol})^{-1} - (\identitymatrix - \discountfactor\trans^*)^{-1} \bigg) \costvector.
    \end{align*}

    Applying the matrix identity $A^{-1} - B^{-1} = A^{-1}(B - A)B^{-1}$ yields:

    \begin{align*}
        \valvector^\agentpol - \valvector^* &= (\identitymatrix - \discountfactor\trans^{\agentpol,\envpol})^{-1} \bigg( (\identitymatrix - \discountfactor\trans^*) - (\identitymatrix - \discountfactor\trans^{\agentpol,\envpol}) \bigg) (\identitymatrix - \discountfactor\trans^*)^{-1} \costvector \\
        &= \discountfactor \, (\identitymatrix - \discountfactor\trans^{\agentpol,\envpol})^{-1} (\trans^{\agentpol,\envpol} - \trans^*) (\identitymatrix - \discountfactor\trans^*)^{-1} \costvector & (\text{Simplifying terms}) \\
        &= \discountfactor \, (\identitymatrix - \discountfactor\trans^{\agentpol,\envpol})^{-1} (\trans^{\agentpol,\envpol} - \trans^*) \valvector^* & (\text{Substituting } \valvector^*)
    \end{align*}

    Since $\valvector^\agentpol \succcurlyeq \valvector^*$, we know $\valvector^\agentpol - \valvector^* \succcurlyeq \mathbf{0}$. We bound the vector $(\trans^{\agentpol,\envpol} - \trans^*) \valvector^*$ component-wise. For any state $s \in \states$:

    \begin{align*}
        ((\trans^{\agentpol,\envpol} - \trans^*) \valvector^*)_s 
        &= \trans^{\agentpol,\envpol}_s \valvector^* - \trans^*_s \valvector^* \\
        &= \left( \argmax_{\pvector \in \uncertaintyset_{s,\agentpol(s)}}{\pvector^\top \valvector^\agentpol} \right)^\top \valvector^* - \sup_{\pvector \in \uncertaintyset_{s,\agentpol^*(s)}}{\pvector^\top \valvector^*} & (\text{Definition of } \trans^{\agentpol,\envpol}, \trans^*) \\
        &\leq \sup_{\pvector \in \uncertaintyset_{s,\agentpol(s)}}{\pvector^\top \valvector^*} - \sup_{\pvector \in \uncertaintyset_{s,\agentpol^*(s)}}{\pvector^\top \valvector^*} & (\text{Taking worst case in l.h.s}) \\
        &= f(s,\agentpol(s)) & (\text{Definition of } f)
    \end{align*}

    Let $\mathbf{f}^\agentpol$ be the vector where $\mathbf{f}^\agentpol_s = f(s,\agentpol(s))$. Thus, $(\trans^{\agentpol,\envpol} - \trans^*) \valvector^* \preccurlyeq \mathbf{f}^\agentpol$. Because the matrix $(\identitymatrix - \discountfactor\trans^{\agentpol,\envpol})^{-1}$ is non-negative, multiplying it preserves the inequality, forming a non-negative bounding chain:

    \begin{align*}
        \mathbf{0} \preccurlyeq \valvector^\agentpol - \valvector^* \preccurlyeq \discountfactor \, (\identitymatrix - \discountfactor\trans^{\agentpol,\envpol})^{-1} \mathbf{f}^\agentpol
    \end{align*}

    Because all vectors are non-negative, taking the infinity norm preserves the inequality:

    \begin{align*}
        \infnorm{\valvector^\agentpol - \valvector^*}
        &\leq \infnorm{\discountfactor \, (\identitymatrix - \discountfactor\trans^{\agentpol,\envpol})^{-1} \mathbf{f}^\agentpol} \\
        &\leq \discountfactor \, \infnorm{(\identitymatrix - \discountfactor\trans^{\agentpol,\envpol})^{-1}} \, \infnorm{\mathbf{f}^\agentpol} & (\text{Submultiplicativity of } \ell_\infty)
    \end{align*}

    We bound the first term using a Neumann series expansion by the same approach as \cref{lemma:consecutive_policy_iteration_bound_rmdp}. Because $\trans^{\agentpol,\envpol}$ is a stochastic matrix, $\infnorm{(\trans^{\agentpol,\envpol})^i} = 1$. The series converges as follows:

    \begin{align*}
        \infnorm{(\identitymatrix - \discountfactor\trans^{\agentpol,\envpol})^{-1}}
        &= \infnorm{\sum_{i = 0}^\infty (\discountfactor \trans^{\agentpol,\envpol})^i}
        \leq \sum_{i = 0}^\infty \discountfactor^i \infnorm{(\trans^{\agentpol,\envpol})^i} = \frac{1}{1 - \discountfactor}.
    \end{align*}

    Finally, we substitute this bound into the main inequality:

    \begin{align*}
        \infnorm{\valvector^\agentpol - \valvector^*}
        &\leq \frac{\discountfactor}{1 - \discountfactor} \, \infnorm{\mathbf{f}^\agentpol} \\
        &= \frac{\discountfactor}{1 - \discountfactor} \cdot \max_{s \in \states} f(s,\agentpol(s)) & (\text{Definition of } \mathbf{f}^\agentpol \text{ and } \ell_\infty) \\
        &= \frac{\discountfactor}{1 - \discountfactor} f(\hat{s},\agentpol(\hat{s})) & (\text{Definition of } \hat{s})
    \end{align*}
    This completes the proof.
    \end{itemize}
\end{proof}

We combine the lower and upper bounds to relate the value errors of two policies. Specifically, we show that if both policies choose the same action at the state with the maximum potential, the value error decreases by at most a factor of $1 - \discountfactor$.

\begin{lemma} \label{lemma:lower_upper_combination_rmdp}
    Let $\agentpol$ and $\agentpol'$ be arbitrary policies such that $\valvector^{\agentpol} \succcurlyeq \valvector^{\agentpol'}$. Let $\hat{s} = \argmax_{s \in \states} f(s,\agentpol(s))$, and assume $\agentpol'(\hat{s}) = \agentpol(\hat{s})$. Then:
    \[
        \infnorm{\valvector^{\agentpol'} - \valvector^*} \geq (1 - \discountfactor) \infnorm{\valvector^{\agentpol} - \valvector^*}.
    \]
\end{lemma}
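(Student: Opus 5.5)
The plan is to chain the two bounds of Lemma~\ref{lemma:lowerbound_potential_rmdp}: the upper bound applied to $\agentpol$, and the lower bound applied to $\agentpol'$ at the single state $\hat{s}$.

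First, apply the upper bound of Lemma~\ref{lemma:upperbound_potential_rmdp} to $\agentpol$. Since $\hat{s}$ maximizes $f(s,\agentpol(s))$, this gives
\[
    \infnorm{\valvector^{\agentpol} - \valvector^*} \leq \frac{\discountfactor}{1-\discountfactor} f(\hat{s},\agentpol(\hat{s})),
\]
or equivalently $(1-\discountfactor)\infnorm{\valvector^{\agentpol} - \valvector^*} \leq \discountfactor f(\hat{s},\agentpol(\hat{s}))$.

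Second, apply the lower bound of the same lemma to $\agentpol'$ at the state $\hat{s}$. Because $\agentpol'(\hat{s}) = \agentpol(\hat{s})$, we obtain
\[
    \valvector^{\agentpol'}_{\hat{s}} - \valvector^*_{\hat{s}} \geq \discountfactor f(\hat{s},\agentpol(\hat{s})).
\]
By Lemma~\ref{lemma:optimal_value_lower_bound_rmdp}, every coordinate of $\valvector^{\agentpol'} - \valvector^*$ is non-negative. Hence the infinity norm of this vector is at least its $\hat{s}$-coordinate, so $\infnorm{\valvector^{\agentpol'} - \valvector^*} \geq \discountfactor f(\hat{s},\agentpol(\hat{s}))$.

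Chaining the two displays gives $\infnorm{\valvector^{\agentpol'} - \valvector^*} \geq (1-\discountfactor)\infnorm{\valvector^{\agentpol} - \valvector^*}$, which is the claim. The hypothesis $\valvector^{\agentpol} \succcurlyeq \valvector^{\agentpol'}$ is not needed for this inequality itself; it only matters when the lemma is later combined with the monotone sequence from Lemma~\ref{lemma:consecutive_policy_iteration_bound_rmdp}.

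The only point requiring care is that $f(\hat{s},\agentpol(\hat{s}))$ could be negative in principle. It is not: since $\agentpol^*$ attains the minimum in $\Bellman\valvector^*=\valvector^*$, we have $f \geq 0$ everywhere. In any case, the chain above holds regardless of sign, because the norm is always at least the non-negative coordinate at $\hat{s}$. I do not expect any real obstacle here.
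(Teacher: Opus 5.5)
Your proposal is correct and matches the paper's proof essentially step for step: the paper also chains $\infnorm{\valvector^{\agentpol'} - \valvector^*} \geq \valvector^{\agentpol'}_{\hat{s}} - \valvector^*_{\hat{s}} \geq \discountfactor f(\hat{s},\agentpol(\hat{s})) \geq (1-\discountfactor)\infnorm{\valvector^{\agentpol} - \valvector^*}$, applying the lower bound of Lemma~\ref{lemma:lowerbound_potential_rmdp} to $\agentpol'$ at $\hat{s}$ and its upper bound to $\agentpol$. Your remark that the hypothesis $\valvector^{\agentpol} \succcurlyeq \valvector^{\agentpol'}$ is never used in the inequality is accurate, since the paper's proof does not invoke it either.
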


\begin{proof}
    We bound the value error as follows:
    \begin{align*}
        \infnorm{\valvector^{\agentpol'} - \valvector^*} 
        &\geq \valvector^{\agentpol'}_{\hat{s}} - \valvector^*_{\hat{s}} & (\text{Definition of } \ell_\infty \text{ norm}) \\
        &\geq \discountfactor f(\hat{s},\agentpol'(\hat{s})) & (\text{Lower bound from Lemma~\ref{lemma:lowerbound_potential_rmdp}}) \\
        &= \discountfactor f(\hat{s},\agentpol(\hat{s})) & (\text{Since } \agentpol'(\hat{s}) = \agentpol(\hat{s})) \\
        &\geq (1 - \discountfactor) \infnorm{\valvector^{\agentpol} - \valvector^*} & (\text{Upper bound from Lemma~\ref{lemma:lowerbound_potential_rmdp}})
    \end{align*}
    This completes the proof.
\end{proof}

\subsection{Elimination of Actions with Maximum Potential}
We use the convergence rate found in Lemma~\ref{lemma:consecutive_policy_iteration_bound_rmdp} to show that the action selected in the state with maximum potential cannot repeat in the algorithm for more than $L := \lceil \log_{\discountfactor}(1-\discountfactor) \rceil$ consecutive iterations.

\begin{lemma} \label{lemma:action_elimination}
    Let $\agentpol^{0}, \ldots, \agentpol^{T}$ denote the sequence of agent policies generated by \texttt{RMDP-PI}. For any policy $\agentpol^l$ in this sequence, let $\hat{s} = \argmax_{s \in \states} f(s,\agentpol^l(s))$. Then, $\agentpol^k(\hat{s}) \neq \agentpol^l(\hat{s})$ for all $k > l + L$.
\end{lemma}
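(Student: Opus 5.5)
We argue by contradiction, combining \cref{lemma:lower_upper_combination_rmdp} with the exponential convergence of \cref{lemma:consecutive_policy_iteration_bound_rmdp}. Fix $l$ and $\hat{s} = \argmax_{s} f(s,\agentpol^l(s))$, and suppose some $k > l + L$ has $\agentpol^k(\hat{s}) = \agentpol^l(\hat{s})$.

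First, we check the hypotheses of \cref{lemma:lower_upper_combination_rmdp} with $\agentpol = \agentpol^l$ and $\agentpol' = \agentpol^k$. The first item of \cref{lemma:consecutive_policy_iteration_bound_rmdp} gives $\valvector^{k} \preccurlyeq \valvector^{l}$ by chaining the one-step monotonicity from $l$ to $k$. The action condition at $\hat{s}$ is our assumption. The lemma then yields
\[
\infnorm{\valvector^{k} - \valopt} \geq (1-\discountfactor)\infnorm{\valvector^{l} - \valopt}.
\]

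Second, we need an upper bound on $\infnorm{\valvector^k - \valopt}$ in terms of $\infnorm{\valvector^l - \valopt}$, not in terms of $\valvector^0$. The proof of the second item of \cref{lemma:consecutive_policy_iteration_bound_rmdp} shows the one-step contraction $\infnorm{\valvector^{t} - \valopt} \le \discountfactor \infnorm{\valvector^{t-1}-\valopt}$. Equivalently, we can restart the algorithm at $\agentpol^l$ as its initial policy. Iterating gives $\infnorm{\valvector^{k} - \valopt} \leq \discountfactor^{k-l}\infnorm{\valvector^{l}-\valopt}$.

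Third, we combine the two bounds. If $\infnorm{\valvector^l - \valopt} > 0$, they give $(1-\discountfactor) \le \discountfactor^{k-l}$. Since $k - l > L \geq \log_\discountfactor(1-\discountfactor)$ and $\discountfactor<1$, we have $\discountfactor^{k-l} < \discountfactor^{L} \le 1-\discountfactor$, which is a contradiction. The strict inequality needs $\discountfactor>0$; the case $\discountfactor=0$ is trivial because the first improvement step is already optimal.

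The main obstacle is the degenerate case $\infnorm{\valvector^l - \valopt} = 0$. Then $\agentpol^l$ is already optimal, and the upper bound of \cref{lemma:upperbound_potential_rmdp} only gives $f(\hat{s},\agentpol^l(\hat{s})) \ge 0$, so there is no contradiction. Here I would argue that once $\valvector^l = \valopt$, the improvement step reproduces a policy whose value is optimal, and the algorithm stops within one iteration. So no index $k > l+L$ exists in the generated sequence and the statement is vacuous. This requires care with tie-breaking in the $\argmin$: we assume the improvement step keeps the current action on ties, which makes $\agentpol^{l+1}=\agentpol^l$ and ends the run. If needed, one could instead restrict the lemma to non-optimal $\agentpol^l$, which suffices for the counting argument of the theorem.
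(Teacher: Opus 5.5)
Your proposal is correct and follows the paper's proof essentially step for step. You contradict the $(1-\discountfactor)$ lower bound from \cref{lemma:lower_upper_combination_rmdp} with the $\discountfactor^{k-l}$ contraction (restarted at $\agentpol^l$), and you handle the case $\infnorm{\valvector^l-\valopt}=0$ separately. Your treatment of that degenerate case is slightly more careful than the paper's, which simply asserts that the algorithm terminates once $\agentpol^l$ is optimal. That assertion does rely on a tie-breaking rule that keeps the current action in the $\argmin$, since $L$ can equal $1$ (e.g.\ $\discountfactor=1/2$).
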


\begin{proof} 
    Assume for contradiction that there exists $k > l + L$ such that $\agentpol^k(\hat{s}) = \agentpol^l(\hat{s})$. Because the value vectors are non-increasing, $\valvector^{\agentpol^l} \succcurlyeq \valvector^{\agentpol^k}$. Applying Lemma~\ref{lemma:lower_upper_combination_rmdp} yields:
    \[
        \infnorm{\valvector^{\agentpol^k} - \valvector^*} \geq (1 - \discountfactor) \infnorm{\valvector^{\agentpol^l} - \valvector^*}.
    \]
    However, the global convergence rate from Lemma~\ref{lemma:consecutive_policy_iteration_bound_rmdp} dictates:
    \[
        \infnorm{\valvector^{\agentpol^k} - \valvector^*} \leq \discountfactor^{k-l} \infnorm{\valvector^{\agentpol^l} - \valvector^*}.
    \]
    If $\infnorm{\valvector^{\agentpol^l} - \valvector^*} = 0$, then $\agentpol^l$ is optimal, and the algorithm terminates at iteration $l$. Because the sequence continues to $k > l$, we must have $\infnorm{\valvector^{\agentpol^l} - \valvector^*} > 0$. Combining the bounds and dividing by this strictly positive term yields:
    \[
        1 - \discountfactor \leq \discountfactor^{k-l}.
    \]
    Taking the logarithm with base $\discountfactor$ (and reversing the inequality because $\discountfactor < 1$) results in:
    \[
        \log_\discountfactor(1-\discountfactor) \geq k - l \implies L \geq k - l.
    \]
    This contradicts the assumption that $k > l + L$.
\end{proof}

Finally, we combine the elimination steps to establish a polynomial upper bound on the total number of iterations. 

\begin{lemma} \label{thm:rmdp:iterations}
    \texttt{RMDP-PI} terminates with an optimal policy in $\mathcal{O}\left(n \cdot m \cdot \frac{\log{(1-\discountfactor)}}{\log{\discountfactor}}\right)$ iterations.
\end{lemma}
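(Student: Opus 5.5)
The plan is to turn the elimination property of Lemma~\ref{lemma:action_elimination} into a counting argument over blocks of iterations. We also need a separate argument that, when the algorithm stops, the policy it returns is optimal.

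For the iteration bound, split the run $\agentpol^0,\agentpol^1,\ldots,\agentpol^T$ into consecutive blocks of $L+1$ iterations, with $L=\lceil \log_\discountfactor(1-\discountfactor)\rceil$. At the start of each block, at some iteration $l$, Lemma~\ref{lemma:action_elimination} supplies a state $\hat s$ maximising $f(\cdot,\agentpol^l(\cdot))$. It guarantees that the pair $(\hat s,\agentpol^l(\hat s))$ is never chosen again after iteration $l+L$.

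Next I claim that the pairs eliminated in different blocks are distinct. Suppose block $j$ starts at $l_j$ and eliminates $(\hat s_j,a_j)$, where $a_j=\agentpol^{l_j}(\hat s_j)$. Then $(\hat s_j,a_j)$ is in use at iteration $l_j$. For every earlier block $i<j$ we have $l_j \ge l_i+L+1$, so the pair $(\hat s_i,a_i)$ cannot be in use at $l_j$. Hence $(\hat s_j,a_j)\neq(\hat s_i,a_i)$.

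There are only $nm$ state-action pairs, so there are at most $nm$ complete blocks. This gives $T\le nm(L+1)=\mathcal{O}\bigl(nm\,\log(1-\discountfactor)/\log\discountfactor\bigr)$.

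One detail needs checking before the elimination lemma can be applied at a block start $l$: the run must not already have terminated. If $\valvector^{l}=\valopt$, the iteration stops within one step, since we show below that the policy stabilises at that point. In that case the remaining block simply accounts for the final iteration.

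For optimality at termination, suppose $\agentpol^{t}=\agentpol^{t-1}$. Then the policy-improvement step selected $\agentpol^{t-1}(s)$ as the minimiser at every $s$ with respect to $\valvector^{t-1}$. Therefore
\[
(\Bellman\valvector^{t-1})_s=\cost(s)+\discountfactor\max_{\pvector\in\uncert(s,\agentpol^{t-1}(s))}\pvector^\top\valvector^{t-1}=\valvector^{t-1}_s .
\]
So $\valvector^{t-1}$ is a fixed point of $\Bellman$, and by uniqueness in Lemma~\ref{lemma:contractive_value_iteration_rmdp} it equals $\valopt$. A returned policy achieving $\valopt$ is optimal.

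I would handle ties in the $\argmin$ by assuming the algorithm keeps the current action whenever it is among the minimisers, as is standard. Without this convention the termination test could fail even at the optimum, and the algorithm could cycle.

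The step I expect to need the most care is the distinctness argument. Strictly, Lemma~\ref{lemma:action_elimination} is stated for $k>l+L$, so the blocks must have length exactly $L+1$ for the inclusion $l_j\ge l_i+L+1$ to hold. I also need that $\hat s$ is well defined, with any maximiser acceptable. Both points are routine but must be stated explicitly.
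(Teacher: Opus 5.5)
Your proposal is correct and takes essentially the same approach as the paper: apply Lemma~\ref{lemma:action_elimination} once per window of $L+1$ iterations, show that the eliminated state-action pairs are distinct, and conclude $T \le nm(L+1)$. The paper leaves several details implicit that you handle correctly: the explicit distinctness argument, the fixed-point argument that $\valvector^{t-1}=\valopt$ when $\agentpol^t=\agentpol^{t-1}$, and the keep-the-current-action tie-breaking convention.
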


\begin{proof}
    There are exactly $n \cdot m$ distinct state-action pairs. By Lemma~\ref{lemma:action_elimination}, if an action $\hat{a}$ maximizes the potential function at state $\hat{s}$ during iteration $l$, the policy will not select $\hat{a}$ at $\hat{s}$ for any iteration $k \geq l + L + 1$. Consequently, the algorithm permanently eliminates at least one state-action pair from the set of potential maximizers every $L$ iterations. Therefore, we bound the total number of iterations $T$ as follows:
    \[
        T \leq n \cdot m \cdot (L + 1) = 
        \mathcal{O}(n \cdot m \cdot \log_\discountfactor(1-\discountfactor)) = \mathcal{O}(n \cdot m \cdot \frac{\log(1 - \discountfactor)}{\log(\discountfactor)}).
    \]
\end{proof}

\subsection{Reduction to One RMC Oracle Call} 
We show that the maximization part of the policy improvement step can be computed for all states and actions simultaneously with one call to the RMC oracle with the same size as the input size ($\mathcal{O}(nm)$).

\begin{lemma} \label{lemma:line10_computation}
    The policy improvement step in Line 10 of \cref{algo:RMDPPI} for all states $s \in \states$ can be computed with one call to an RMC oracle on an RMC with $\mathcal{O}(nm)$ states.
\end{lemma}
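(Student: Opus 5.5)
We construct the batch RMC $\rmc_{\mathrm{batch}}$ described in the overview and then read off the required quantities from its optimal value vector. The state space has one transient state $z_{s,a}$ for each pair $(s,a)\in\states\times\actions$ and one absorbing state $x_{s'}$ for each $s'\in\states$, so it has $nm+n=\mathcal{O}(nm)$ states. At $z_{s,a}$ we set $\succ(z_{s,a})=\{x_{s'} : s'\in\succ(s,a)\}$ and take the uncertainty set to be $\uncert(s,a)$, transported along $s'\mapsto x_{s'}$. If $\uncert(s,a)$ is an $L_p$ ball, the nominal vector and radius carry over unchanged, so the RMC stays in the same class and keeps polynomial bit-size. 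The cost of $z_{s,a}$ is $\cost(s)$.

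Each $x_{s'}$ gets a deterministic self-loop, i.e.\ the singleton uncertainty set $\{\mathbf{1}_{x_{s'}}\}$ (a degenerate ball of radius $0$). We want its value to equal $\valvector^{t-1}_{s'}$. For an absorbing state the value is $\cost(x_{s'})/(1-\discount)$, so we set $\cost(x_{s'}) \defas (1-\discount)\valvector^{t-1}_{s'}$. The vector $\valvector^{t-1}$ was computed in the previous iteration by solving a linear system with rational data, so these costs are rationals of polynomial size.

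Next we verify the values. For any adversary policy the absorbing states have value $\cost(x_{s'})\sum_k \discount^k = \valvector^{t-1}_{s'}$, and the adversary has no choice there. At $z_{s,a}$, one Bellman step leads to an absorbing state, so the robust Bellman equation for the RMC gives
\[
\optval_{\mathrm{batch}}(z_{s,a}) = \cost(s) + \discount \max_{\pvector\in\uncert(s,a)} \sum_{s'} \pvector(s')\, \optval_{\mathrm{batch}}(x_{s'}) = \cost(s) + \discount \max_{\pvector\in\uncert(s,a)} \pvector^\top \valvector^{t-1}.
\]
This is exactly the quantity minimized over $a$ in Line 10 of \cref{algo:RMDPPI}. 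The maximum is attained by compactness, and the oracle's optimal adversary policy at $z_{s,a}$ supplies the maximizing $\pvector$.

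From the oracle output we then compute $\agentpol^t(s) = \argmin_a \optval_{\mathrm{batch}}(z_{s,a})$ for every $s$. This takes $\mathcal{O}(nm)$ comparisons, with ties broken so that the previous action is kept; that convention makes the termination test of \cref{algo:RMDPPI} well defined.

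The one delicate point is the discount factor. The oracle solves the batch RMC with the same constant $\discount$, so we must scale the absorbing costs by $(1-\discount)$ rather than assign value directly, as done above. The case $\discount=0$ is trivial, since then Line 10 reduces to $\cost(s)$ and needs no oracle call. Beyond this, we only need to check that the encoding remains polynomial in the input. No other obstacle is expected.
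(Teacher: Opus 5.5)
Your proposal is correct and is essentially the paper's proof: the same batch RMC with transient states $z_{s,a}$ carrying $\uncert(s,a)$ and cost $\cost(s)$, absorbing self-loops $x_{s'}$ with cost $(1-\discount)\valvector^{t-1}_{s'}$ so that their value is $\valvector^{t-1}_{s'}$, one Bellman step giving the Line~10 quantity, and an $\argmin$ over $a$ afterwards. Your extra remarks on tie-breaking and $\discount=0$ are harmless; the claim that $\valvector^{t-1}$ has polynomial-size rational entries assumes the oracle returns rational transition probabilities, which need not hold for general compact sets, but the lemma as stated does not depend on it.
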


\begin{proof}
    In Line 10 of \texttt{RMDP-PI}, the algorithm updates the policy by computing:
    $$ \agentpol^{t}(s) \gets \argmin_{a \in \actions} \max_{\pvector \in \uncert(s,a)} \{ \cost(s) + \discountfactor \pvector^\top \val^{t-1} \} $$
    
    To evaluate this inner maximization for all state-action pairs using a single RMC oracle call, we construct a batch RMC $\rmc_{\mathrm{batch}} = (\states_{\mathrm{batch}}, \cost_{\mathrm{batch}}, \succ_{\mathrm{batch}}, \uncert_{\mathrm{batch}})$. We introduce a set of transient states to represent each state-action pair, and a set of absorbing states to represent the original states. Let $Z = \{z_{s,a} \mid s \in \states, a \in \actions\}$ be the transient states, and $X = \{x_s \mid s \in \states\}$ be the absorbing states. We formulate $\rmc_{\mathrm{batch}}$ as follows:

    \begin{itemize}
        \item $\states_{\mathrm{batch}} = Z \cup X$. The total number of states is $n \cdot m + n = \mathcal{O}(nm)$.
        \item $\cost_{\mathrm{batch}}(z_{s,a}) = \cost(s)$ for all $z_{s,a} \in Z$, and for all $x_s \in X$, $\cost_{\mathrm{batch}}(x_s) = (1-\discountfactor)\val^{t-1}(s)$.
        \item $\succ_{\mathrm{batch}}(z_{s,a}) = \{x_{s'} \mid s' \in \succ(s,a)\}$ for all $z_{s,a} \in Z$, and for all $x_s \in X$, $\succ_{\mathrm{batch}}(x_s) = \{x_s\}$ (making the original states absorbing).
        \item $\uncert_{\mathrm{batch}}(z_{s,a}) = \uncert(s,a)$ for all $z_{s,a} \in Z$, and for all $x_s \in X$, $\uncert_{\mathrm{batch}}(x_s)$ is restricted to the deterministic transition to $x_s$.
    \end{itemize}
    
    Because every state $x_s \in X$ deterministically transitions to itself with a cost of $(1-\discountfactor)\val^{t-1}(s)$, its value in $\rmc_{\mathrm{batch}}$ is $\val^{t-1}(s)$. 
    
    Consequently, the value of any transient state $z_{s,a} \in Z$ satisfies the robust Bellman equation:
    $$ \valvector_{\mathrm{batch}}(z_{s,a}) = \cost_{\mathrm{batch}}(z_{s,a}) + \discountfactor \max_{\pvector \in \uncert_{\mathrm{batch}}(z_{s,a})} \pvector^\top \valvector_{\mathrm{batch}} = \cost(s) + \discountfactor \max_{\pvector \in \uncert(s,a)} \pvector^\top \val^{t-1} $$
    
    Therefore, we construct $\rmc_{\mathrm{batch}}$ and query the RMC oracle once to find the value of all states simultaneously. Then, for each state $s \in \states$, we extract the optimal action by evaluating $\agentpol^{t}(s) \gets \argmin_{a \in \actions} \valvector_{\mathrm{batch}}(z_{s,a})$. This shows that the policy improvement step for all states requires only a single RMC oracle call on an RMC of size $\mathcal{O}(nm)$.
\end{proof}

Because the policy improvement step in each iteration runs in strongly polynomial time, and there is a strongly polynomial upper bound on the number of iterations, we now provide the proof of the \cref{theorem:oracle_access}.

\begin{proof}[Proof of \Cref{theorem:oracle_access}]
    By \cref{thm:rmdp:iterations}, the \texttt{RMDP-PI} algorithm converges to the optimal policy in $\mathcal{O}\left(n \cdot m \cdot \frac{\log(1-\discountfactor)}{\log\discountfactor}\right)$ iterations. 
    
    We count the number of RMC oracle calls required in each iteration. During the policy evaluation step (Line 7 of Algorithm~\ref{algo:RMDPPI}), the algorithm solves the RMC for the current policy, which requires one call to the RMC oracle. Then, during the policy improvement step, \cref{lemma:line10_computation} shows that the algorithm can process all states simultaneously with one additional RMC oracle call, whose size is linear according to the input size ($\mathcal{O}(nm)$). This results in two RMC oracle queries per iteration.
    
    Because the total number of iterations is bounded by $\mathcal{O}\left(n \cdot m \cdot \frac{\log(1-\discountfactor)}{\log\discountfactor}\right)$, and each iteration requires two RMC oracle solutions, the algorithm finds the optimal policy with a polynomial number of calls to the RMC oracle.
\end{proof}

\section{Proof of \texorpdfstring{\cref{thm:strongly_poly}}{Theorem 2}: Complexity of RMCs with \texorpdfstring{$L_1, L_\infty$}{L1, L-infinity} Uncertainty Sets} \label{app:thm2}

The previous section established that RMDPs with a constant discount factor can be solved in strongly polynomial time if we have oracle access to an RMC solver. This section focuses on solving Robust Markov Chains (RMCs) with $L_1, L_\infty$ uncertainty sets.

We demonstrate that RMCs under $L_1$ and $L_\infty$ uncertainty sets can be solved in strongly polynomial time using the policy iteration procedure, \texttt{RMC-PI}, detailed in \cref{algo:RMCPI}. While the strongly polynomial solvability of the $L_\infty$ case was previously established by Asadi et al.\ \cite{asadi2026strongly}, our analysis yields a tighter bound by a factor of $|\states|$.

Before diving into the main proof, we briefly recall some standard results for Markov models that apply regardless of the specific uncertainty set \cite{puterman94,iyengar2005robust}. 

\paragraph{Robust Bellman Optimality Operator.} Given an RMC $\rmc$ and a value vector $\valvector$, the robust Bellman operator $\Bellman$ (for Algorithm \ref{algo:RMCPI}) is defined as:

\begin{equation} \label{Eq:BellmanRMC}
    (\Bellman \valvector)_s = \costvector_s + \discountfactor \sup_{\pvector \in \uncertaintyset_s} \pvector^\top \valvector.
\end{equation}

Note that because $\uncertaintyset_s$ is compact and the inner product $\pvector^\top \valvector$ is a continuous function with respect to $\pvector$, the supremum over this set is attainable. Consequently, we can use $\sup_{\pvector \in \uncertaintyset_{s}}$ and $\max_{\pvector \in \uncertaintyset_{s}}$ interchangeably.

It is a well-known result that policy iteration converges monotonically to the unique fixed point of $\Bellman$ at an exponential rate. We summarize the core contractive and monotonic properties of this operator in the following lemma.

\begin{lemma} \label{lemma:contractive_value_iteration} \label{lemma:bellman} \label{lemma:value_iteration_increasing} \label{corollary:unique_optimal}
    The following statements hold:
    \begin{itemize}
        \item For any $\uvector, \vvector \in \R^n$, $\infnorm{\Bellman \uvector - \Bellman \vvector} \leq \discountfactor \infnorm{\uvector - \vvector}$.
        \item For any adversary policy $\envpol$ with value vector $\val^{\envpol}$, $\Bellman \val^{\envpol} \succcurlyeq \val^{\envpol}$.
        \item There is a unique optimal value vector $\valvector^* \in \R^n$ such that $\Bellman \valvector^* = \valvector^*$.
    \end{itemize}
\end{lemma}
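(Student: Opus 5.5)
The plan mirrors the proof of \cref{lemma:contractive_value_iteration_rmdp}, specialised to one action and with the roles of the players swapped: the adversary now maximises, so the inequality for a fixed adversary policy points the other way.

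\textbf{Contraction.} Fix a state $s$ and vectors $\uvector,\vvector$. I will use $\sup f - \sup g \le \sup (f-g)$ to get
\[
(\Bellman \uvector)_s - (\Bellman \vvector)_s \le \discountfactor \sup_{\pvector\in\uncertaintyset_s} \pvector^\top(\uvector-\vvector).
\]
Since $\pvector$ is a probability distribution, $\pvector^\top(\uvector-\vvector) \le \infnorm{\uvector-\vvector}$. Swapping $\uvector$ and $\vvector$ gives the symmetric bound, and taking the maximum over $s$ yields the $\discountfactor$-contraction.

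\textbf{Fixed-policy inequality.} A positional adversary policy $\envpol$ picks a fixed $\pvector^\envpol_s \in \uncertaintyset_s$ at each state. Its value is the Markov chain value $\val^\envpol = (\identitymatrix - \discountfactor \trans^\envpol)^{-1}\costvector$. This inverse exists by the Neumann series, exactly as in \cref{lemma:consecutive_policy_iteration_bound_rmdp}, so $\val^\envpol_s = \costvector_s + \discountfactor (\pvector^\envpol_s)^\top \val^\envpol$. Because $\pvector^\envpol_s$ is feasible in the supremum defining $(\Bellman \val^\envpol)_s$, we get $(\Bellman\val^\envpol)_s \ge \val^\envpol_s$ for every $s$, which is the claim.

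\textbf{Fixed point.} $\R^n$ with the infinity norm is complete, and $\Bellman$ is a contraction by the first item. Banach's fixed-point theorem therefore gives a unique $\valvector^*$ with $\Bellman\valvector^*=\valvector^*$.

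\textbf{Main obstacle.} There is essentially none. The only care point is the sign convention: for adversary policies the supremum direction yields $\Bellman\val^\envpol \succcurlyeq \val^\envpol$, not $\preccurlyeq$. One must also use compactness of $\uncertaintyset_s$ so that the suprema are attained and $\Bellman$ is well defined and finite.
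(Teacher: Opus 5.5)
Your proposal is correct and follows essentially the same route as the paper. You bound the difference of suprema by $\discountfactor \sup_{\pvector} \pvector^\top(\uvector-\vvector)$ (the paper does this by evaluating the $\vvector$-problem at the maximizer for $\uvector$, a concrete instance of your $\sup f - \sup g \le \sup(f-g)$ step), obtain the fixed-policy inequality from feasibility of $\pvector^\envpol_s$, and get existence and uniqueness from Banach's fixed-point theorem.
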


\begin{proof}
    \begin{itemize}
        \item Fix an arbitrary state $s \in \states$. Without loss of generality, assume $(\Bellman \uvector)_s \geq (\Bellman \vvector)_s$. Let $\pvector'$ and $\pvector''$ be the respective maximizers in $\uncertaintyset_s$ for $\uvector$ and $\vvector$. We can bound the difference as follows:
        \begin{align*}
            (\Bellman\uvector)_s - (\Bellman\vvector)_s
            &= \discountfactor ({\pvector'}^\top \uvector - {\pvector''}^\top \vvector) && \text{(Definition of $\Bellman$)} \\
            &\leq \discountfactor ({\pvector'}^\top \uvector - {\pvector'}^\top \vvector) && \text{(Optimality of $\pvector''$ for $\vvector$)} \\
            &= \discountfactor {\pvector'}^\top (\uvector - \vvector) && \text{(Factorization)} \\
            &\leq \discountfactor \infnorm{\uvector - \vvector}. && \text{($\pvector'$ is a probability distribution)}
        \end{align*}
        Since this holds for any state $s$, the contraction property follows.
        
        \item For the second claim, observe that for any state $s \in \states$:
        \begin{align*}
            (\Bellman \valvector^{\envpol})_s 
            &= \costvector_s + \discountfactor \max_{\pvector \in \uncertaintyset_s} \pvector^\top \valvector^{\envpol} && \text{(Definition of $\Bellman$)} \\
            &\geq \costvector_s + \discountfactor ({\pvector^{\envpol}_s})^\top \valvector^{\envpol} && \text{($\pvector^{\envpol}_s$ is feasible in $\uncertaintyset_s$)} \\
            &= \valvector^{\envpol}_s. && \text{(Definition of $\valvector^\envpol$)}
        \end{align*}
        Because this applies to all states simultaneously, we conclude $\Bellman \valvector^{\envpol} \succcurlyeq \valvector^{\envpol}$.
        
        \item Finally, existence and uniqueness follow immediately from applying the Banach fixed-point theorem to our contraction mapping.
    \end{itemize}
\end{proof}

Next, we show that the optimal value vector $\valvector^*$ is an upper bound in each entry for the value vector induced by any arbitrary adversary policy.

\begin{lemma} \label{lemma:optimal_value_upper_bound_rmc}
    For any arbitrary adversary policy $\envpol$ with value vector $\valvector^\envpol$, the optimal value vector $\valvector^*$ satisfies $\valvector^* \succcurlyeq \valvector^\envpol$.
\end{lemma}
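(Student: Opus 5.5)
The plan is to mirror the proof of \cref{lemma:optimal_value_lower_bound_rmdp}, with the inequalities reversed because in an RMC the adversary maximizes. The argument has three steps: establish monotonicity of the RMC Bellman operator $\Bellman$ from \Cref{Eq:BellmanRMC}, iterate it starting from $\valvector^\envpol$, and pass to the limit using the contraction property in \cref{lemma:contractive_value_iteration}.

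\textbf{Monotonicity.} Take $\uvector \preccurlyeq \vvector$. Every $\pvector \in \uncertaintyset_s$ is a probability distribution, so $\pvector \geq \mathbf{0}$, and hence $\pvector^\top \uvector \leq \pvector^\top \vvector$ for every feasible $\pvector$. Taking the supremum over $\uncertaintyset_s$ on both sides preserves this inequality. Adding $\costvector_s$ then gives $(\Bellman \uvector)_s \leq (\Bellman \vvector)_s$ for every $s$, so $\Bellman \uvector \preccurlyeq \Bellman \vvector$.

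\textbf{Iteration.} By the second item of \cref{lemma:contractive_value_iteration}, $\Bellman \valvector^\envpol \succcurlyeq \valvector^\envpol$. Applying the monotone operator $\Bellman$ repeatedly to this inequality gives the non-decreasing chain
\[
\valvector^\envpol \preccurlyeq \Bellman \valvector^\envpol \preccurlyeq \Bellman^2 \valvector^\envpol \preccurlyeq \cdots,
\]
so $\Bellman^k \valvector^\envpol \succcurlyeq \valvector^\envpol$ for every $k \geq 1$.

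\textbf{Limit.} $\Bellman$ is a $\discountfactor$-contraction with unique fixed point $\valvector^*$, by the first and third items of \cref{lemma:contractive_value_iteration}. Therefore
\[
\infnorm{\Bellman^k \valvector^\envpol - \valvector^*} \leq \discountfactor^k \infnorm{\valvector^\envpol - \valvector^*} \to 0 .
\]
Componentwise non-strict inequalities are preserved under limits, so letting $k \to \infty$ yields $\valvector^* \succcurlyeq \valvector^\envpol$.

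There is no real obstacle. The only point that needs care is the monotonicity step: it relies on the nonnegativity of distributions in $\uncertaintyset_s$, and on the fact that the supremum over a fixed set preserves pointwise inequalities.
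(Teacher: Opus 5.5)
Your proposal is correct and follows the paper's proof: monotonicity of $\Bellman$ from the nonnegativity of distributions, iterating $\Bellman$ on the inequality $\Bellman \valvector^\envpol \succcurlyeq \valvector^\envpol$, and passing to the limit via the contraction property and the unique fixed point $\valvector^*$. Your explicit bound $\infnorm{\Bellman^k \valvector^\envpol - \valvector^*} \leq \discountfactor^k \infnorm{\valvector^\envpol - \valvector^*}$ makes the convergence step slightly more concrete than the paper's version, but the argument is otherwise the same.
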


\begin{proof}
    We first establish the monotonicity of the robust Bellman operator $\Bellman$ for RMCs. Let $\uvector, \vvector \in \R^{|\states|}$ be two value vectors such that $\uvector \preccurlyeq \vvector$. For any state $s \in \states$:
    \begin{align*}
        (\Bellman \uvector)_s &= \costvector_s + \discountfactor \max_{\pvector \in \uncert(s)} \pvector^\top \uvector \\
        &\leq \costvector_s + \discountfactor \max_{\pvector \in \uncert(s)} \pvector^\top \vvector & (\text{Since } \uvector \preccurlyeq \vvector \text{ and } \pvector \geq \mathbf{0}) \\
        &= (\Bellman \vvector)_s
    \end{align*}
    Thus, $\Bellman$ is a monotonic operator, meaning $\uvector \preccurlyeq \vvector \implies \Bellman \uvector \preccurlyeq \Bellman \vvector$.
    
    By \cref{lemma:bellman}, we know that evaluating the operator at the value vector of any adversary policy $\envpol$ yields $\Bellman \valvector^\envpol \succcurlyeq \valvector^\envpol$. By repeatedly applying the monotonic operator $\Bellman$ to both sides of this inequality, we generate a chain of non-decreasing vectors:
    \begin{align*}
        \valvector^\envpol \preccurlyeq \Bellman \valvector^\envpol &\preccurlyeq \Bellman^2 \valvector^\envpol \\
        \valvector^\envpol &\preccurlyeq \Bellman^k \valvector^\envpol \quad \text{for any integer } k \geq 1
    \end{align*}
    Because $\Bellman$ is a contraction mapping, the sequence $\Bellman^k \valvector^\envpol$ converges to the unique optimal fixed point $\valvector^*$ as $k \to \infty$. Taking the limit of both sides preserves the non-strict inequality:
    \begin{align*}
        \valvector^\envpol &\preccurlyeq \lim_{k \to \infty} \Bellman^k \valvector^\envpol \\
        \valvector^\envpol &\preccurlyeq \valvector^*
    \end{align*}
    This establishes that $\valvector^* \succcurlyeq \valvector^\envpol$, completing the proof.
\end{proof}

By the properties established in Lemmas~\ref{lemma:value_iteration_increasing},~\ref{lemma:optimal_value_upper_bound_rmc}, we now show that the sequence of value vectors produced by the \texttt{RMC-PI} algorithm is strictly non-decreasing and converges at an exponential rate.

\begin{lemma} \label{lemma:consecutive_policy_iteration_bound} \label{lemma:policy_iteration_less_than_value_iteration} \label{lemma:policy_iteration_exponential_upperbound}
    Suppose $\envpol^0, \envpol^1, \dots$ represents the sequence of policies obtained via \texttt{RMC-PI} (Algorithm~\ref{algo:RMCPI}), with $\valvector^t$ denoting the corresponding value vector for policy $\envpol^t$. For any iteration $t$, the following properties hold:
    \begin{itemize}
        \item $\valvector^{t+1} \succcurlyeq \valvector^{t}$
        % \item $\valvector^{t+1} \succcurlyeq \Bellman \valvector^t$
        \item $\infnorm{\valvector^t - \valvector^*} \leq \discountfactor^t \infnorm{\valvector^0 - \valvector^*}$
    \end{itemize}
\end{lemma}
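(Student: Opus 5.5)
The plan follows the proof of \cref{lemma:consecutive_policy_iteration_bound_rmdp}, with the inequalities reversed because the adversary now maximizes. Let $\trans^t$ denote the transition matrix induced by $\envpol^t$, so that $\valvector^t = (\identitymatrix - \discountfactor \trans^t)^{-1}\costvector$. By the policy-improvement step, $\envpol^{t+1}(s)$ maximizes $\pvector^\top \valvector^t$ over $\uncert(s)$. Hence $\costvector + \discountfactor \trans^{t+1}\valvector^t = \Bellman \valvector^t$, where $\Bellman$ is the operator in \eqref{Eq:BellmanRMC}; the maximum is attained by compactness.

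For monotonicity, I will write
\[
\valvector^{t+1} - \valvector^t = (\identitymatrix - \discountfactor \trans^{t+1})^{-1}\bigl[\costvector + \discountfactor \trans^{t+1}\valvector^t - \valvector^t\bigr] = (\identitymatrix - \discountfactor \trans^{t+1})^{-1}\bigl[\Bellman\valvector^t - \valvector^t\bigr].
\]
This is obtained by multiplying $\valvector^{t+1} - \valvector^t$ by $(\identitymatrix - \discountfactor \trans^{t+1})$ and its inverse, exactly as in the RMDP case. By \cref{lemma:bellman}, the bracket is entrywise non-negative. The Neumann series $\sum_k (\discountfactor \trans^{t+1})^k$ shows that the inverse is a non-negative matrix. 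Therefore $\valvector^{t+1} \succcurlyeq \valvector^t$.

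Next I will show the intermediate inequality $\valvector^{t+1} \succcurlyeq \Bellman \valvector^t$. Substituting $\costvector = (\identitymatrix - \discountfactor\trans^{t+1})\valvector^{t+1}$ into the expression for $\Bellman\valvector^t$ gives
\[
\valvector^{t+1} - \Bellman\valvector^t = \discountfactor \trans^{t+1}(\valvector^{t+1} - \valvector^t) \succcurlyeq \mathbf{0},
\]
where the sign follows from the first item and the non-negativity of $\trans^{t+1}$.

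For the rate, I will use the sandwich $\Bellman\valvector^{t-1} \preccurlyeq \valvector^t \preccurlyeq \valvector^*$. The upper bound comes from \cref{lemma:optimal_value_upper_bound_rmc}. This sandwich gives $\infnorm{\valvector^t - \valvector^*} \le \infnorm{\Bellman\valvector^{t-1} - \valvector^*}$. Since $\valvector^* = \Bellman\valvector^*$, the contraction property in \cref{lemma:contractive_value_iteration} bounds the right-hand side by $\discountfactor\infnorm{\valvector^{t-1} - \valvector^*}$. Induction on $t$, with the trivial base case $t = 0$, then completes the proof.

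The only delicate point is the orientation of the inequalities. Everything is the mirror image of the RMDP argument: the sandwich now lies below $\valvector^*$ rather than above it. No genuine obstacle is expected.
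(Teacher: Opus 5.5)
Your proposal is correct and follows the paper's proof essentially step for step: the same factorization through $(\identitymatrix - \discountfactor \trans^{t+1})^{-1}$ with the Neumann-series non-negativity argument, the same intermediate bound $\valvector^{t+1} \succcurlyeq \Bellman \valvector^t$, and the same sandwich-plus-contraction induction. Your equality $\costvector + \discountfactor \trans^{t+1}\valvector^t = \Bellman\valvector^t$ is slightly sharper than the paper's $\succcurlyeq$, and it holds because the improvement step selects an exact maximizer.
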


\begin{proof}
    \begin{itemize}
        \item Let $\trans^{t}$ denote the transition matrix corresponding to the policy chosen at iteration $t$. We analyze the difference between consecutive value vectors:
        \begin{align*}
            \valvector^{t+1} - \valvector^{t} 
            &= (\identitymatrix - \discountfactor \trans^{t+1})^{-1} \costvector - \valvector^t && \text{(Definition of policy value)} \\
            &= (\identitymatrix - \discountfactor \trans^{t+1})^{-1} \left[ \costvector - (\identitymatrix - \discountfactor \trans^{t+1}) \valvector^t \right] && \text{(Factoring out the inverse matrix)} \\
            &= (\identitymatrix - \discountfactor \trans^{t+1})^{-1} \left[ \costvector + \discountfactor \trans^{t+1} \valvector^t - \valvector^t \right] && \text{(Expanding the inner terms)} \\
            & \succcurlyeq (\identitymatrix - \discountfactor \trans^{t+1})^{-1} \left[ \Bellman \valvector^t - \valvector^t \right]. && \text{(Definition of $\Bellman$)}
        \end{align*}
        The final inequality follows from the definition of $\Bellman$ and the non-negativity of the matrix $(\identitymatrix - \discountfactor \trans^{t+1})^{-1}$, which we establish below.
        By Lemma~\ref{lemma:value_iteration_increasing}, we know $\Bellman \valvector^t - \valvector^t \succcurlyeq \mathbf{0}$. We can express the inverse matrix using the Neumann series:
        \[
            (\identitymatrix - \discountfactor \trans^{t+1})^{-1} = \sum_{i=0}^\infty (\discountfactor \trans^{t+1})^i.
        \]
        Because $\discountfactor$ and all entries of $\trans^{t+1}$ are non-negative, the resulting sum consists entirely of non-negative terms and converges. Consequently, the matrix $(\identitymatrix - \discountfactor \trans^{t+1})^{-1}$ is component-wise non-negative, which implies $\valvector^{t+1} - \valvector^t \succcurlyeq \mathbf{0}$.
        
        \item We first establish the intermediate bound $\valvector^{t+1} \succcurlyeq \Bellman \valvector^{t}$. We can show this by expanding the difference:
        \begin{align*}
            \valvector^{t+1} - \Bellman \valvector^t 
            &= \valvector^{t+1} - (\costvector + \discountfactor \trans^{t+1} \valvector^t) && \text{(Definition of $\Bellman$ and policy update)} \\
            &= \valvector^{t+1} - \left( (\identitymatrix - \discountfactor \trans^{t+1})\valvector^{t+1} + \discountfactor \trans^{t+1} \valvector^t \right) && \text{(Substituting $\costvector=(\I-\gamma\trans^{t+1})\val^{t+1}$)} \\
            &= \valvector^{t+1} - (\valvector^{t+1} - \discountfactor \trans^{t+1} \valvector^{t+1} + \discountfactor \trans^{t+1} \valvector^t) && \text{(Expanding the expression)} \\
            &= \discountfactor \trans^{t+1} (\valvector^{t+1} - \valvector^t). && \text{(Simplifying)}
        \end{align*}
        Given $\discountfactor \geq 0$, $\trans^{t+1} \geq 0$ entrywise, and our previous result that $\valvector^{t+1} - \valvector^t \succcurlyeq \mathbf{0}$, we conclude $\valvector^{t+1} \succcurlyeq \Bellman \valvector^{t}$.

        Now we prove the second statement of the lemma using induction on $t$. The base case $t=0$ holds trivially. For $t > 0$:
        \begin{align*}
            \infnorm{\valvector^t - \valvector^*} 
            &\leq \infnorm{\Bellman \valvector^{t-1} - \valvector^*} && \text{(Because $\valvector^* \succcurlyeq \valvector^t \succcurlyeq \Bellman \valvector^{t-1}$)} \\
            &= \infnorm{\Bellman \valvector^{t-1} - \Bellman \valvector^*} && \text{(Fixed point property, Corollary~\ref{corollary:unique_optimal})} \\
            &\leq \discountfactor \infnorm{\valvector^{t-1} - \valvector^*} && \text{(Contraction of $\Bellman$, Lemma~\ref{lemma:contractive_value_iteration})} \\
            &\leq \discountfactor^t \infnorm{\valvector^0 - \valvector^*}. && \text{(By induction)}
        \end{align*}
    \end{itemize}
\end{proof}

\subsection{Main Proof of \texorpdfstring{\cref{thm:strongly_poly}}{Theorem 2}}

In this section, we will show that policy iteration is a strongly polynomial-time algorithm for RMCs with $L_1$ or $L_\infty$ uncertainty set and a constant discount factor. This, combined with \cref{theorem:oracle_access} shows that RMDPs with $L_1$ or $L_\infty$ uncertainty sets and a constant discount factor are strongly polynomial.

We prove this by showing that policy iteration converges to the optimal policy in a polynomial number of iterations. To organize our analysis, we first introduce fast linear-time algorithms for the policy improvement step and identify the structural properties of the distributions they produce. Then, we use these properties to bound the convergence rate of our method.

Before we analyze the algorithms, we define the value of a policy in an RMC. For any adversary policy $\envpol$, let $\trans^\envpol$ be its transition matrix and $\pvector^\envpol_s$ be its transition probabilities at state $s$. The value vector $\valvector^\envpol$ satisfies the following Bellman equation for each state $s \in \states$:
\begin{equation} \label{Eq:BellmanRMCPolicy}
    \valvector^\envpol_s = \costvector_s + \discountfactor {\pvector^\envpol_s}^\top \valvector^\envpol
\end{equation}
We can also write this in matrix form to calculate the value vector directly:
\begin{equation} \label{Eq:RMCValue}
    \valvector^\envpol = (\identitymatrix - \discountfactor \trans^\envpol)^{-1} \costvector
\end{equation}

\subsubsection{Policy Improvement and Structural Property.}

Line 9 of Algorithm~\ref{algo:RMCPI} requires solving the following optimization problem:
\begin{equation*}
    \max_{\pvector \in \uncert(s)}{\pvector^\top \valvector}
\end{equation*}
where $\uncert(s) = \left\{ \pvector \in \Delta(\states) \;\middle|\; \|\pvector - \nominal_s\|_p \leq \radius_s \right\}$ for $p \in \{1, \infty\}$ for a given value vector $\valvector$. 
Note that for $p \in \{1, \infty\}$, the uncertainty set $\uncert(s)$ can be defined by linear constraints.

While one can solve this optimization problem using linear programming, a more efficient linear-time algorithm exists for $L_1$ and $L_\infty$. 

For $L_1$, \cite{ho2021partial} presents a linear-time method, which they refer to as the homotopy algorithm. Given an RMC $\rmc$, a state $s \in \states$, and the current value vector $\valvector$, the algorithm first sorts the successor states in descending order of their values, such that $\valvector_{s_1} \geq \valvector_{s_2} \geq \dots \geq \valvector_{s_{|\succ(s)|}}$. It then iteratively shifts transition probability mass from the states with the lowest values to the state with the highest value ($s_1$). Specifically, the algorithm maintains a pointer $hi$ (initially pointing to the state with the lowest value $s_{|\succ(s)|}$) and transfers the maximum permissible probability mass from $\pvector_{s_{hi}}$ to $\pvector_{s_1}$ at each step. It decrements $hi$ once the probability mass at $s_{hi}$ is exhausted. Algorithm~\ref{alg:lone_max} details this procedure. We refer the reader to \cite{ho2021partial} for the formal proof of correctness. 

\begin{algorithm}[ht]
\caption{Best-case distribution under $L_1$ uncertainty sets}
\label{alg:lone_max}
\begin{algorithmic}[1]
    \State \textbf{Input:} State $s$, nominal distribution $\nominal_s$, state values $\val$, budget $\radius(s)$
    \State \textbf{Output:} Distribution $\pvector$ maximizing $\pvector^\top \val$
    
    \State $\pvector \gets \nominal_s$
    
    \State Sort states $s_1, \dots, s_{|\succ(s)|}$ so that $\val(s_1) \geq \val(s_2) \geq \dots \geq \val(s_{|\succ(s)|})$
    
    \State $hi \gets |\succ(s)|$ \Comment{Index of lowest-value state}
    
    \While{$\radius(s) > 0 \text{ and } hi > 1$}
        \State $d \gets \min\!\left(\pvector_{s_{hi}}, \frac{\radius(s)}{2}\right)$
        \State $\pvector_{s_{hi}} \gets \pvector_{s_{hi}} - d$
        \State $\pvector_{s_1} \gets \pvector_{s_1} + d$
        \State $\radius(s) \gets \radius(s) - 2d$
        \State $hi \gets hi - 1$
    \EndWhile
    
    \State \Return $\pvector$
\end{algorithmic}
\end{algorithm}

The policies generated by Algorithm~\ref{alg:lone_max} exhibit the following structural property that we use in our analysis.

\begin{property}[Structural Characterization for $L_1$] \label{prop:policy_structure}
    For any policy $\pvector$ generated by Algorithm~\ref{alg:lone_max}, there exist four disjoint sets of states $R_\pvector, N_\pvector, Z_\pvector,$ and $I_\pvector$ such that:
    \begin{itemize}
        \item $|R_\pvector|=1$ (Receiver), and for $s' \in R_\pvector$ we have $\pvector_{s'} = \min(\nominal_{s,s'} + \frac{\radius_s}{2}, 1)$.
        \item For $s' \in N_\pvector$ (Not-changed), $\pvector_{s'} = \nominal_{s,s'}$.
        \item For $s' \in Z_\pvector$ (Zeros), $\pvector_{s'} = 0$.
        \item $|I_\pvector| \leq 1$ (Incomplete), and for $s' \in I_\pvector$, the probability satisfies $\pvector_{s'} = \nominal_{s,s'} -   \frac{\radius_s}{2} + \sum_{s'' \in Z_\pvector}{\nominal_{s,s''}}.$
    \end{itemize}
\end{property}

Property~\ref{prop:policy_structure} follows directly from the execution of Algorithm~\ref{alg:lone_max}. The set $R_\pvector$ contains the highest-value state receiving the shifted probability. Conversely, $Z_\pvector$ consists of states whose probability became zero, while $I_\pvector$ contains at most one state that partially gives its mass before the budget finishes. Finally, $N_\pvector$ consists of the not-changed states that the pointer $hi$ never reaches.

Similarly, for $L_\infty$, Behzadian et al.\ \cite{behzadian2021fast} present a linear algorithm. It starts with the nominal distribution $\nominal_s$. Then, it sorts the successor states from highest value to lowest value. The algorithm uses a two-pointer approach. One pointer ($hi$) starts on the highest-value state, and the other pointer ($lo$) starts on the lowest-value state. It moves probability mass from the lowest-value states to the highest-value states. Each state can only give or receive up to $\radius_s$ mass, and the probability must stay within the $[0, 1]$ limits. The pointers move until they meet. Algorithm~\ref{alg:linf_max} details this procedure.

\begin{algorithm}[ht]
\caption{Best-case distribution under $L_\infty$ uncertainty sets}
\label{alg:linf_max}
\begin{algorithmic}[1]
    \State \textbf{Input:} State $s$, nominal distribution $\nominal_s$, state values $\valvector$, budget $\radius_s$
    \State \textbf{Output:} Distribution $\pvector$ maximizing $\pvector^\top \valvector$
    
    \State Sort states $s_1, \dots, s_{|\succ(s)|}$ so that $\valvector(s_1) \geq \valvector(s_2) \geq \dots \geq \valvector(s_{|\succ(s)|})$
    \State $\pvector \gets \nominal_s$
    \State $hi \gets 1$, $lo \gets |\succ(s)|$
    \State $b_{hi} \gets \radius_s$, $b_{lo} \gets \radius_s$
    
    \While{$hi < lo$}
        \State $d_{hi} \gets \min\!\left(b_{hi}, 1 - \pvector_{s_{hi}}\right)$
        \State $d_{lo} \gets \min\!\left(b_{lo}, \pvector_{s_{lo}}\right)$
        \State $t \gets \min\!\left(d_{hi}, d_{lo}\right)$
        
        \State $\pvector_{s_{hi}} \gets \pvector_{s_{hi}} + t$
        \State $\pvector_{s_{lo}} \gets \pvector_{s_{lo}} - t$
        \State $b_{hi} \gets b_{hi} - t$
        \State $b_{lo} \gets b_{lo} - t$
        
        \If{$b_{hi} = 0 \text{ or } \pvector_{s_{hi}} = 1$}
            \State $hi \gets hi + 1$
            \State $b_{hi} \gets \radius_s$
        \Else
            \State $lo \gets lo - 1$
            \State $b_{lo} \gets \radius_s$
        \EndIf
    \EndWhile
    
    \State \Return $\pvector$
\end{algorithmic}
\end{algorithm}

Similar to the $L_1$ case, the policies generated by Algorithm~\ref{alg:linf_max} exhibit a clear structural property.

\begin{property}[Structural Characterization for $L_\infty$] \label{prop:linf_policy_structure}
    For any policy $\pvector$ generated by Algorithm~\ref{alg:linf_max}, there exist four disjoint sets of states $R_\pvector, D_\pvector, Z_\pvector,$ and $I_\pvector$ such that:
    \begin{itemize}
        \item For $s' \in R_\pvector$ (Receivers), $\pvector_{s'} = \min(1, \nominal_{s,s'} + \radius_s)$.
        \item For $s' \in D_\pvector$ (Donors), $\pvector_{s'} = \nominal_{s,s'} - \radius_s$.
        \item For $s' \in Z_\pvector$ (Zeroed donors), $\nominal_{s,s'} \leq \radius_s$ and $\pvector_{s'} = 0$.
        \item $|I_\pvector| \leq 1$ (Incomplete), and for $s' \in I_\pvector$, the probability satisfies $\nominal_{s,s'} - \radius_s < \pvector_{s'} < \nominal_{s,s'} + \radius_s$.
    \end{itemize}
\end{property}

Property~\ref{prop:linf_policy_structure} shows how the probability mass moves. The set $R_\pvector$ receives the mass. This mass is donated by the states in $D_\pvector$ and $Z_\pvector$. Finally, there is at most one state, $I_\pvector$, that only partially gives or receives mass.

\subsubsection{Potential Function and Sandwich Bounds}

We now analyze the convergence rate of our policy iteration method. This analysis differs from the standard RMDP analysis because the action space of the adversary is infinite. Although it is easy to show that the optimal worst-case probabilities must lie on the boundary of the $L_1$ or $L_\infty$ ball, there can still be exponentially many such extreme points.

First, we define the cumulative transition probability for an adversary policy $\envpol$ as follows.

\begin{definition} \label{def:cumulative_transition}
    Fix a state $s \in \states$, and let $s_1, \dots, s_{|\succ(s)|}$ be an enumeration of the successor states such that $\valvector^*_{s_1} \geq \valvector^*_{s_2} \geq \dots \geq \valvector^*_{s_{|\succ(s)|}}$. Let $\pvector^\envpol_s$ denote the transition probability vector at state $s$ under the adversary policy $\envpol$. 
    For any index $i \in [|\succ(s)|]$, we define the cumulative transition probability $F^\envpol(s,i)$ as:
    \begin{equation*}
        F^\envpol(s,i) := \sum_{j = 1}^{i} \pvector^\envpol_{s,s_j}.
    \end{equation*}
    For the optimal adversary policy $\envpol^*$, we use the notation $F^*(s,i) = F^{\envpol^*}(s,i)$ and $\pvector^*_s = \pvector^{\envpol^*}_s$.
\end{definition}

We now have the following lemma.

\begin{lemma} \label{lemma:big_F_positive}
    For any state $s \in \states$, any policy $\envpol$ with $\pvector^\envpol_s \in \uncert(s)$, and any index $i \in [|\succ(s)|]$, we have $F^*(s,i) \geq F^\envpol(s,i)$.
\end{lemma}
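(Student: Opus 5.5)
The plan is to show that, for each fixed state $s$ and index $i$, the quantity $F^*(s,i)$ equals the largest prefix mass $\max_{\pvector \in \uncert(s)} \sum_{j\le i} \pvector_{s_j}$ that any feasible distribution can achieve. The inequality $F^*(s,i) \geq F^\envpol(s,i)$ then follows at once, because $\pvector^\envpol_s \in \uncert(s)$. This cannot hold for an arbitrary convex set, so the argument must use the concrete $L_1$/$L_\infty$ structure and the greedy algorithms.

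\textbf{Step 1: fix $\envpol^*$ and the enumeration consistently.} By Lemma~\ref{lemma:bellman}, $\valopt$ is the fixed point of $\Bellman$. So any distribution attaining $\max_{\pvector\in\uncert(s)}\pvector^\top\valopt$ is an optimal choice at $s$. I take $\pvector^*_s$ to be the output of Algorithm~\ref{alg:lone_max} (resp.\ Algorithm~\ref{alg:linf_max}) run on $\valopt$. I also take the enumeration $s_1,\dots,s_{|\succ(s)|}$ of Definition~\ref{def:cumulative_transition} to be exactly the sorted order that algorithm uses, breaking ties in $\valopt$ identically. This is legitimate because the definition only requires a non-increasing order.

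\textbf{Step 2: upper bounds for an arbitrary feasible $\pvector$.} For $L_1$, moving $x$ units of mass into the prefix $\{s_1,\dots,s_i\}$ costs at least $2x$ in $\lVert \pvector-\nominal_s\rVert_1$. Writing $\hat F(i)=\sum_{j\le i}\nominal_{s,s_j}$, this gives
\[
\textstyle\sum_{j\le i}\pvector_{s_j}\le \min\bigl(1,\ \hat F(i)+\radius_s/2\bigr).
\]
For $L_\infty$, the entrywise constraints together with $\pvector\ge 0$ and $\sum\pvector=1$ give
\[
\textstyle\sum_{j\le i}\pvector_{s_j}\le \min\Bigl(\sum_{j\le i}\min(1,\nominal_{s,s_j}+\radius_s),\ 1-\sum_{j>i}\max(0,\nominal_{s,s_j}-\radius_s)\Bigr).
\]

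\textbf{Step 3: the greedy output attains these bounds.} For $L_1$ I use Property~\ref{prop:policy_structure}. The receiver is $s_1$, which lies in every prefix with $i\ge1$, and mass is removed from the tail $s_{|\succ(s)|}, s_{|\succ(s)|-1},\dots$.
\begin{itemize}
\item If the pointer $hi$ never enters $\{1,\dots,i\}$, then all removed mass comes from outside the prefix and all added mass lands at $s_1$. The prefix gains $\min(\radius_s/2,\text{available tail mass})$; when the tail is exhausted the prefix sum is $1$.
\item If the pointer does enter the prefix, then the entire suffix has been zeroed and the prefix sum is $1$.
\end{itemize}
In either case the bound of Step~2 is attained.

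For $L_\infty$ I use Property~\ref{prop:linf_policy_structure} and the two-pointer dynamics, splitting on where the pointers meet relative to $i$.
\begin{itemize}
\item If they meet at an index $>i$, every $s_j$ with $j\le i$ is a saturated receiver, so the first term of the minimum is attained.
\item If they meet at an index $\le i$, every $s_j$ with $j>i$ is a saturated donor ($D$ or $Z$), so the second term is attained.
\item The single incomplete state lies on the meeting side, so it does not break whichever equality applies.
\end{itemize}

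The main obstacle is Step 3 for $L_\infty$. I must check carefully that at termination one side of index $i$ is fully saturated, including the boundary case where the incomplete state is $s_i$ or $s_{i+1}$. I would argue this from the loop invariant: at every moment, all indices $<hi$ are saturated receivers and all indices $>lo$ are saturated donors. Since the loop ends with $hi \geq lo$, every index other than the one at which the pointers meet is saturated on its side.
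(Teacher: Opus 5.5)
Your argument is correct. It shares the paper's intuition (the greedy output front-loads mass in the $\valopt$-order), but it takes a more rigorous route. The paper only asserts that the cumulative vector of $\pvector^*_s$ is ``lexicographically maximal'' over $\uncert(s)$, and that by itself does not give $F^*(s,i)\ge F^\envpol(s,i)$ at every index. You prove the statement actually needed: the greedy output attains $\max_{\pvector\in\uncert(s)}\sum_{j\le i}\pvector_{s_j}$ for every $i$ simultaneously. Three ingredients make this rigorous: the explicit prefix bounds of Step~2, the tail-depletion argument for $\lone$, and the two-pointer invariant for $\linf$ (everything below $hi$ is a saturated receiver, everything above $lo$ a saturated donor, and $hi=lo$ at exit). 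As a by-product, summation by parts then shows the greedy outputs are optimal, so you do not need the cited correctness results. Your Step~1 convention (take $\pvector^*_s$ to be the greedy output and the enumeration to be its sort order) is used only implicitly in the paper, and it is genuinely needed. With two tied top successors, an optimal distribution that puts the extra mass on the one enumerated second violates the inequality at $i=1$. One small wording point: since $hi$ is decremented unconditionally, read ``the pointer enters the prefix'' in the $\lone$ case as ``some loop iteration operates at an index $\le i$''. Otherwise a final partial withdrawal at $s_{i+1}$ would be placed in your second case, where the claim that the suffix is zeroed is false.
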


\begin{proof}
    Algorithms~\ref{alg:lone_max}, \ref{alg:linf_max} operate on a given value permutation by greedily transferring probability mass from the lowest-value states at the end of the sequence to the highest-value states at the start until the budget saturates. Because the optimal policy evaluates states using the descending permutation of the $\valvector^*$ values, $\pvector^*_s$ shifts the maximum allowable mass as early as possible in this specific order. As a result, its cumulative probability vector is lexicographically maximal over the entire uncertainty set, which guarantees $F^*(s,i) \geq F^\envpol(s,i)$ for any feasible $\envpol$.
\end{proof}

\paragraph{Potential Function.}
For a given adversary policy $\envpol$, we define the potential function $f_\envpol(s, i)$ for a state $s \in \states$ and index $i \in [|\succ(s)|-1]$ as:
\begin{equation}
    f_\envpol(s, i) := (F^*(s,i) - F^\envpol(s,i))(\valvector^*_{s_i} - \valvector^*_{s_{i+1}}).
\end{equation}

First, we show that the potential function is non-negative for any adversary policy $\envpol$.

\begin{lemma} \label{lemma:potential_nonnegative_rmc}
    For any adversary policy $\envpol$, state $s \in \states$, and index $i \in [|\succ(s)|-1]$, the potential function is non-negative, i.e., $f_\envpol(s,i) \geq 0$.
\end{lemma}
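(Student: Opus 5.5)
The potential $f_\envpol(s,i)$ is a product of two factors, so the plan is to show that each factor is non-negative and then multiply.

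The first factor is $F^*(s,i) - F^\envpol(s,i)$. Here $\envpol$ is an adversary policy, so by definition $\pvector^\envpol_s \in \uncert(s)$. We can therefore invoke \cref{lemma:big_F_positive} directly for the fixed state $s$ and the index $i \in [|\succ(s)|-1] \subseteq [|\succ(s)|]$. That lemma gives $F^*(s,i) \geq F^\envpol(s,i)$, so the first factor is non-negative.

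The second factor is $\valvector^*_{s_i} - \valvector^*_{s_{i+1}}$. By \cref{def:cumulative_transition}, the enumeration $s_1,\dots,s_{|\succ(s)|}$ is chosen so that $\valvector^*_{s_1} \geq \dots \geq \valvector^*_{s_{|\succ(s)|}}$. Since $i \leq |\succ(s)|-1$, both $s_i$ and $s_{i+1}$ are defined and $\valvector^*_{s_i} \geq \valvector^*_{s_{i+1}}$. Hence the second factor is also non-negative.

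One point needs care. The potential must use the same enumeration of successors, sorted by $\valvector^*$, both in $F^*$ and $F^\envpol$ and in the value differences. Ties in $\valvector^*$ can make this enumeration non-unique, but the argument is unaffected: we fix one ordering consistent with the sorting and use it throughout. Any tie-breaking gives a zero second factor at tied indices, and \cref{lemma:big_F_positive} holds for the fixed ordering that defines $F^*$.

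The only substantive input is \cref{lemma:big_F_positive}, which we take as established. Once it is in hand there is no real obstacle: the product of two non-negative reals is non-negative, which gives $f_\envpol(s,i) \geq 0$.
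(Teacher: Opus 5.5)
Your proof is correct and is the same argument the paper gives. The second factor $\valvector^*_{s_i} - \valvector^*_{s_{i+1}}$ is non-negative by the descending enumeration, the first factor $F^*(s,i) - F^\envpol(s,i)$ is non-negative by \cref{lemma:big_F_positive}, and so the product is non-negative. Your remark on ties is a harmless and slightly more careful addition, since at tied indices the second factor vanishes whatever the sign of the first.
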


\begin{proof}
    By Definition~\ref{def:cumulative_transition}, the successor states are enumerated in descending order of their optimal values, meaning $\valvector^*_{s_1} \geq \valvector^*_{s_2} \geq \dots \geq \valvector^*_{s_{|\succ(s)|}}$. Consequently, for any valid index $i$, the difference between adjacent state values is non-negative:
    \begin{align*}
        \valvector^*_{s_i} - \valvector^*_{s_{i+1}} &\geq 0
    \end{align*}
    Furthermore, by Lemma~\ref{lemma:big_F_positive}, the optimal policy's cumulative transition probability is lexicographically maximal. Thus, for all indices $i$:
    \begin{align*}
        F^*(s,i) - F^\envpol(s,i) &\geq 0
    \end{align*}
    The potential function is defined as the product of these two terms:
    \begin{align*}
        f_\envpol(s,i) &= (F^*(s,i) - F^\envpol(s,i))(\valvector^*_{s_i} - \valvector^*_{s_{i+1}})
    \end{align*}
    Thus, it follows that $f_\envpol(s,i) \geq 0$. This completes the proof.
\end{proof}

Now we first prove the following lemma, which lower bounds the value difference in terms of the potential function.

\begin{lemma}[Lower Bound] \label{lem:lower_bound}
    For every policy $\envpol$, state $s \in \states$, and index $i \in [|\succ(s)|-1]$, we have $\valvector^*_s - \valvector^\envpol_s \ge \discountfactor f_\envpol(s, i)$.
\end{lemma}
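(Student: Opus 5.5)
We compare the Bellman equations for $\valvector^*$ and $\valvector^\envpol$ at state $s$, replace $\valvector^\envpol$ by $\valvector^*$ on the right using \cref{lemma:optimal_value_upper_bound_rmc}, and then rewrite the resulting inner-product gap via summation by parts as a sum of potentials. Write $\pvector^*_s$ and $\pvector^\envpol_s$ for the transition vectors at $s$.

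\textbf{Step 1: reduce to an inner-product gap.} By Equation~\ref{Eq:BellmanRMCPolicy} applied to $\envpol^*$ and to $\envpol$,
\[
\valvector^*_s - \valvector^\envpol_s = \discountfactor\bigl({\pvector^*_s}^\top \valvector^* - {\pvector^\envpol_s}^\top \valvector^\envpol\bigr).
\]
Since $\valvector^\envpol \preccurlyeq \valvector^*$ by \cref{lemma:optimal_value_upper_bound_rmc} and $\pvector^\envpol_s \geq \mathbf{0}$, we have ${\pvector^\envpol_s}^\top \valvector^\envpol \le {\pvector^\envpol_s}^\top \valvector^*$. Hence
\[
\valvector^*_s - \valvector^\envpol_s \ge \discountfactor\,(\pvector^*_s - \pvector^\envpol_s)^\top \valvector^*.
\]

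\textbf{Step 2: Abel summation.} Let $k=|\succ(s)|$ and order the successors $s_1,\dots,s_k$ as in Definition~\ref{def:cumulative_transition}. Put $F^*(s,0)=F^\envpol(s,0)=0$. Write each coordinate difference as $\pvector^*_{s,s_j}-\pvector^\envpol_{s,s_j} = \Delta_j - \Delta_{j-1}$, where $\Delta_j := F^*(s,j)-F^\envpol(s,j)$. Summation by parts gives
\[
(\pvector^*_s - \pvector^\envpol_s)^\top \valvector^* = \sum_{j=1}^{k-1} \Delta_j\,(\valvector^*_{s_j}-\valvector^*_{s_{j+1}}) + \Delta_k\,\valvector^*_{s_k}.
\]
Both vectors are probability distributions supported on $\succ(s)$, so $F^*(s,k)=F^\envpol(s,k)=1$ and the boundary term $\Delta_k\valvector^*_{s_k}$ vanishes. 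The remaining sum is exactly $\sum_{j=1}^{k-1} f_\envpol(s,j)$.

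\textbf{Step 3: drop all but one term.} By \cref{lemma:potential_nonnegative_rmc}, every summand $f_\envpol(s,j)$ is non-negative. The sum is therefore at least the single term $f_\envpol(s,i)$. Combining with Step 1 yields $\valvector^*_s-\valvector^\envpol_s \ge \discountfactor f_\envpol(s,i)$.

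The only delicate point is the summation-by-parts bookkeeping in Step 2. The indices must line up so that the boundary term is multiplied by $F(s,k)$, which equals $1$ for both distributions and so cancels. Everything else rests on results already established: the monotonicity in \cref{lemma:optimal_value_upper_bound_rmc} and the non-negativity in \cref{lemma:potential_nonnegative_rmc}, which in turn relies on \cref{lemma:big_F_positive}.
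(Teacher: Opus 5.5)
Your proposal is correct and follows essentially the same argument as the paper. You compare the Bellman equations at $s$ and replace $\valvector^\envpol$ by $\valvector^*$ using \cref{lemma:optimal_value_upper_bound_rmc}. You then apply summation by parts, where the boundary term vanishes because $F^*(s,|\succ(s)|)=F^\envpol(s,|\succ(s)|)=1$, and drop all but one of the non-negative potentials via \cref{lemma:potential_nonnegative_rmc}.
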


\begin{proof}

    We start by bounding the value difference using the Bellman equations:
    \begin{align*}
        \valvector^*_s - \valvector^\envpol_s 
        &= (\costvector_s + \discountfactor {\pvector^*_s}^\top \valvector^*) - (\costvector_s + \discountfactor {\pvector^\envpol_s}^\top \valvector^\envpol) && \text{(Equation \ref{Eq:BellmanRMCPolicy})} \\
        &\geq (\costvector_s + \discountfactor {\pvector^*_s}^\top \valvector^*) - (\costvector_s + \discountfactor {\pvector^\envpol_s}^\top \valvector^*) && \text{(Since $\valvector^* \succcurlyeq \valvector^\envpol$ \text{by \cref{lemma:optimal_value_upper_bound_rmc})}} \\
        &= \discountfactor (\pvector^*_s - \pvector^\envpol_s)^\top \valvector^*. && \text{(Canceling $\costvector_s$ and factoring)}
    \end{align*}
    
    To evaluate this dot product, we use summation by parts. Because both $\pvector^*_s$ and $\pvector^\envpol_s$ are valid probability distributions over the successor states, their total cumulative probabilities at the final index are both equal to $1$. Therefore, their cumulative difference at $n$ is zero: $F^*(s,|\succ(s)|) - F^\envpol(s,|\succ(s)|) = 1 - 1 = 0$. Applying summation by parts yields:
    \begin{align*}
        \discountfactor (\pvector^*_s - \pvector^\envpol_s)^\top \valvector^* &= \discountfactor \sum_{j=1}^{|\succ(s)|}(\pvector^*_{s,s_j} - \pvector^\envpol_{s,s_j}) \valvector^*_{s_j} && \text{(Expanding dot product)} \\
        &= \discountfactor \Bigg[ \sum_{i=1}^{|\succ(s)|-1} \left((F^*(s,i) - F^\envpol(s,i))(\valvector^*_{s_i} - \valvector^*_{s_{i+1}})\right) \\
        &\qquad\quad + (F^*(s,|\succ(s)|) - F^\envpol(s,|\succ(s)|))\valvector^*_{s_{|\succ(s)|}} \Bigg] && \text{(Summation by parts)} \\
        &= \discountfactor \sum_{i=1}^{|\succ(s)|-1} (F^*(s,i) - F^\envpol(s,i))(\valvector^*_{s_i} - \valvector^*_{s_{i+1}}) && \text{(Last term is $0$)} \\
        &= \discountfactor \sum_{i=1}^{|\succ(s)|-1} f_\envpol(s, i). && \text{(Definition of $f_\envpol(s,i)$)}
    \end{align*}
    Because each $f_\envpol(s, i)$ is non-negative by \cref{lemma:potential_nonnegative_rmc}, the total sum is bounded below by any single term $\discountfactor f_\envpol(s, i)$.
\end{proof}

Now we show the following lemma, which upper bounds the value difference based on the potential function.

\begin{lemma}[Upper Bound] \label{lem:upper_bound}
    Let $\hat{f}_\envpol = \max_{s \in \states, i} f_\envpol(s, i)$. Then, $\infnorm{\valvector^* - \valvector^\envpol} \le \frac{\discountfactor n}{1-\discountfactor} \hat{f}_\envpol$.
\end{lemma}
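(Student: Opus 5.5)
The approach mirrors the RMDP upper bound in \cref{lemma:upperbound_potential_rmdp}, with the roles of agent and adversary exchanged. Let $\trans^\envpol$ and $\trans^*$ be the transition matrices of $\envpol$ and $\envpol^*$. By \cref{Eq:RMCValue} and the identity $A^{-1}-B^{-1}=A^{-1}(B-A)B^{-1}$, we have
\[
    \valvector^* - \valvector^\envpol
    = \bigl((\identitymatrix-\discountfactor\trans^*)^{-1} - (\identitymatrix-\discountfactor\trans^\envpol)^{-1}\bigr)\costvector
    = \discountfactor(\identitymatrix-\discountfactor\trans^\envpol)^{-1}(\trans^*-\trans^\envpol)\valvector^* .
\]
The second equality takes $A=\identitymatrix-\discountfactor\trans^\envpol$ on the left, so that the rightmost factor collapses to $(\identitymatrix-\discountfactor\trans^*)^{-1}\costvector=\valvector^*$. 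This is the key choice: it puts $\valvector^*$, the vector in which the potential is defined, on the right.

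Next we bound the vector $(\trans^*-\trans^\envpol)\valvector^*$ componentwise. For each state $s$, its entry is $(\pvector^*_s-\pvector^\envpol_s)^\top\valvector^*$. The summation-by-parts computation in the proof of \cref{lem:lower_bound} applies verbatim and gives
\[
    (\pvector^*_s-\pvector^\envpol_s)^\top\valvector^*=\sum_{i=1}^{|\succ(s)|-1} f_\envpol(s,i).
\]
Each term is nonnegative by \cref{lemma:potential_nonnegative_rmc} and at most $\hat f_\envpol$. There are at most $|\succ(s)|-1\le n$ terms, so each entry lies in $[0, n\hat f_\envpol]$.

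Finally, $(\identitymatrix-\discountfactor\trans^\envpol)^{-1}=\sum_k(\discountfactor\trans^\envpol)^k$ is entrywise nonnegative with $\infnorm{\cdot}\le\frac{1}{1-\discountfactor}$, by the Neumann series argument already used in \cref{lemma:consecutive_policy_iteration_bound}. Multiplying preserves the sandwich
\[
    \mathbf{0}\preccurlyeq\valvector^*-\valvector^\envpol\preccurlyeq \discountfactor(\identitymatrix-\discountfactor\trans^\envpol)^{-1}\,n\hat f_\envpol\mathbf{1},
\]
which is consistent with \cref{lemma:optimal_value_upper_bound_rmc}. Taking infinity norms yields $\infnorm{\valvector^*-\valvector^\envpol}\le\frac{\discountfactor n}{1-\discountfactor}\hat f_\envpol$.

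The only delicate point is the order of the factors in the resolvent identity. It must leave $\valvector^*$ rather than $\valvector^\envpol$ on the right, since the sorting order and the $F^*$ quantities in $f_\envpol$ are defined with respect to $\valvector^*$. Once that is set up, the rest reuses the summation by parts from \cref{lem:lower_bound}. The factor $n$ is unavoidable in this approach because we sum up to $n$ potentials per state, unlike the single-action potential in the RMDP case.
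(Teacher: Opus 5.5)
Your proposal is correct and follows essentially the same route as the paper. It uses the same resolvent identity arranged so that $\valvector^*$ sits on the right, the same row-wise summation by parts bounding each row by $n\hat{f}_\envpol$, and the same Neumann-series factor $\frac{1}{1-\discountfactor}$. The only difference is cosmetic: you bound entrywise via nonnegativity of $(\identitymatrix-\discountfactor\trans^\envpol)^{-1}$ before taking norms, whereas the paper applies submultiplicativity directly. The paper also uses the nonnegativity of each row $(\pvector^*_s-\pvector^\envpol_s)^\top\valvector^*$ only implicitly, while your version makes it explicit.
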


\begin{proof}
    First, we relate the value difference to the transition difference:

    \begin{align*}
        \valvector^* - \valvector^\envpol
        &= (\identitymatrix - \discountfactor\trans^*)^{-1} \costvector - (\identitymatrix - \discountfactor\trans^\envpol)^{-1} \costvector 
        && \text{(Equation \ref{Eq:RMCValue})} \\
        &= (\identitymatrix - \discountfactor\trans^\envpol)^{-1} \left[ (\identitymatrix - \discountfactor\trans^\envpol) - (\identitymatrix - \discountfactor\trans^*) \right] \notag \\
        &\quad \cdot (\identitymatrix - \discountfactor\trans^*)^{-1} \costvector
        && (A^{-1} - B^{-1} = B^{-1}(B - A)A^{-1}) \\
        &= \discountfactor (\identitymatrix - \discountfactor\trans^\envpol)^{-1} (\trans^* - \trans^\envpol) \valvector^*.
        && \text{(Simplifying terms and Equation \ref{Eq:RMCValue})}
    \end{align*}
    
    We take the infinity norm of both sides and apply the Neumann series expansion:
    \begin{align*}
        \infnorm{\valvector^* - \valvector^\envpol} 
        &\leq \discountfactor \infnorm{(\identitymatrix - \discountfactor\trans^\envpol)^{-1}} \infnorm{(\trans^* - \trans^\envpol) \valvector^*} 
        && \text{(Submultiplicativity of $\linf$ norm)} \\
        &\leq \discountfactor \left( \sum_{k=0}^\infty \discountfactor^k \infnorm{(\trans^\envpol)^k} \right) \infnorm{(\trans^* - \trans^\envpol) \valvector^*} 
        && \text{(Neumann series, Triangle inequality)} \\
        &= \frac{\discountfactor}{1 - \discountfactor} \infnorm{(\trans^* - \trans^\envpol) \valvector^*}. 
        && \text{(Since $\trans^\envpol$ is stochastic, $\infnorm{(\trans^\envpol)^t} = 1$)}
    \end{align*}
    Let $s \in \states$ be the state that maximizes the row sum in the remaining infinity norm. Applying summation by parts to this specific row yields:
    \begin{align*}
        \infnorm{(\trans^* - \trans^\envpol) \valvector^*} 
        &= (\pvector^*_s - \pvector^\envpol_s)^\top \valvector^* 
        && \text{(Definition of the infinity norm)} \\
        &= \sum_{i=1}^{|\succ(s)|-1} f_\envpol(s, i) 
        && \text{(Summation by parts as in Lemma~\ref{lem:lower_bound})} \\
        &\leq |\succ(s)| \hat{f}_\envpol 
        && \text{(Since $f_\envpol(s, i) \leq \hat{f}_\envpol$)} \\
        &\leq n \hat{f}_\envpol 
        && \text{(Since $|\succ(s)| \leq n$)}
    \end{align*}
    Substituting this bounded row sum back into our norm inequality completes the proof.
\end{proof}

\subsubsection{Halving of the Cumulative Gap}
We now chain the two previous lemmas to show the following Lemma.

\begin{lemma}\label{lem:value_contraction}
    Consider two policies $\envpol$ and $\envpol'$. Let $s \in \states$ and $i \in [|\succ(s)|-1]$ be the state and index that maximize the potential function $f_\envpol(s, i)$. If $\envpol'$ satisfies the condition:
    \begin{equation}
        F^*(s,i) - F^{\envpol'}(s,i) \ge \frac{1}{2}(F^*(s,i) - F^\envpol(s,i)),
    \end{equation}
    Then we have the following inequality:
    \begin{equation}
        \infnorm{\valvector^* - \valvector^{\envpol'}} \ge \frac{1-\discountfactor}{2n} \infnorm{\valvector^* - \valvector^\envpol}.
    \end{equation}
\end{lemma}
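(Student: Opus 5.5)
The plan is to chain the lower bound of \cref{lem:lower_bound} for $\envpol'$ with the upper bound of \cref{lem:upper_bound} for $\envpol$, in the same way as \cref{lemma:lower_upper_combination_rmdp}. The only new ingredient is the factor $\frac{1}{2}$ coming from the hypothesis on the cumulative gap, together with the fact that both policies' potentials use the same value-difference weight $\valvector^*_{s_i} - \valvector^*_{s_{i+1}}$.

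First I would bound the left-hand side from below by a single coordinate. Since $\valvector^* \succcurlyeq \valvector^{\envpol'}$ by \cref{lemma:optimal_value_upper_bound_rmc}, we have $\infnorm{\valvector^* - \valvector^{\envpol'}} \geq \valvector^*_s - \valvector^{\envpol'}_s$. Then \cref{lem:lower_bound} applied to $\envpol'$ at the pair $(s,i)$ gives $\valvector^*_s - \valvector^{\envpol'}_s \geq \discountfactor f_{\envpol'}(s,i)$.

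Next I would compare $f_{\envpol'}(s,i)$ with $f_\envpol(s,i)$. Both share the factor $\valvector^*_{s_i} - \valvector^*_{s_{i+1}} \geq 0$, and the enumeration of successors depends only on $\valvector^*$, not on the policy. Multiplying the hypothesis $F^*(s,i) - F^{\envpol'}(s,i) \geq \frac{1}{2}(F^*(s,i) - F^\envpol(s,i))$ by this non-negative factor therefore yields $f_{\envpol'}(s,i) \geq \frac{1}{2} f_\envpol(s,i)$. Because $(s,i)$ maximizes $f_\envpol$, we have $f_\envpol(s,i) = \hat f_\envpol$.

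Finally, \cref{lem:upper_bound} gives $\hat f_\envpol \geq \frac{1-\discountfactor}{\discountfactor n}\infnorm{\valvector^* - \valvector^\envpol}$. Combining the three steps,
\[
\infnorm{\valvector^* - \valvector^{\envpol'}} \geq \frac{\discountfactor}{2}\hat f_\envpol \geq \frac{1-\discountfactor}{2n}\infnorm{\valvector^* - \valvector^\envpol}.
\]
There is no real obstacle in this argument. The one point to check is that the sign of the weight $\valvector^*_{s_i}-\valvector^*_{s_{i+1}}$ is non-negative, which holds by the ordering in \cref{def:cumulative_transition}, so the inequality direction is preserved when multiplying. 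The case $\discountfactor = 0$ is degenerate and follows directly, since then $\valvector^\envpol = \valvector^* = \costvector$.
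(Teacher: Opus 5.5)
Your proposal is correct and follows the paper's proof: you turn the hypothesis into $f_{\envpol'}(s,i) \geq \frac{1}{2} f_\envpol(s,i)$ via the common non-negative weight $\valvector^*_{s_i} - \valvector^*_{s_{i+1}}$, then chain the upper bound of \cref{lem:upper_bound} for $\envpol$ with the lower bound of \cref{lem:lower_bound} for $\envpol'$, reading the same chain in the opposite direction. Your appeal to \cref{lemma:optimal_value_upper_bound_rmc} for $\infnorm{\valvector^* - \valvector^{\envpol'}} \geq \valvector^*_s - \valvector^{\envpol'}_s$ is harmless but unnecessary, since $\infnorm{x} \geq x_s$ holds for every vector $x$.
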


\begin{proof}
    We first show how the lemma's assumption translates to the potential function. By rearranging the assumed inequality and multiplying both sides by the non-negative value $2(\valvector^*_{s_i} - \valvector^*_{s_{i+1}})$, we obtain:
    \begin{align*}
        2(F^*(s,i) - F^{\envpol'}(s,i))(\valvector^*_{s_i} - \valvector^*_{s_{i+1}}) &\ge (F^*(s,i) - F^\envpol(s,i))(\valvector^*_{s_i} - \valvector^*_{s_{i+1}}) \\
        \Rightarrow 2 f_{\envpol'}(s, i) &\ge f_\envpol(s, i). && \text{(Definition of $f$)}
    \end{align*}
    
    The main proof follows by chaining the upper bound for policy $\envpol$ (Lemma~\ref{lem:upper_bound}) with the lower bound for policy $\envpol'$ (Lemma~\ref{lem:lower_bound}). We can structure this as a direct sequence of inequalities:
    \begin{align*}
        \infnorm{\valvector^* - \valvector^\envpol} 
        &\le \frac{\discountfactor n}{1-\discountfactor} f_\envpol(s, i) 
        && \text{(Upper bound for $\envpol$, Lemma~\ref{lem:upper_bound})} \\
        &\le \frac{2\discountfactor n}{1-\discountfactor} f_{\envpol'}(s, i) 
        && \text{(Based on Lemma's assumption)} \\
        &\le \frac{2n}{1-\discountfactor} (\valvector^*_s - \valvector^{\envpol'}_s) 
        && \text{(Lower bound for $\envpol'$, Lemma~\ref{lem:lower_bound})} \\
        &\le \frac{2n}{1-\discountfactor} \infnorm{\valvector^* - \valvector^{\envpol'}}. 
        && \text{(Definition of the $L_\infty$ norm)}
    \end{align*}
    Rearranging the first and last terms of this chain directly yields the claimed bound.
\end{proof}

The following lemma establishes that the difference $F^*(s,i) - F^\envpol(s,i)$ must decrease by at least a factor of two for at least one state-index pair $(s,i)$ within a bounded number of iterations, $L := \lceil \log_{\discountfactor} \left( \frac{1 - \discountfactor}{2n} \right) \rceil$, of Algorithm~\ref{algo:RMCPI}.

\begin{lemma} \label{lemma:subaction_not_repeat}
    Let $\envpol^{0}, \envpol^{1}, \dots$ be the sequence of policies generated by Algorithm \ref{algo:RMCPI}.  
    For a specific policy $\envpol^{t}$, let $(s, i) = \argmax_{s' \in \states, j \in [|\succ(s')|-1]} f_{\envpol^{t}}(s', j)$ be the state and index that maximize the potential in $\envpol^{t}$. Then, for any subsequent policy $\envpol^{l}$ with $l > t + L$, we have:
    \begin{equation*}
        F^*(s,i) - F^{\envpol^{l}}(s,i) \leq \frac{1}{2} \left( F^*(s,i) - F^{\envpol^{t}}(s,i) \right).
    \end{equation*}
\end{lemma}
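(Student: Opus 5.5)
We argue by contradiction, in the same way as Lemma~\ref{lemma:action_elimination} for RMDPs, but with Lemma~\ref{lem:value_contraction} replacing Lemma~\ref{lemma:lower_upper_combination_rmdp}. Suppose that for some $l > t + L$ we have
\[
F^*(s,i) - F^{\envpol^{l}}(s,i) > \tfrac{1}{2}\bigl(F^*(s,i) - F^{\envpol^{t}}(s,i)\bigr).
\]

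\textbf{Step 1: apply the halving lemma.} Take $\envpol = \envpol^t$ and $\envpol' = \envpol^l$. Since $(s,i)$ maximizes $f_{\envpol^t}$ by hypothesis, the condition of Lemma~\ref{lem:value_contraction} holds, and the lemma gives
\[
\infnorm{\valvector^* - \valvector^{l}} \ge \tfrac{1-\discountfactor}{2n}\,\infnorm{\valvector^* - \valvector^{t}}.
\]

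\textbf{Step 2: apply exponential convergence.} Lemma~\ref{lemma:consecutive_policy_iteration_bound} is stated from iteration $0$. We restart it at iteration $t$: its inductive proof uses only that each policy is the greedy improvement of the previous one, so it gives $\infnorm{\valvector^l - \valvector^*} \le \discountfactor^{l-t}\infnorm{\valvector^t - \valvector^*}$.

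\textbf{Step 3: handle the degenerate case.} If $\infnorm{\valvector^t - \valvector^*} = 0$, then $\valvector^t = \valvector^*$, and we must show the claim holds trivially. We argue that $\envpol^t$ then attains $\max_{\pvector\in\uncert(s)}\pvector^\top\valvector^*$ at every state. Because $\valvector^t=\valvector^*$, the Bellman equation for $\envpol^t$ gives $\pvector^{\envpol^t\top}_s\valvector^* = \max_{\pvector\in\uncert(s)}\pvector^\top\valvector^*$. The summation-by-parts identity in Lemma~\ref{lem:lower_bound} then makes every $f_{\envpol^t}(s,i)$ zero. If the maximizer $(s,i)$ has $\valvector^*_{s_i}>\valvector^*_{s_{i+1}}$, this forces $F^*(s,i)=F^{\envpol^t}(s,i)$. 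In the subcase $\valvector^*_{s_i}=\valvector^*_{s_{i+1}}$ the gap need not close, and this is a genuine gap that needs extra care. The cleanest fix is to use termination: the algorithm stops once the policy repeats, so no $\envpol^l$ with $l>t$ exists. The alternative is to note that subsequent policies equal $\envpol^t$ under deterministic tie-breaking.

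\textbf{Step 4: derive the contradiction.} When the error at $t$ is positive, combining Steps 1 and 2 and dividing by $\infnorm{\valvector^t-\valvector^*}$ gives
\[
\tfrac{1-\discountfactor}{2n} \le \discountfactor^{l-t}.
\]
Taking $\log_\discountfactor$, which reverses the inequality, yields $l - t \le \log_\discountfactor\frac{1-\discountfactor}{2n} \le L$. This contradicts $l > t+L$.

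The main obstacle is Step 3, together with checking that restarting the convergence bound at $t$ is legitimate. Steps 1, 2 and 4 are mechanical.
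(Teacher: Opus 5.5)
Your argument is the paper's proof: contradiction, Lemma~\ref{lem:value_contraction} with $\envpol=\envpol^t$ and $\envpol'=\envpol^l$, the $\discountfactor^{l-t}$ decay restarted at iteration $t$, a separate treatment of the zero-error case, then division and $\log_\discountfactor$. Your two fixes for that zero-error case are both slightly off: the termination argument (which the paper also uses) fails because $\envpol^{t+1}$, and possibly $\envpol^{t+2}$, is still generated, and the policies after $t$ need not equal $\envpol^t$. The clean statement is that every $\envpol^l$ with $l\ge t+1$ is the greedy response to $\valvector^{l-1}=\valvector^*$, so it equals $\envpol^*$ when the sort breaks ties as in Definition~\ref{def:cumulative_transition} (the convention under which Lemma~\ref{lemma:big_F_positive} is proved); the left-hand side is then $0$ and the inequality holds trivially.
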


\begin{proof}
    Let $(s, i)$ be the state and index pair defined in the lemma. Assume, for the sake of contradiction, that there exists some $l > t + L$ such that:
    \begin{equation*}
        F^*(s,i) - F^{\envpol^{l}}(s,i) > \frac{1}{2} \left( F^*(s,i) - F^{\envpol^{t}}(s,i) \right).
    \end{equation*}

    This assumption allows us to invoke Lemma~\ref{lem:value_contraction}, giving us a lower bound on the value error at step $l$. Simultaneously, the standard convergence property of policy iteration gives us an upper bound. We combine these bounds as follows:
    \begin{align*}
        \infnorm{\valvector^* - \valvector^{\envpol^{l}}} 
        &\leq \discountfactor^{l - t} \infnorm{\valvector^* - \valvector^{\envpol^{t}}} 
        && \text{(Upper bound via Lemma~\ref{lemma:policy_iteration_exponential_upperbound})} \\
        \infnorm{\valvector^* - \valvector^{\envpol^{l}}} 
        &\geq \frac{1 - \discountfactor}{2n} \infnorm{\valvector^* - \valvector^{\envpol^{t}}} 
        && \text{(Lower bound via Lemma~\ref{lem:value_contraction} and our assumption)}
    \end{align*}
    
    If $\infnorm{\valvector^{\envpol^t} - \valvector^*} = 0$, then $\envpol^t$ is optimal, and the algorithm terminates at iteration $t$. Because the sequence continues to $l > t$, we must have $\infnorm{\valvector^{\envpol^t} - \valvector^*} > 0$. Combining these two inequalities and dividing by this strictly positive term yields:
    \begin{align*}
        \discountfactor^{l - t} &\geq \frac{1 - \discountfactor}{2n} 
        && \text{(Canceling the positive value error norm)} \\
        l - t &\leq \log_{\discountfactor} \left( \frac{1 - \discountfactor}{2n} \right) 
        && \text{(Taking $\log_\discountfactor$, flipping the inequality since $\discountfactor < 1$)}.
    \end{align*}

    This implies $l - t \leq L$, which directly contradicts our initial premise that $l > t + L$. Thus, the assumption must be false, meaning the gap $F^*(s,i) - F^\envpol(s,i)$ must have decreased by at least half.
\end{proof}

\subsubsection{Combinatorial Bound on the Number of Halvings}
The previous Lemma showed that after $L$ steps of Algorithm \ref{algo:RMCPI}, the difference $F^*(s,i) - F^\envpol(s,i)$ decreases by at least a factor of $2$. Now we need to show that this decrease only happens at most a polynomial number of iterations. We can show that in fact this only happens $\mathcal{O}(n \log n)$ times. We thus have the following Lemma, which was shown in \cite{asadi2026strongly}, and we mention it for completeness.

\begin{lemma}\label{lemma:combinatorial_lemma}
    Let $c$ be a constant positive integer and let $W$ be a finite set of non-negative real numbers. We define the set of all bounded-coefficient subset sums of $W$ as:
    $$
        \mathcal{A}(W) = \left\{ \left| \sum_{w \in W} d_w w \right| \;\middle|\; d_w \in \{-c,\cdots, -1,0, 1,\cdots,c\} \right\}.
    $$
    Additionally, for any set of positive reals $Y$, let the degree $\mathrm{Deg}(Y)$ denote the number of unique Maximum Significant bits (MSBs) of the set $Y$ where numbers are written in binary, formally given by:
    $$
        \mathrm{Deg}(Y) = \left| \left\{ \lfloor \log_2 y \rfloor \;\middle|\; y \in Y \right\} \right|.
    $$
    Then, for any such set $W$, the number of distinct MSBs in $\mathcal{A}(W)$ is:
    $$
        \mathrm{Deg}(\mathcal{A}(W) \setminus \{0\}) = \mathcal{O}(|W| \log |W|).
    $$
\end{lemma}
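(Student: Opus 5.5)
We plan to prove the lemma by a reduction to the case $c=1$ followed by a scale-separation argument on the sorted elements of $W$. Let $k=|W|$ and $d_w\in\{-c,\dots,c\}$. First we replace each $w$ by $c$ copies, giving a multiset $W'$ of size $ck$ in which every sum with coefficients in $\{-c,\dots,c\}$ becomes a sum with coefficients in $\{-1,0,1\}$. Since $c$ is constant, $|W'|\log|W'| = \mathcal{O}(k\log k)$. It therefore suffices to treat $c=1$.

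Next we sort $W=\{w_1\ge w_2\ge\dots\ge w_k\}$ and look at the binary scales $e_j=\lfloor\log_2 w_j\rfloor$. For a nonzero signed sum $y=\sum_j d_j w_j$, let $j_0$ be the smallest index with $d_{j_0}\neq 0$. Then $|y|\le k\, w_{j_0}$, so $\lfloor\log_2|y|\rfloor \le e_{j_0}+\log_2 k+1$. The upper end of the MSB of $y$ thus lies within a window of size $\mathcal{O}(\log k)$ above one of the $k$ scales $e_j$. The difficulty is the lower end: cancellation can make $|y|$ much smaller than $w_{j_0}$.

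To control cancellation we group the scales. We partition the indices into maximal ``clusters'' in which consecutive elements satisfy $w_j/w_{j+1}\le 2k$, so that within a cluster scales differ by at most $\mathcal{O}(\log k)$ per step. Between clusters there is a gap of factor more than $2k$. Let $C$ be the first cluster containing a nonzero coefficient. The contribution $y_{>C}$ of all later clusters satisfies $|y_{>C}| \le k\cdot w_{\min C}/(2k) = w_{\min C}/2$. Either the partial sum $y_C$ over $C$ is zero, in which case we recurse to the next cluster, or we must show that $|y_C|$ is not tiny relative to the cluster's scale range.

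This last step is the main obstacle, since in general $y_C$ can be arbitrarily small through exact near-cancellation. The plan is to avoid bounding $|y|$ from below. Instead we count possible MSB values: we want to show that every MSB of a nonzero $y$ lies in $\bigcup_j [e_j - \mathcal{O}(\log k),\, e_j+\mathcal{O}(\log k)]$, together with at most $\mathcal{O}(k\log k)$ additional values. For this we would follow the argument of \cite{asadi2026strongly}. We expect it to proceed by induction on $k$, removing the largest element. Sums of the form $\pm w_1 + y'$ with $|y'|$ far from $w_1$ inherit MSBs from either $w_1$ or $y'$. Sums where $|y'|\approx w_1$ can produce new small MSBs, and we would charge each new MSB to a previously unused scale, giving $\mathcal{O}(\log k)$ new scales per step. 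Summing over $k$ steps would give the bound $\mathrm{Deg}(\mathcal{A}(W)\setminus\{0\})=\mathcal{O}(k\log k)$.

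The step we expect to fail is this charging argument. Whether near-cancellations $|y'|\approx w_1$ really contribute only $\mathcal{O}(\log k)$ new MSBs per inductive step is exactly what is unproven here, so the argument of \cite{asadi2026strongly} is needed at this point.
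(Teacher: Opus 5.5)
Your proposal has a genuine gap, and it sits exactly at the crux. You never give an argument that bounds the number of MSBs produced by near-cancelling sums. The reduction to $c=1$ is harmless, since it yields a multiset and any correct argument handles multisets. The cluster decomposition, however, only recovers the easy upper end $\lfloor\log_2|y|\rfloor\le e_{j_0}+\log_2 k+1$. Your target statement, that every MSB lies within $\mathcal{O}(\log k)$ of some $e_j$ apart from $\mathcal{O}(k\log k)$ exceptions, is essentially a restatement of the lemma. The inductive charging (each new largest element $w_1$ creates only $\mathcal{O}(\log k)$ new MSBs) is asserted without any mechanism. It is not evident that it holds step by step at all: the values $w_1-y'$ with $y'\in\mathcal{A}(W')$ close to $w_1$ can a priori sit at many scales. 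As written, the entire difficulty is deferred to the citation.

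The missing idea is to stop trying to locate the MSBs and argue by contradiction with linear algebra, which is what the paper does.

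\textbf{Step (a).} If nonzero reals $y_1,\dots,y_N$ have MSBs pairwise separated by at least a constant $D=D(c')$ with $2^{D-1}>2c'$, then no nontrivial integer combination $\sum_j x_jy_j$ with $\|x\|_\infty\le c'$ vanishes. The reason is that $|y_{j+1}|>2c'|y_j|$, so the term with the largest index and nonzero coefficient dominates all the others.

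\textbf{Step (b).} Each $y_j\in\mathcal{A}(W)$ can be written as $y_j=\sum_{w}\tilde d_{j,w}w$ with an integer vector $\tilde d_j\in\{-c,\dots,c\}^{|W|}$, absorbing the sign of the absolute value into $\tilde d_j$. Once $N\approx|W|\log_2|W|$, Siegel's lemma gives a nonzero $x\in\mathbb{Z}^N$ with $\|x\|_\infty$ bounded by a constant depending only on $c$ and $\sum_j x_j\tilde d_j=0$. Equivalently, use pigeonhole over $x\in\{0,\dots,X\}^N$, which works because $(X+1)^N>(2NcX+1)^{|W|}$ for a suitable constant $X$. It follows that $\sum_j x_jy_j=\sum_w w\sum_j x_j\tilde d_{j,w}=0$.

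\textbf{Conclusion.} Suppose $\mathrm{Deg}(\mathcal{A}(W)\setminus\{0\})>D|W|\log_2|W|$. Taking every $D$-th MSB produces $N$ elements that are independent by (a) and dependent by (b), a contradiction. The $\log|W|$ factor arises because Siegel's bound $(cN)^{|W|/(N-|W|)}$ becomes $\mathcal{O}(c)$ only once $N\gtrsim|W|\log|W|$. This argument never needs a lower bound on $|y|$ under cancellation, which is precisely the obstacle your approach cannot get past.
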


\begin{proof}
We show this using a proof by \cite{asadi2026strongly}; we present the full proof here for completeness. Throughout this proof, all logarithms are taken in base 2. The strategy proceeds in three steps: (i) we introduce a notion of bounded-coefficient linear independence and show that real numbers with sufficiently separated MSBs satisfy it; (ii) we use Siegel's lemma to extract a small-coefficient linear relation among any $\Theta(|W| \log |W|)$ elements of $\mathcal{A}(W)$; (iii) we combine these via a greedy selection on MSBs to derive a contradiction whenever $\mathrm{Deg}(\mathcal{A}(W) \setminus \{0\})$ exceeds the claimed bound.

\paragraph{Step 1: MSB separation implies linear independence.}
For a positive integer $c'$, we call reals $y_1, \dots, y_N$ \emph{$c'$-linearly independent} if $\sum_{i=1}^N x_i y_i \neq 0$ for every non-zero $\underline{x} \in \mathbb{Z}^N$ with $\|\underline{x}\|_\infty \leq c'$.

\begin{claim}\label{clm:lin-ind}
There exists a positive integer $D = D(c')$ such that for any $N \geq 1$, any non-zero reals $y_1, \dots, y_N$ satisfying
$$
\lfloor \log_2 |y_{i+1}| \rfloor \geq \lfloor \log_2 |y_i| \rfloor + D \quad \text{for } 1 \leq i < N
$$
are $c'$-linearly independent.
\end{claim}

\begin{proof}[Proof of Claim~\ref{clm:lin-ind}]
Choose $D$ large enough that $2^{D-1} > 2c'$ (e.g., $D = \lceil \log_2 c' \rceil + 3$). The MSB-separation hypothesis implies $|y_{i+1}|/|y_i| > 2^{D-1} > 2c'$ for every $i$, so in particular $|y_{i+1}| > 2|y_i|$. Iterating yields $|y_{N-1}| > \sum_{i=1}^{N-2} |y_i|$ via a geometric sum, hence
$$
|y_N| > 2c' |y_{N-1}| > c' \sum_{i=1}^{N-1} |y_i|.
$$
Suppose $\sum_{i=1}^N x_i y_i = 0$ with $\|\underline{x}\|_\infty \leq c'$. If $x_N \neq 0$, then $|x_N| \geq 1$ and
$$
|x_N y_N| \geq |y_N| > c'\sum_{i=1}^{N-1} |y_i| \geq \left| \sum_{i=1}^{N-1} x_i y_i \right|,
$$
contradicting $x_N y_N = -\sum_{i<N} x_i y_i$. Hence $x_N = 0$, and induction on $N$ gives $\underline{x} = \underline{0}$.
\end{proof}

\paragraph{Step 2: Siegel's lemma.}
We use the following standard application of Siegel's lemma (see, e.g., \cite{HindrySilverman}). There exist absolute constants $n_0 \geq 4$ and $C_0 > 0$ such that for every integer $n \geq n_0$, every real $c \geq 1$, and every integer matrix $A$ with $n$ columns, $N$ rows satisfying $n \log_2 n - 1 \leq N \leq n \log_2 n$, and $\|A\|_\infty \leq c$, there exists a non-zero $\underline{x} \in \mathbb{Z}^N$ in the left-kernel of $A$ with
$$
\|\underline{x}\|_\infty \leq C_0 \cdot c.
$$
This follows from the classical bound $\|\underline{x}\|_\infty \leq (\|A\|_\infty N)^{n/(N-n)}$. For $n \geq n_0$ sufficiently large, we have $n/(N-n) \leq 1/(\log_2 n - 1 - 1/n) \leq 1$, which justifies the inequality $(cN)^{n/(N-n)} \leq c \cdot N^{n/(N-n)}$ since $c \geq 1$. The remaining factor $N^{n/(N-n)}$ converges to $2$ as $n \to \infty$ (an absolute constant independent of $c$), and is therefore uniformly bounded above by $C_0$ for all $n \geq n_0$. We may also enlarge $n_0$ to ensure $n_0 \log_2 n_0 - 1 \geq n_0$, so that the row count $N$ strictly exceeds the column count $n$ and the kernel is guaranteed to be non-trivial.

\paragraph{Step 3: Combining the steps.}
Set $c' := \lceil C_0 c \rceil$ and let $D := D(c')$ be the constant from Claim~\ref{clm:lin-ind}; both depend only on $c$. Let $K := \mathrm{Deg}(\mathcal{A}(W) \setminus \{0\})$.

\emph{Base case.} If $|W| < n_0$, then $|\mathcal{A}(W)| \leq (2c+1)^{|W|} < (2c+1)^{n_0}$, which is a constant depending only on $c$. In particular, $K$ is bounded by this constant, and the lemma holds trivially. Henceforth assume $|W| \geq n_0$.

\emph{Inductive case.} We show $K \leq D \cdot |W| \log_2 |W|$. Assume for contradiction that $K > D \cdot |W| \log_2 |W|$. Sort the distinct MSBs of $\mathcal{A}(W) \setminus \{0\}$ as $m_1 < m_2 < \cdots < m_K$. Picking every $D$-th MSB yields $\lceil K / D \rceil > |W| \log_2 |W|$ MSBs with consecutive gaps at least $D$. Truncate to $N := \lfloor |W| \log_2 |W| \rfloor$ such MSBs, so $|W| \log_2 |W| - 1 \leq N \leq |W| \log_2 |W|$, and pick a representative $y_j \in \mathcal{A}(W) \setminus \{0\}$ with $\lfloor \log_2 y_j \rfloor$ equal to the $j$-th selected MSB. By Claim~\ref{clm:lin-ind}, $y_1, \dots, y_N$ are $c'$-linearly independent.

On the other hand, by definition of $\mathcal{A}(W)$, each $y_j$ admits a representation
$$
y_j = \left| \sum_{w \in W} d_{j,w} w \right| \quad \text{with } d_{j,w} \in \{-c, \dots, c\}.
$$
Letting $\sigma_j \in \{-1, +1\}$ be such that $y_j = \sigma_j \sum_w d_{j,w} w$, and setting $\tilde{d}_{j,w} := \sigma_j d_{j,w}$, we still have $|\tilde{d}_{j,w}| \leq c$ and $y_j = \sum_{w \in W} \tilde{d}_{j,w} w$. Let $A \in \mathbb{Z}^{N \times |W|}$ be the matrix with entries $A_{j,w} = \tilde{d}_{j,w}$; then $\|A\|_\infty \leq c$. Since $|W| \geq n_0$ and $|W| \log_2 |W| - 1 \leq N \leq |W| \log_2 |W|$, Step 2 yields a non-zero $\underline{x} \in \mathbb{Z}^N$ with $\|\underline{x}\|_\infty \leq C_0 c \leq c'$ such that $A^\top \underline{x} = 0$, i.e., $\sum_{j=1}^N x_j \tilde{d}_{j,w} = 0$ for every $w \in W$. Consequently,
$$
\sum_{j=1}^N x_j y_j = \sum_{j=1}^N x_j \sum_{w \in W} \tilde{d}_{j,w} w = \sum_{w \in W} w \sum_{j=1}^N x_j \tilde{d}_{j,w} = 0,
$$
contradicting the $c'$-linear independence of $y_1, \dots, y_N$. Therefore $K \leq D \cdot |W| \log_2 |W| = \mathcal{O}(|W| \log |W|)$.
\end{proof}

Finally, we will show that Algorithm \ref{algo:RMCPI} is strongly polynomial for $L_1$ and $L_\infty$ uncertainty sets. We prove \cref{thm:strongly_poly} by showing it separately for $L_1$ and $L_\infty$ uncertainty sets.

\begin{proof}[Proof of \Cref{thm:strongly_poly}~ for $L_1$ uncertainty sets]
    For any state $s \in \states$, Property~\ref{prop:policy_structure} describes the structure of the transition probabilities generated by Algorithm~\ref{alg:lone_max}. Probability mass is shifted to the highest-value state by fully taking mass from a set of states $Z_\pvector$ and partially taking mass from at most one state $I_\pvector$. The mass taken from $I_\pvector$ is the leftover budget: $\frac{\radius_s}{2} - \sum_{z \in Z_\pvector} \nominal_{s,z}$. 
    
    When we calculate the cumulative probability $F^\envpol(s,i)$ over any prefix set $S_i$, the sum simplifies algebraically. If $I_\pvector \in S_i$, the $-\tfrac{\radius_s}{2}$ from the partial term cancels the receiver's $+\tfrac{\radius_s}{2}$, while the remaining $\sum_{z \in Z_\pvector} \nominal_{s,z}$ adds each depleted state's nominal with coefficient $+1$ (regardless of whether $z \in S_i$). This ensures that $F^\envpol(s,i)$ is always a linear combination of the nominal probabilities $\nominal_{s,j}$ and the budget $\frac{\radius_s}{2}$, using only the coefficients $\{-2,-1, 0, 1,2\}$. 
    
    Therefore, the difference $F^*(s,i) - F^\envpol(s,i)$ strictly belongs to the set of unitary signed subset sums $\mathcal{A}(W_s)$, which is defined over the base set $W_s = \left\{ \nominal_{s,j} \mid j \in \states \right\} \cup \left\{ \frac{\radius_s}{2}, 1 \right\}$.

    By Lemma~\ref{lemma:combinatorial_lemma}, the number of distinct MSBs for the non-zero elements in $\mathcal{A}(W_s)$ is bounded by $\mathcal{O}(|W_s| \log |W_s|)$. Because the size of the base set is $|W_s| \leq n + 2$, the total number of possible MSBs is bounded by $\mathcal{O}(n \log n)$.

    From Lemma~\ref{lemma:subaction_not_repeat}, we know that every $L = \lceil \log_{\discountfactor} \left( \frac{1 - \discountfactor}{2n} \right) \rceil$ iterations, the maximum cumulative probability gap $F^*(s,i) - F^\envpol(s,i)$ must decrease by at least half. Since the optimal values are fixed, this halving directly corresponds to dropping to a lower MSB within the set $\mathcal{A}(W_s)$. 
    
    There are at most $n^2$ state-index pairs $(s,i)$, and for any specific pair the gap $F^*(s,i)-F^\envpol(s,i)$ can drop its MSB at most $\mathcal{O}(n\log n)$ times. Combining with the $L$ iterations per drop yields $n^2\cdot\mathcal{O}(n\log n)\cdot L=\mathcal{O}(n^3\log n\cdot L)$.
\end{proof}

\begin{proof}[Proof of \Cref{thm:strongly_poly}~ for $L_\infty$ uncertainty sets]
    For any state $s \in \states$, Property~\ref{prop:linf_policy_structure} describes the structure of the transition probabilities generated by Algorithm~\ref{alg:linf_max}. Probability mass is moved between states under strict $L_\infty$ constraints. As a result, the final probability $\pvector_{s, s'}$ for any successor state is always formed by adding or subtracting the budget $\radius_s$ from the nominal probabilities $\nominal_{s,s'}$, or by capping at the limits $0$ or $1$.

    Therefore, the difference between the optimal probability and the current policy's probability, $\pvector^*_{s,s'} - \pvector^\envpol_{s,s'}$, strictly belongs to the set of signed subset sums $\mathcal{A}(X_s)$ with coefficients in $\{-2,-1,0,1,2\}$. This set is defined over the base set $X_s = \left\{ \nominal_{s,j} \mid j \in \states \right\} \cup \left\{ 1 \right\} \cup \left\{ \radius_s, 2 \cdot \radius_s, \cdots, n \cdot \radius_s \right\}$.

    The multiples $2\radius_s,\dots,n\radius_s$ are included because the at-most-one incomplete coordinate, by conservation $\sum_{s'} \pvector^\envpol_{s,s'} = 1$, contributes a coefficient $|D_\pvector|-|R_\pvector| \in \{-n,\dots,n\}$ on $\radius_s$. Hence the cumulative difference $F^*(s,i)-F^\envpol(s,i)$ may carry a coefficient up to $\pm 2n$ on $\radius_s$, which we encode using a single base element $k\cdot\radius_s$ with coefficient $\pm 1$. This keeps the coefficient bound in $\mathcal{A}(X_s)$ a constant, as Lemma~\ref{lemma:combinatorial_lemma} requires.

    By Lemma~\ref{lemma:combinatorial_lemma}, the number of distinct MSBs for the non-zero elements in $\mathcal{A}(X_s)$ is bounded by $\mathcal{O}(|X_s| \log |X_s|)$. Because the size of the base set is $|X_s| \leq 2n + 1$, the total number of possible MSBs is strictly bounded by $\mathcal{O}(n \log n)$.

    From our convergence analysis, we know that every $L = \lceil \log_{\discountfactor} \left( \frac{1 - \discountfactor}{2n} \right) \rceil$ iterations, the maximum cumulative probability gap $F^*(s,i) - F^\envpol(s,i)$ must decrease by at least half. Since the optimal values remain fixed, this halving directly corresponds to dropping to a lower MSB within the set $\mathcal{A}(X_s)$. 
    
    There are exactly $n^2$ possible state-index pairs $(s, i)$ that can dictate this maximum cumulative probability gap. For any specific pair, the associated probability discrepancy can drop its MSB at most $\mathcal{O}(n \log n)$ times. Because it takes at most $L$ steps to force a drop, the total number of policy iteration steps before termination is bounded by $n^2 \cdot \mathcal{O}(n \log n) \cdot L = \mathcal{O}(n^3 \log n \cdot L)$. This represents a polynomial number of iterations, completing the proof.
\end{proof}

\section{Proof of \texorpdfstring{\cref{thm:lp-comp-class}}{Theorem 3}: Complexity of RMC with \texorpdfstring{$L_p$}{Lp} Uncertainty Sets} \label{app:thm3}

In this section, we analyze the complexity of the Problem~$\discrmc$ with $L_p$ uncertainty sets for any constant integer $p > 1$. 
Since the goal of this section is to show the complexity bounds for this problem, we consider the Turing model of computation, where all the numbers in the input are assumed to be rational. 

First, we analyze the $p$-ROOT-SUM complexity class. We prove that this class is polynomially equivalent to the class of $\frac{p}{q}$-ROOT-SUM where $\gcd(p,q)=1$. 
Second, we construct a polynomial-time reduction from the $\frac{p}{p-1}$-ROOT-SUM problem to the decision version of the main problem, which shows that it is $p$-ROOT-SUM-hard.
Finally, we establish an upper bound on the complexity of the problem. We show that it belongs to the class $\etr$. We achieve this by modeling the main problem (computing the exact values and optimal strategy) as an $\etr$ formula. The corresponding decision problem can then be easily solved by adding a threshold constraint to this formula.

\subsection{Equivalence of ROOT-SUM Problems}

Our main objective in this subsection is to prove that the $p$-ROOT-SUM problem and the $\frac{p}{q}$-ROOT-SUM problem (where $\gcd(p,q)=1$) are polynomial-time equivalent.

First, we show that the $p$-ROOT-SUM problem is reducible to the $kp$-ROOT-SUM problem. Next, we introduce a polynomial-time greedy decomposition algorithm that expresses any integer as a sum of $p$-th powers using a logarithmically bounded number of terms. Then, we construct a formal reduction from the $p$-ROOT-SUM problem to the $\frac{p}{q}$-ROOT-SUM problem. Finally, \Cref{lem:p-p/q-complexity} combines these steps to conclude the polynomial-time equivalence of these classes.

In the following lemma, we show that the $kp$-ROOT-SUM problem is harder than $p$-ROOT-SUM problem for any rational number $p$ and any constant integer $k > 1$:

\begin{lemma}\label{lem:kp_reduction}
For any rational number $p$ and any constant integer $k > 1$, the $p$-ROOT-SUM problem is polynomial-time reducible to the $kp$-ROOT-SUM problem.
\end{lemma}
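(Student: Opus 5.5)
The reduction replaces each input integer $a_i$ by its $k$-th power and keeps the threshold unchanged. Given an instance $(a_1,\dots,a_n;\Threshold)$ of $p$-ROOT-SUM, which asks whether $\sum_i a_i^{1/p} \geq \Threshold$, we output the instance $(a_1^k,\dots,a_n^k;\Threshold)$ of $kp$-ROOT-SUM. The key identity is
\[
(a_i^k)^{\frac{1}{kp}} = a_i^{\frac{k}{kp}} = a_i^{\frac{1}{p}},
\]
which holds because $a_i > 0$ and real powers of positive numbers satisfy $(x^u)^v = x^{uv}$. This holds for any rational (indeed real) $p \neq 0$. Consequently $\sum_i (a_i^k)^{1/(kp)} = \sum_i a_i^{1/p}$, so the new instance is a yes-instance exactly when the original one is. The $a_i^k$ are again positive integers and the threshold is still an integer, so the output is a well-formed $kp$-ROOT-SUM instance.

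Next I would check that the map runs in polynomial time. Since $k$ is a constant, each $a_i^k$ has bit-size at most $k \cdot \size(a_i)$. It can be computed by $k-1$ multiplications of integers of polynomial bit-length, so the total output size is $O(k \sum_i \size(a_i))$, which is linear in the input size. The threshold is copied unchanged.

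The only point needing care is that $p$ is rational rather than integer, so ``$kp$-ROOT-SUM'' should be read as the analogous problem with exponent $1/(kp)$ (as with the $\frac{p}{q}$-ROOT-SUM problems later). The identity above still applies verbatim, so I expect no real obstacle; the main thing to state explicitly is that the input integers remain positive integers after powering, so the target problem's input format is respected.
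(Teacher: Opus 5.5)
Your proposal is correct and follows the same approach as the paper's proof. Both replace each $a_i$ by $a_i^k$, keep the threshold unchanged, and use the identity $(a_i^k)^{1/(kp)} = a_i^{1/p}$. Polynomial time then follows because $k$ is a constant.
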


\begin{proof}
Let an instance of the $p$-ROOT-SUM problem be given by a sequence of positive integers $a_1, \dots, a_n$ and an integer threshold $\Threshold$. We need to decide if the following holds:
\[ \sum_{i=1}^n \sqrt[p]{a_i} \ge \Threshold \]

We construct an instance of the $kp$-ROOT-SUM problem. For each index $i$, we define a new value $a'_i = a_i^k$. Because $a_i$ and $k$ are integers, $a'_i$ is also an integer. 

By the properties of exponents, we have:
\[ \sqrt[kp]{a'_i} = (a'_i)^{\frac{1}{kp}} = (a_i^k)^{\frac{1}{kp}} = a_i^{\frac{k}{kp}} = a_i^{\frac{1}{p}} = \sqrt[p]{a_i} 
\Rightarrow
\sum_{i=1}^n \sqrt[kp]{a'_i} = \sum_{i=1}^n \sqrt[p]{a_i} \]

Therefore, the condition $\sum_{i=1}^n \sqrt[kp]{a'_i} \ge \Threshold$ is true if and only if $\sum_{i=1}^n \sqrt[p]{a_i} \ge \Threshold$. 

Because $k$ is a constant, computing $a_i^k$ for all $n$ terms takes polynomial time and has a polynomial number of bits. Therefore, the reduction is polynomial.
\end{proof}

Now we need an efficient way to write integers as a sum of powers of $p$. The following lemma provides a polynomial-time algorithm to express any integer as a sum of $p$-th powers.

\begin{lemma}\label{lem:greedy_powers}
    Let $p$ be a constant positive integer greater than zero.
    Any integer $n$ can be written as a sum $n = \sum_{i=1}^{k} a_i^p$ for some positive integers $a_i$, 
    where the number of terms $k = O(\log \log n)$. Furthermore, the sequence of integers $a_i$ can be computed in $\mathcal{O}(\log n)$ time.
\end{lemma}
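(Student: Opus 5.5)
The plan is to use the obvious greedy algorithm: repeatedly subtract the largest $p$-th power not exceeding the current remainder. Set $r_0 = n$. While $r_j > 0$, let $a_{j+1} = \lfloor r_j^{1/p} \rfloor$ and $r_{j+1} = r_j - a_{j+1}^p$. Each $a_{j+1} \geq 1$ whenever $r_j \geq 1$, so the remainders strictly decrease and the process terminates with $n = \sum_j a_j^p$. (For $p=1$ the statement is trivial with $k=1$, so assume $p \geq 2$.)

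The key step is to bound how fast the remainder shrinks. Write $a = \lfloor r^{1/p} \rfloor$, so $r < (a+1)^p$ and
\[
r - a^p < (a+1)^p - a^p \leq p(a+1)^{p-1} \leq p\,(2a)^{p-1} \leq p 2^{p-1} r^{(p-1)/p}.
\]
Hence $r_{j+1} \leq C_p\, r_j^{1-1/p}$ with $C_p = p2^{p-1}$ a constant. Taking logarithms with $\ell_j = \log r_j$ gives $\ell_{j+1} \leq (1-1/p)\ell_j + \log C_p$.

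This is a contraction toward the fixed point $\ell^\ast = p \log C_p$, which is a constant. Therefore, after $O(\log \ell_0)$ steps we have $\ell_j \leq 2\ell^\ast$. Concretely, $\ell_j - \ell^\ast \leq (1-1/p)^j(\ell_0 - \ell^\ast)$, so $j = O(p \log\log n)$ suffices. At that point the remainder is bounded by the constant $C_p^{2p}$. From there, the greedy process (even one subtraction of $1^p$ at a time) needs only $O(1)$ more terms. This gives $k = O(\log\log n)$ for constant $p$.

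\textbf{Main obstacle.} The delicate point is the running-time claim of $O(\log n)$, which must be read in a suitable model. Each step requires an integer $p$-th root $\lfloor r^{1/p}\rfloor$ and a power $a^p$. I would compute the root by binary search or Newton iteration on $O(\log r)$-bit numbers. The plan is to argue that the bit-lengths $\ell_j$ decay geometrically (by factor $1-1/p$), so the total work summed over all steps is dominated by the first step. Charging each bit-operation pass at $O(\ell_j)$ arithmetic-on-words, or counting arithmetic operations per bit of the current remainder, then gives $\sum_j O(\ell_j) = O(\log n)$. If the exact model is bit complexity, I would state the bound as polynomial in $\log n$ and note that the geometric decay still makes the total equal to the cost of the first iteration up to constants. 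Either way, this suffices for the polynomial-time reductions in which the lemma is used.
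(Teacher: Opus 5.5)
Your proposal is correct and follows essentially the same route as the paper: the greedy subtraction of the largest $p$-th power, a bound $r_{j+1} \le C\, r_j^{(p-1)/p}$ (the paper uses $C = 2^p$ and the shifted quantity $\log_2 n_j - p^2$, while you use $C_p = p\,2^{p-1}$ and the fixed point $p\log C_p$), geometric contraction of the logarithm to a constant in $O(\log\log n)$ steps, and an $O(1)$ tail. Your running-time discussion is more careful than the paper's, which only argues that each of the $O(\log\log n)$ root extractions takes time polynomial in the bit-size and so does not by itself establish the stated $\mathcal{O}(\log n)$ bound.
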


\begin{proof}

We construct the sequence $a_i$ using the greedy approach by \cref{alg:greedy_powers}.

    \begin{algorithm}[ht]
    \caption{Greedy $p$-th Power Decomposition}
    \label{alg:greedy_powers}
    \begin{algorithmic}[1]
    \State \textbf{Input:} Integer $n > 0$, Integer $p > 1$
    \State \textbf{Output:} Sequence $(a_1, a_2, \dots, a_k)$ such that $\sum_{i=1}^k a_i^p = n$
    \State $n_0 \gets n$
    \State $j \gets 0$
    \While{$n_j > 0$}
        \State $a_{j+1} \gets \lfloor n_j^{1/p} \rfloor$ \Comment{Find the largest integer root}
        \State $n_{j+1} \gets n_j - a_{j+1}^p$ \Comment{Update the remainder}
        \State $j \gets j + 1$
    \EndWhile
    \State $k \gets j$
    \State \Return $(a_1, a_2, \dots, a_k)$
    \end{algorithmic}
    \end{algorithm}

    Let $n_0 = n$. For each step $j \ge 0$, if $n_j > 0$, the algorithm chooses the largest integer $a_{j+1} = \lfloor n_j^{1/p} \rfloor$ such that $a_{j+1}^p \le n_j$, and updates the remainder to $n_{j+1} = n_j - a_{j+1}^p$. We bound $n_{j+1}$ as follows:
    \begin{align*}
        n_{j+1} &= n_j - a_{j+1}^p 
        && \text{(Definition of the $n_{j+1}$)} \\
        &< (a_{j+1} + 1)^p - a_{j+1}^p 
        && \text{(Since $(a_{j+1} + 1)^p > n_j \geq a_{j+1}^p$)} \\
        &= \sum_{\ell=0}^{p-1} \binom{p}{\ell} a_{j+1}^\ell 
        && \text{(Binomial expansion)} \\
        &\le \left( \sum_{\ell=0}^{p-1} \binom{p}{\ell} \right) a_{j+1}^{p-1} 
        && \text{(Since $a_{j+1} \ge 1$, we have $a_{j+1}^\ell \le a_{j+1}^{p-1}$)} \\
        &< 2^p a_{j+1}^{p-1} 
        && \text{(Sum of binomial coefficients)} \\
        &\le 2^p n_j^{\frac{p-1}{p}} 
        && \text{(Since $a_{j+1} \le n_j^{\frac{1}{p}}$)} \\
        \Rightarrow \log_2 n_{j+1} &\le p + \frac{p-1}{p} \log_2 n_j.
        && \text{(Properties of logarithms)}
        \\
        \Rightarrow \log_2 n_{j+1} - p^2 &\le \frac{p-1}{p} (\log_2 n_j - p^2).
        && \text{(Subtracting $p^2$ from both sides)}
    \end{align*}

    This shows that $\log_2 n_j - p^2$ decreases geometrically by the constant factor $0 < \frac{p-1}{p} < 1$. Therefore, the number of steps required for $\log_2 n_j$ to drop below the constant $p^2+1$ is $O(\log\log n_j)$.
    
    Once $\log_2 n_j \le p^2+1$, we have $n_j \le 2^{p^2+1}$. Since $n_j$ is an integer that decreases by at least $1$ in each step, the algorithm will reach $0$ in at most $2^{p^2+1}$ steps. Since $p$ is a constant, this takes $O(1)$ steps. Because $n_0 = n$, the total number of terms $k = O(\log \log n) + O(1) = O(\log \log n)$.

    Finally, at each step $j$, computing the largest integer root $a_{j+1} = \lfloor n_j^{1/p} \rfloor$ takes polynomial time with respect to the bit-size of $n_j$. Because there are $O(\log \log n)$ steps, the entire sequence $a_i$ is computed in polynomial time.
\end{proof}

Next, we show that we can change the root from $p$ to $p/q$ without changing the complexity of the problem. The following lemma provides a polynomial-time reduction for this step.

\begin{lemma}\label{lem:pq_reduction}
For any constant positive integers $p$ and $q$ such that $\gcd(p,q)=1$, the $p$-ROOT-SUM problem is polynomial-time reducible to the $\frac{p}{q}$-ROOT-SUM problem.
\end{lemma}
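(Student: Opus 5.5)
We need to decide whether $\sum_{i=1}^n a_i^{1/p} \ge \Threshold$ using an oracle for $\frac{p}{q}$-ROOT-SUM, which on input integers $b_j$ evaluates $\sum_j b_j^{q/p}$. The plan is to rewrite each term $a_i^{1/p}$ as a sum of terms of the form $b^{q/p}$ with integer $b$, using B\'ezout together with the greedy decomposition of \Cref{lem:greedy_powers}.

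\textbf{Step 1 (exponent via B\'ezout).} Since $\gcd(p,q)=1$, there are integers $u,v$ with $uq - vp = 1$, and we may take $u \ge 1$ (and hence $v\ge 0$ when $q\ge1$). Then $\frac{1}{p} = \frac{uq}{p} - v$, so
\[
a_i^{1/p} = \frac{(a_i^{u})^{q/p}}{a_i^{v}}.
\]
This expresses each term as a $q/p$-power divided by an integer. Because $p,q$ are constants, $u$ and $v$ are constants, and $a_i^u$ and $a_i^v$ have polynomial bit-size.

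\textbf{Step 2 (clearing the denominators).} Let $M = \prod_i a_i^{v}$, which has polynomial size since $n$ factors of polynomial size give polynomial total bit-length. Multiplying the inequality by $M$ gives
\[
\sum_i M_i\,(a_i^{u})^{q/p} \ge \Threshold M, \qquad M_i = M/a_i^{v}\in\Nb.
\]
A positive integer coefficient can be absorbed into the base, but only if it is a $q$-th power: $c^{q}\,b^{q/p} = (c^{p} b)^{q/p}$. So write $M_i = \sum_{k} c_{ik}^{q}$ using \Cref{lem:greedy_powers} with exponent $q$, which gives $O(\log\log M_i)$ terms computable in polynomial time. (If $q=1$, this step is trivial.) Then
\[
\sum_i M_i\,(a_i^{u})^{q/p} = \sum_{i,k}\bigl(c_{ik}^{p}\,a_i^{u}\bigr)^{q/p}.
\]
This is a $\frac{p}{q}$-ROOT-SUM instance with integers $b_{ik} = c_{ik}^p a_i^u$ and integer threshold $\Threshold M$. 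Its size is polynomial, and its answer coincides with that of the original instance.

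\textbf{Main obstacle.} The main difficulty is controlling sizes: $M$ being a product over all $i$ must stay polynomial, which holds because $v$ is constant. One could avoid it entirely by choosing $u,v$ so that $v=0$, but that requires $q\mid 1$, so in general the product is needed. A second concern is ensuring that the threshold remains an integer and that the coefficient absorption is exact; both are handled by the integrality of $M$, of the $M_i$, and of the $q$-th power decomposition. Finally, I would double-check the edge cases: $a_i = 1$ works trivially, and choosing $u\ge1$ guarantees $v \ge 0$, so that $M$ is an integer rather than a fraction.
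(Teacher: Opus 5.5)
Your proposal is correct and follows essentially the same route as the paper. Your B\'ezout pair with $uq - vp = 1$ is exactly the paper's choice of $k$ with $q \mid kp+1$ (take $v = k$ and $u = (kp+1)/q$). Like the paper, you then multiply through by $\prod_j a_j^{v}$ and absorb each integer coefficient by decomposing it into $q$-th powers with the greedy lemma, giving the same $\frac{p}{q}$-ROOT-SUM instance.
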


\begin{proof}

Because $\gcd(p,q)=1$, modular arithmetic guarantees there exists a positive integer $k < q$ such that $q$ divides $kp+1$. Because $k < q$, we can treat $k$ as a constant.

Let an instance of the $p$-ROOT-SUM problem be given by a sequence of positive integers $a_1, \dots, a_n$ and an integer threshold $\Threshold$. We want to decide if:
$$ \sum_{i=1}^n a_i^{\frac{1}{p}} \ge \Threshold $$

We multiply both sides of the inequality by $w := \prod_{j=1}^n a_j^k$.
We rewrite the new term $w \cdot a_i^{\frac{1}{p}}$ as follows:

\begin{align*}
        w \cdot a_i^{\frac{1}{p}} &= z_i \cdot a_i^k \cdot a_i^{\frac{1}{p}} 
        && \text{(Factoring out $z_i:=\prod_{j \neq i} a_j^k$ from $w$)} \\
        &= z_i \cdot a_i^{\frac{kp+1}{p}} 
        && \text{(Combining the exponents of $a_i$)} \\
        &= z_i \left( a_i^{\frac{kp+1}{q}} \right)^{\frac{q}{p}} 
        && \text{(Properties of exponents)} \\
        &= z_i y_i^{\frac{q}{p}},
        && \text{(Substituting integer $y_i := a_i^{\frac{kp+1}{q}}$)}
    \end{align*}
    where $y_i$ is an integer because $q$ divides $kp+1$.

    By \cref{lem:greedy_powers}, we can decompose each integer $z_i$ into a sum of $q$-th powers of integers $c_{i,j}$. This gives $z_i = \sum_{j=1}^{m_i} c_{i,j}^q$, where the number of terms $m_i$ is in $O(\log \log z_i)$. We define $b_{i,j} := c_{i,j}^p y_i$ and simplify the main inequality:
    \begin{align*}
        \sum_{i=1}^n w \cdot a_i^{\frac{1}{p}} \ge w\Threshold 
        &\iff \sum_{i=1}^n z_i y_i^{\frac{q}{p}} \ge w\Threshold 
        && \text{(Substituting the term equivalence)} \\
        &\iff \sum_{i=1}^n \left( \sum_{j=1}^{m_i} c_{i,j}^q \right) y_i^{\frac{q}{p}} \ge w\Threshold 
        && \text{(Applying Lemma~\ref{lem:greedy_powers} to decompose $z_i$)} \\
        &\iff \sum_{i=1}^n \sum_{j=1}^{m_i} \left( c_{i,j}^p y_i \right)^{\frac{q}{p}} \ge w\Threshold 
        && \text{(Distributing $y_i^{\frac{q}{p}}$ inside the inner sum)} \\
        &\iff \sum_{i=1}^n \sum_{j=1}^{m_i} b_{i,j}^{\frac{q}{p}} \ge w\Threshold.
        && \text{(Substituting $b_{i,j}$)}
    \end{align*}

    This final inequality is an instance of the $\frac{p}{q}$-ROOT-SUM problem with the threshold $w\Threshold$. 

    Because $k$ is a constant, computing and storing $w, z_i, y_i$ takes polynomial time and memory in the bit-size of $a_i$. Furthermore, the number of new terms is $O(n \log \log z_i)$. By \cref{lem:greedy_powers} $c_{i,j}$ is computed in polynomial time. Finally, $b_{i,j}$ can be computed by the arithmetic operations. Therefore, this reduction is valid and runs in polynomial time.
\end{proof}

By combining the previous reductions, we can establish the equivalence result for this subsection. This shows that the denominator in the fractional root in the $p$-ROOT-SUM problem does not affect the complexity.

\begin{lemma} \label{lem:p-p/q-complexity}
For any constant positive integers $p$ and $q$ such that $\gcd(p,q)=1$, the $p$-ROOT-SUM problem and the $\frac{p}{q}$-ROOT-SUM problem are polynomial-time equivalent. In other words, the denominator does not matter in the complexity.
\end{lemma}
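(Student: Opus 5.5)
One direction is already \Cref{lem:pq_reduction}: for coprime constants $p,q$, the $p$-ROOT-SUM problem reduces in polynomial time to the $\frac{p}{q}$-ROOT-SUM problem. So the plan is to prove only the converse, a polynomial-time reduction from $\frac{p}{q}$-ROOT-SUM to $p$-ROOT-SUM. Together with \Cref{lem:pq_reduction} this gives the equivalence.

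For the converse, start from an instance with positive integers $a_1,\dots,a_n$ and threshold $\Threshold$. We must decide whether $\sum_i a_i^{q/p} \ge \Threshold$. Rewrite each term as
\[
a_i^{q/p} = \left(a_i^{q}\right)^{1/p}.
\]
Since $q$ is a constant, $a_i' := a_i^{q}$ is an integer of bit-size $O(q\cdot \size(a_i))$, computable in polynomial time. The same threshold $\Threshold$ is kept. Then $\sum_i a_i^{q/p} \ge \Threshold$ holds if and only if $\sum_i (a_i')^{1/p} \ge \Threshold$, so the transformed instance is a $p$-ROOT-SUM instance with the same answer.

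This direction does not need the greedy decomposition of \Cref{lem:greedy_powers} or the multiplier $w$. Those tools are only required in the other direction, because there the exponent's numerator has to be absorbed into integer coefficients. Here raising to the integer power $q$ suffices.

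The only delicate point, not really an obstacle, is checking that the blow-up stays polynomial. This holds because $p$ and $q$ are constants. I would also note that \Cref{lem:kp_reduction} is not needed. Finally, I would record the consequence used later: taking $q=p-1$, which is coprime to $p$, gives that $p$-ROOT-SUM and $\frac{p}{p-1}$-ROOT-SUM are polynomial-time equivalent.
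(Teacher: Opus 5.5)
Your proposal is correct and is essentially the paper's proof. The paper also takes the direction from $p$-ROOT-SUM to $\frac{p}{q}$-ROOT-SUM from \Cref{lem:pq_reduction}, and it obtains the converse from \Cref{lem:kp_reduction} applied to the rational number $\frac{p}{q}$ with $k=q$. That lemma's construction is exactly your substitution $a_i' = a_i^{q}$, so your inline argument reproves that special case rather than bypassing \Cref{lem:kp_reduction}; it also covers $q=1$, where that lemma's hypothesis $k>1$ fails.
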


\begin{proof}
By Lemma~\ref{lem:kp_reduction}, the $\frac{p}{q}$-ROOT-SUM problem is polynomial-time reducible to the $p$-ROOT-SUM problem.
Similarly, because $\gcd(p,q)=1$, by \cref{lem:pq_reduction} the $p$-ROOT-SUM problem is polynomial-time reducible to the $\frac{p}{q}$-ROOT-SUM problem. Therefore, they are polynomial-time equivalent.
\end{proof}

\subsection{Proving Hardness via Reduction}

By \Cref{lem:p-p/q-complexity}, the $p$-ROOT-SUM problem is polynomial-time equivalent to the $\frac{p}{p-1}$-ROOT-SUM problem. We now formalize the reduction from the $\frac{p}{p-1}$-ROOT-SUM problem to the decision version of the problem $\discrmc$ with $\Lp$ uncertainty sets. Formally, we do the following reduction:

Given an integer constant $p \ge 2$, positive integers $a_1, \dots, a_n$, and an integer threshold $\Threshold$, the $\frac{p}{p-1}$-ROOT-SUM problem asks to decide if $\sum_{i=1}^n a_i^{\frac{p-1}{p}} \ge \Threshold$. We reduce this problem to deciding if the optimal value $\optval(s_0) \ge \lambda$ in an RMC $\rmc = (\states, \cost, \succ, \uncert)$ with $\Lp$ uncertainty sets and a constant discount factor $\discount \in (0, 1)$.

\textbf{Reduction:}
First, we introduce the following scaled variables to construct the RMC:
\begin{itemize}
    \item $\forall i \in [n]: b_i := 2^p a_i $.
    \item $\forall i \in [n]:x_i := b_i / 2 = 2^{p-1} a_i$.
    \item $K := 2^{p-1}\Threshold$.
\end{itemize}

By Lemma~\ref{lem:greedy_powers}, we write $x_i$ as the sum of $p$-th powers of positive integers: $x_i = \sum_{k=1}^{M_i} u_{i,k}^p$. 
Let $M^* = \max_i M_i$. We pad each sequence $u_{i,k}$ with zeros so each sequence has exactly $M^*$ terms. This preserves the sum.
By \cref{lem:greedy_powers}, we know that $M^* = O(\log \log a_i) = O(\log \log b_i)$.

We construct the RMC $\rmc = (\states, \cost, \succ, \uncert)$ as follows:

\begin{itemize}
\item $\states := \{s_0\} \cup \{s_1, \dots, s_n\} \cup \bigcup_{i=1}^n \{\absorbing_{i,1}^+, \dots, \absorbing_{i,M^*}^+, \absorbing_{i,1}^-, \dots, \absorbing_{i,M^*}^-\}$.
The states consist of an initial state $s_0$, $n$ transient states $s_i$, and $2M^*$ absorbing states for each transient state.

\item $\cost \colon \states \to \R$:
\[
\cost(s) := \begin{cases} 
    (1-\discount)u_{i,k}^{p-1} & \text{if } s = \absorbing_{i,k}^+, \\ 
    -(1-\discount)u_{i,k}^{p-1} & \text{if } s = \absorbing_{i,k}^-, \\ 
    0 & \text{if } s \in \{s_0, s_1, \dots, s_n\}.
\end{cases}
\]

\item $\succ \colon \states \to 2^\states$ and $\uncert \colon \states \to 2^{\Delta(\states)}$. \\
The successor mapping and $\Lp$ uncertainty sets (defined by a nominal probability $\nominal_s$ and radius $\radius_s$) are defined for each state class:
\begin{itemize}
    \item \textit{Initial State:} $\succ(s_0) := \{s_1, \dots, s_n\}$. The nominal distribution $\nominal_{s_0}$ is uniform over the transient states ($\nominal_{s_0}(s_i) = \frac{1}{n}$) with zero uncertainty ($\radius_{s_0} := 0$).
    
    \item \textit{Transient States:} $\succ(s_i) := \{\absorbing_{i,1}^+, \dots, \absorbing_{i,M^*}^+, \absorbing_{i,1}^-, \dots, \absorbing_{i,M^*}^-\}$. From $s_i$ to its $2M^*$ absorbing states, the nominal probability $\nominal_{s_i}$ is uniform ($1/2M^*$). We define the fixed uncertainty radius for these states as $\radius := \frac{1}{2M^*}$, so $\radius_{s_i} := \radius$.
    
    \item \textit{Absorbing States:} $\succ(\absorbing_{i,k}^+) := \{\absorbing_{i,k}^+\}, \succ(\absorbing_{i,k}^-) := \{\absorbing_{i,k}^-\}$. Each absorbing state deterministically transitions to itself, with zero uncertainty ($\radius_{\absorbing_{i,k}^+} := 0, \radius_{\absorbing_{i,k}^-} := 0)$.
\end{itemize}

\item $\lambda := \frac{\discount^2 \radius K}{n}$.
\end{itemize}

Now, we analyze the values of the states in RMC $\rmc$. We first calculate the values for the absorbing states.

\begin{lemma}\label{lem:absorbing_values}
    For the absorbing states $\absorbing_{i,k}^+$ and $\absorbing_{i,k}^-$, their values are $u_{i,k}^{p-1}$ and $-u_{i,k}^{p-1}$, respectively.
\end{lemma}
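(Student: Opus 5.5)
The plan is a direct computation from the Bellman equation at an absorbing state, exploiting that its uncertainty set is a singleton. Fix $\absorbing_{i,k}^+$. By construction $\succ(\absorbing_{i,k}^+) = \{\absorbing_{i,k}^+\}$ and $\radius_{\absorbing_{i,k}^+} = 0$, so $\uncert(\absorbing_{i,k}^+)$ contains only the Dirac distribution on $\absorbing_{i,k}^+$. The adversary therefore has no choice at this state, and the maximization in the Bellman equation (the fixed-point characterization of $\optval$ from Lemma~\ref{lemma:contractive_value_iteration}) collapses to the single term
\[
\optval(\absorbing_{i,k}^+) = \cost(\absorbing_{i,k}^+) + \discount\, \optval(\absorbing_{i,k}^+).
\]

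Since $\discount < 1$, I would solve this linear equation to get
\[
\optval(\absorbing_{i,k}^+) = \frac{\cost(\absorbing_{i,k}^+)}{1-\discount} = \frac{(1-\discount)u_{i,k}^{p-1}}{1-\discount} = u_{i,k}^{p-1}.
\]
The same argument with cost $-(1-\discount)u_{i,k}^{p-1}$ gives $\optval(\absorbing_{i,k}^-) = -u_{i,k}^{p-1}$.

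As a sanity check from the definition of the discounted criterion: every run from an absorbing state stays there forever, so its discounted cost is $\sum_{t\ge 0}\discount^t \cost(\absorbing_{i,k}^\pm)$. This geometric series sums to the same value for every policy pair, so the inf/sup in the definition of the value is trivial.

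There is no real obstacle. The one point to state explicitly is that the $L_p$ ball of radius $0$ intersected with $\Delta(\{\absorbing_{i,k}^\pm\})$ is the singleton Dirac distribution. This is what justifies dropping the max.
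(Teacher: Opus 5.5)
Your proposal is correct and follows the paper's proof essentially verbatim: since the uncertainty set at each absorbing state is the Dirac distribution on itself, you write its Bellman equation, solve to get $\optval = \cost/(1-\discount)$, substitute the cost $\pm(1-\discount)u_{i,k}^{p-1}$, and handle $\absorbing_{i,k}^-$ symmetrically. The geometric-series sanity check is an extra not in the paper's argument, but it does not change the approach.
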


\begin{proof}
    We first prove the claim for the positive absorbing state $\absorbing_{i,k}^+$. By construction, the absorbing state deterministically transitions to itself because its uncertainty radius is zero. Thus, the Bellman equation simplifies to:
    
    \begin{align*}
        \optval(\absorbing_{i,k}^+) &= \cost(\absorbing_{i,k}^+) + \discount \optval(\absorbing_{i,k}^+)
        && \text{(Bellman equation for absorbing states)} \\
        \implies \optval(\absorbing_{i,k}^+) &= \frac{\cost(\absorbing_{i,k}^+)}{1-\discount}
        && \text{(Rearranging for $\optval$)} \\
        &= \frac{(1-\discount)u_{i,k}^{p-1}}{1-\discount}
        && \text{(Substituting the defined cost)} \\
        &= u_{i,k}^{p-1}.
        && \text{(Canceling the $1-\discount$ terms)}
    \end{align*}

    By symmetry, substituting the cost $\cost(\absorbing_{i,k}^-) = -(1-\discount)u_{i,k}^{p-1}$ for the negative absorbing state into the same derivation yields $\optval(\absorbing_{i,k}^-) = -u_{i,k}^{p-1}$.
\end{proof}

Using the values of the absorbing states, we can determine the value of the transient states. The following lemma calculates this value.

\begin{lemma}\label{lem:holder_lp}
 For the transient state $s_i$, let $\val^{(i)}$ be the vector of values of its absorbing states. Then, $\optval(s_i) = \discount \radius \Vert \val^{(i)} \Vert_{\frac{p}{p-1}}$.
\end{lemma}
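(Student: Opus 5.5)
The plan is to evaluate the robust Bellman equation at $s_i$ directly, using the absorbing values from Lemma~\ref{lem:absorbing_values}. Since $\cost(s_i)=0$, we have $\optval(s_i) = \discount \max_{\pvector \in \uncert(s_i)} \pvector^\top \val^{(i)}$. It therefore suffices to show that this maximum equals $\radius \Vert \val^{(i)} \Vert_{q}$, where $q := \frac{p}{p-1}$ is the Hölder conjugate of $p$.

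\textbf{Step 1: shift by the nominal.} Write any feasible $\pvector$ as $\pvector = \nominal_{s_i} + \mathbf{d}$. Feasibility means $\sum_j \mathbf{d}_j = 0$, $\Vert \mathbf{d}\Vert_p \le \radius$, and $\nominal_{s_i} + \mathbf{d} \succcurlyeq \mathbf{0}$. The nominal is uniform, and the values come in pairs $\pm u_{i,k}^{p-1}$. Hence $\nominal_{s_i}^\top \val^{(i)} = 0$, and the objective reduces to $\mathbf{d}^\top \val^{(i)}$.

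\textbf{Step 2: upper bound.} Drop the simplex constraints and keep only $\Vert \mathbf{d}\Vert_p \le \radius$. Hölder's inequality then gives $\mathbf{d}^\top \val^{(i)} \le \Vert \mathbf{d}\Vert_p \Vert \val^{(i)}\Vert_q \le \radius \Vert \val^{(i)}\Vert_q$.

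\textbf{Step 3: attainment.} This is the only step needing care, namely checking that the Hölder-extremal direction is actually a feasible distribution. If $\val^{(i)} = \mathbf{0}$, which happens only when every $u_{i,k}=0$, both sides are $0$. Otherwise take
\[
\mathbf{d}_j = \radius \,\mathrm{sign}(\val^{(i)}_j)\, |\val^{(i)}_j|^{q-1} / \Vert \val^{(i)}\Vert_q^{q-1}.
\]
The three feasibility conditions are checked as follows.
\begin{itemize}
\item \emph{Norm and equality.} The standard equality case of Hölder gives $\Vert \mathbf{d}\Vert_p = \radius$ and $\mathbf{d}^\top \val^{(i)} = \radius \Vert \val^{(i)}\Vert_q$. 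Here $(q-1)(p-1) = 1$, so $|\val^{(i)}_j|^{q-1} = u_{i,k}$.
\item \emph{Zero sum.} The coordinates for $\absorbing^+_{i,k}$ and $\absorbing^-_{i,k}$ are exact negatives of each other. Therefore $\sum_j \mathbf{d}_j = 0$.
\item \emph{Nonnegativity.} For every $j$, $|\mathbf{d}_j| \le \Vert \mathbf{d}\Vert_\infty \le \Vert \mathbf{d}\Vert_p = \radius = \frac{1}{2M^*} = \nominal_{s_i}(j)$. Hence $\nominal_{s_i} + \mathbf{d} \succcurlyeq \mathbf{0}$.
\end{itemize}
This is exactly why the radius was chosen equal to the uniform nominal mass: the $p$-ball never reaches the boundary of the simplex.

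Combining Steps 1--3 gives $\max_{\pvector \in \uncert(s_i)} \pvector^\top \val^{(i)} = \radius \Vert \val^{(i)}\Vert_q$. Multiplying by $\discount$ yields $\optval(s_i) = \discount \radius \Vert \val^{(i)} \Vert_{\frac{p}{p-1}}$.
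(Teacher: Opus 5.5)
Your proposal is correct and follows essentially the same route as the paper. You reduce, via the Bellman equation and the vanishing nominal term, to maximizing $\mathbf{d}^\top \val^{(i)}$, bound this by Hölder, and attain the bound with the same extremal perturbation (your exponent $q-1=\frac{1}{p-1}$ is exactly the paper's), checking the norm, zero-sum and nonnegativity conditions just as the paper does; your separate treatment of $\val^{(i)}=\mathbf{0}$ is a harmless extra that the paper omits, since $x_i>0$ rules it out.
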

\begin{proof}
    The Bellman equation for $s_i$ yields:
\begin{align*}
    \optval(s_i) &= \cost(s_i) + \discount \max_{\envpol \in \uncert(s_i)} \envpol^\top \val^{(i)}
    && \text{(Definition of the Bellman equation)} \\
    &= \discount \max_{\envpol \in \uncert(s_i)} \envpol^\top \val^{(i)}
    && (\cost(s_i)=0) \\
    &= \discount \max_{\substack{d \in \mathbb{R}^{2M^*}\\\Vert d \Vert_p \le \radius\\\mathbf{1}^\top d = 0\\\nominal_{s_i} + d \ge 0}} \ (\nominal_{s_i} + d)^\top \val^{(i)} 
    && \text{($\uncert(s_i)$ is an $\Lp$ uncertainty set)} \\
    &= \discount \max_{\substack{d \in \mathbb{R}^{2M^*}\\\Vert d \Vert_p \le \radius\\\mathbf{1}^\top d = 0\\\nominal_{s_i} + d \ge 0}} \ d^\top \val^{(i)} 
    && \text{($\nominal_{s_i}^\top \val^{(i)} = 0$ as $\nominal_{s_i}$ is uniform and $\optval(\absorbing_{i,k}^+) + \optval(\absorbing_{i,k}^-) = 0$)} 
\end{align*}

Because $\frac{p-1}{p} + \frac{1}{p} = 1$, Hölder's inequality states that the inner product is bounded from above \cite{folland1999real}:
    \begin{align*}
        d^\top \val^{(i)} &\le \Vert d \Vert_p \Vert \val^{(i)} \Vert_{\frac{p}{p-1}} \le \radius \Vert \val^{(i)} \Vert_{\frac{p}{p-1}}.
    \end{align*}

To prove this upper bound is achievable, we construct the optimal perturbation $d^*$:
$$d^*_j = \radius \frac{\operatorname{sign}(\val^{(i)}_j) |\val^{(i)}_j|^{\frac{1}{p-1}}}{\Vert \val^{(i)} \Vert_{\frac{p}{p-1}}^{\frac{1}{p-1}}}$$

This vector satisfies all constraints:
\begin{itemize}
    \item ($\Vert d^* \Vert_p = \radius$): By the definition of the $p$ norm we have:
    $$\Vert d^* \Vert_p^p = \sum_j |d^*_j|^p = \frac{\radius^p}{\Vert \val^{(i)} \Vert_{\frac{p}{p-1}}^{\frac{p}{p-1}}} \sum_j |\val^{(i)}_j|^{\frac{p}{p-1}}$$
    By the definition of the $L_{\frac{p}{p-1}}$ norm, the sum is equal to the denominator. Thus, $\Vert d^* \Vert_p^p = \radius^p$. Therefore, $\Vert d^* \Vert_p = \radius$,
    \item ($\mathbf{1}^\top d^* = 0$): Since $|\val^{(i)}_j|^{\frac{1}{p-1}} = (u_{i,k}^{p-1})^{\frac{1}{p-1}} = u_{i,k}$, and $\val^{(i)}$ is symmetric, the components of $d^*$ form canceling pairs $\pm u_{i,k}$.
    \item $\nominal_{s_i} + d^* \ge 0$: Since $\Vert d^* \Vert_p = \radius$, and $\infnorm{d^*} \leq \Vert d^* \Vert_p$, the maximum negative perturbation is at most $-\radius$. Because $\nominal_{s_i,j} = \frac{1}{2M^*} = \radius$, the minimum possible probability is $0$.
\end{itemize}

Finally, we evaluate the inner product to confirm the maximum bound:
\begin{align*}
    d^{*\top} \val^{(i)} &= \sum_{j=1}^{2M^*} \left( \radius \frac{\operatorname{sign}(\val^{(i)}_j) |\val^{(i)}_j|^{\frac{1}{p-1}}}{\Vert \val^{(i)} \Vert_{\frac{p}{p-1}}^{\frac{1}{p-1}}} \right) \val^{(i)}_j 
    && \text{(Expanding the dot product)} \\
    &= \frac{\radius}{\Vert \val^{(i)} \Vert_{\frac{p}{p-1}}^{\frac{1}{p-1}}} \sum_{j=1}^{2M^*} |\val^{(i)}_j|^{\frac{p}{p-1}} 
    && \text{($\operatorname{sign}(x) \cdot x = |x|$ and $|x|^{\frac{1}{p-1}} |x| = |x|^{\frac{p}{p-1}}$)} \\
    &= \frac{\radius}{\Vert \val^{(i)} \Vert_{\frac{p}{p-1}}^{\frac{1}{p-1}}} \Vert \val^{(i)} \Vert_{\frac{p}{p-1}}^{\frac{p}{p-1}} 
    && \text{(Definition of the $L_{\frac{p}{p-1}}$ norm)} \\
    &= \radius \Vert \val^{(i)} \Vert_{\frac{p}{p-1}}.
    && \text{(Simplifying exponents)}
\end{align*}

Substituting this back yields $\optval(s_i) = \discount \radius \Vert \val^{(i)} \Vert_{\frac{p}{p-1}}$, completing the proof.
\end{proof}

To simplify the expression for the transient state values, we must evaluate the norm of the absorbing state values. The following lemma calculates this value.

\begin{lemma}\label{lem:reward_norm_q}
    For any transient state $s_i$, $L_{\frac{p}{p-1}}$ norm of its absorbing state values is $\Vert \val^{(i)} \Vert_{\frac{p}{p-1}} = b_i^{\frac{p-1}{p}}$.
\end{lemma}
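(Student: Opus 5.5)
This is a direct calculation, and I will carry it out in three steps.

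\emph{Step 1: entries of $\val^{(i)}$.} By Lemma~\ref{lem:absorbing_values}, the vector $\val^{(i)}$ consists of the $2M^*$ entries $\pm u_{i,k}^{p-1}$ for $k \in [M^*]$. Padded zero terms contribute entries equal to $0$, so they affect nothing.

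\emph{Step 2: the sum $\sum_j |\val^{(i)}_j|^{\frac{p}{p-1}}$.} The key observation is that the exponent $\frac{p}{p-1}$ undoes the power $p-1$, since
\[
|\pm u_{i,k}^{p-1}|^{\frac{p}{p-1}} = u_{i,k}^{p}.
\]
Here I use that $u_{i,k} \ge 0$. Each index $k$ appears twice, once with sign $+$ and once with sign $-$, so
\[
\sum_j |\val^{(i)}_j|^{\frac{p}{p-1}} = 2\sum_{k=1}^{M^*} u_{i,k}^p = 2x_i = b_i.
\]
The middle equality uses the decomposition $x_i = \sum_k u_{i,k}^p$ from Lemma~\ref{lem:greedy_powers}, together with the fact that padding preserves the sum. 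The last equality is the definition $x_i = b_i/2$.

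\emph{Step 3: take the root.} Taking the $\frac{p-1}{p}$-th power of this sum gives $\Vert \val^{(i)} \Vert_{\frac{p}{p-1}} = b_i^{\frac{p-1}{p}}$, which is the claim.

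The only point needing care is bookkeeping rather than a real obstacle. Both the $+$ and $-$ copies must be counted, which produces the factor $2$. This factor is exactly why the construction defines $b_i = 2x_i$, so it matches.
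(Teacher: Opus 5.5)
Your proposal is correct and follows the paper's proof essentially line for line. Both compute $\sum_j |\val^{(i)}_j|^{\frac{p}{p-1}} = 2\sum_k u_{i,k}^p = 2x_i = b_i$ by pairing the $\pm$ absorbing states, then take the $\frac{p-1}{p}$-th power. Your explicit remark that the padded zero terms and the nonnegativity of $u_{i,k}$ cause no trouble is a small piece of care the paper leaves implicit.
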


\begin{proof}
    We evaluate the $L_{\frac{p}{p-1}}$ norm of the value vector $\val^{(i)}$ directly:
    \begin{align*}
        \Vert \val^{(i)} \Vert_{\frac{p}{p-1}} &= \left( \sum_{j=1}^{2M^*} |\val^{(i)}_j|^{\frac{p}{p-1}} \right)^{\frac{p-1}{p}} 
        && \text{(Definition of the $L_{\frac{p}{p-1}}$ norm)} \\
        &= \left( 2 \sum_{k=1}^{M^*} \left( u_{i,k}^{p-1} \right)^{\frac{p}{p-1}} \right)^{\frac{p-1}{p}} 
        && \text{(Grouping symmetric pairs $\absorbing_{i,k}^+$ and $\absorbing_{i,k}^-$)} \\
        &= \left( 2 \sum_{k=1}^{M^*} u_{i,k}^p \right)^{\frac{p-1}{p}} 
        && \text{(Simplifying the exponent: $(p-1) \cdot \frac{p}{p-1} = p$)} \\
        &= (2 x_i)^{\frac{p-1}{p}} 
        && \text{(Substituting the construction $x_i = \sum_{k=1}^{M^*} u_{i,k}^p$)} \\
        &= b_i^{\frac{p-1}{p}}. 
        && \text{(Substituting the definition $b_i = 2 x_i$)}
    \end{align*}
    This completes the proof.
\end{proof}

\begin{proof}[Proof of \Cref{thm:lp-comp-class}-Item~\ref{item:p-root-sum-hardness}]
    By \cref{lem:p-p/q-complexity}, and because $\gcd(p-1,p)=1$, the $\frac{p}{p-1}$-ROOT-SUM problem is as hard as the $p$-ROOT-SUM problem. Therefore, establishing that the problem $\discrmc$ is $\frac{p}{p-1}$-ROOT-SUM-hard proves that it is also $p$-ROOT-SUM-hard.
    
    We evaluate the Bellman equation at the initial state $s_0$:
    \begin{align*}
        \optval(s_0) &= \cost(s_0) + \discount \sum_{i=1}^n \nominal_{s_0}(s_i) \optval(s_i)
        && \text{(Because $\radius_{s_0}=0)$} \\
        &= 0 + \discount \sum_{i=1}^n \frac{1}{n} \optval(s_i) 
        && \text{(Substituting $\cost(s_0) = 0$ and $\nominal_{s_0}(s_i) = \frac{1}{n}$)} \\
        &= \frac{\discount}{n} \sum_{i=1}^n \left( \discount \radius b_i^{\frac{p-1}{p}} \right) 
        && \text{(Substituting results from Lemmas~\ref{lem:holder_lp} and \ref{lem:reward_norm_q})} \\
        &= \frac{\discount^2 \radius}{n} \sum_{i=1}^n b_i^{\frac{p-1}{p}}.
    \end{align*}

    We want to decide if $\optval(s_0) \ge \lambda$. Substituting our target threshold $\lambda = \frac{\discount^2 \radius K}{n}$ gives:
    \begin{align*}
        \optval(s_0) \ge \lambda &\iff \frac{\discount^2 \radius}{n} \sum_{i=1}^n b_i^{\frac{p-1}{p}} \ge \frac{\discount^2 \radius K}{n} \\
        &\iff \sum_{i=1}^n b_i^{\frac{p-1}{p}} \ge K.
        && \text{(Dividing both sides by the positive constant $\frac{\discount^2 \radius}{n}$)}
    \end{align*}

    Recall our scaling definitions $b_i := 2^p a_i$ and $K := 2^{p-1}\Threshold$. We substitute these into the inequality:
    \begin{align*}
        \sum_{i=1}^n (2^p a_i)^{\frac{p-1}{p}} \ge 2^{p-1}\Threshold 
        &\iff \sum_{i=1}^n 2^{p-1} a_i^{\frac{p-1}{p}} \ge 2^{p-1}\Threshold 
        && \text{($(2^p)^{\frac{p-1}{p}} = 2^{p-1}$)} \\
        &\iff 2^{p-1} \sum_{i=1}^n a_i^{\frac{p-1}{p}} \ge 2^{p-1}\Threshold 
        && \text{(Factoring out the constant $2^{p-1}$)} \\
        &\iff \sum_{i=1}^n a_i^{\frac{p-1}{p}} \ge \Threshold.
        && \text{(Dividing by $2^{p-1}$)}
    \end{align*}

    Because all numbers are bounded polynomially and Lemma~\ref{lem:greedy_powers} guarantees the gadget size $M^*$ is bounded, the reduction is valid and runs in polynomial time.
\end{proof}

\subsection{ETR Upper Bound}

In this section, we establish that the decision version of the $\discrmc$ problem with $L_p$ uncertainty sets belongs to the complexity class $\etr$ for any integer $p > 1$. Furthermore, we present an approach to solve the main computational problem by calling an $\etr$ solver.

The main $\discrmc$ problem asks to find the value vector $\optval_\rmc \in \R^{|\states|}$ and a best adversary policy $\envpolopt \colon \states \to \Delta(\states)$. The decision version of this problem asks, for a given state $s$ and a threshold $\lambda$, whether $\optval(s) \geq \lambda$. We will focus on formalizing the main problem as a polynomial-size $\etr$ formula, which can be solved via an $\etr$ solver. Once we find the exact answer using this approach, we can simply solve the decision problem by adding the threshold constraint to the formula.

\paragraph{Karush-Kuhn-Tucker (KKT).} 
Before analyzing the adversary's policy, we recall the Karush-Kuhn-Tucker (KKT) technique \cite{boyd2004convex}. This technique provides an equation to find the optimal solution in convex problems. Suppose we want to minimize a function $f(x)$, subject to inequality rules $g_i(x) \le 0$ and equality rules $h_j(x) = 0$. 

To apply KKT, we introduce a penalty variable for each rule. We use $\mu_i$ for each inequality rule and $\nu_j$ for each equality rule. We combine all conditions into a master equation called the Lagrangian:
$$L(x, \mu, \nu) = f(x) + \sum_i \mu_i g_i(x) + \sum_j \nu_j h_j(x)$$

The KKT conditions state that the optimal solution $x^*$ must satisfy four conditions:
\begin{enumerate}
    \item The solution $x^*$ must satisfy all the original rules ($g_i(x^*) \le 0$ and $h_j(x^*) = 0$).
    \item The penalty variables for the inequalities cannot be negative ($\mu_i \ge 0$).
    \item If $g_i(x^*) < 0$, its penalty is ignored ($\mu_i = 0$). In other words,  $\mu_i g_i(x^*) = 0$ for every rule.
    \item The derivative of Lagrangian must be zero ($\nabla f(x^*) + \sum_i \mu_i \nabla g_i(x^*) + \sum_j \nu_j \nabla h_j(x^*) = 0$).
\end{enumerate}

\paragraph{Slater's Condition.}
The KKT conditions are necessary and sufficient for global optimality if strong duality holds. A standard constraint to establish strong duality is \textit{Slater's condition} \cite{boyd2004convex}.
Slater's condition requires the existence of a feasible point $x$ in the interior of the problem domain. Formally, $x$ must strictly satisfy all non-linear inequality constraints ($g_i(x) < 0$), but only needs to weakly satisfy linear inequality constraints ($g_i(x) \le 0$) and equality constraints ($h_j(x) = 0$). Satisfying this condition guarantees that the KKT equations are sufficient to find the optimal solution.

To prove that our problem can be formalized via an $\etr$ formula, we need to express all conditions using polynomials. We first show that the $L_p$ uncertainty set constraints can be written as a system of polynomial inequalities.

\begin{lemma}\label{lem:uncertainty_poly}
    For any state $s \in \states$, the condition $\envpol_s \in \uncert(s)$ can be expressed as a system of polynomial constraints.
\end{lemma}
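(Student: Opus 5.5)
Our plan is to write out the membership $\envpol_s \in \uncert(s)$ constraint by constraint and check that each one is a polynomial (in)equality with rational coefficients in the variables $\envpol_{s,s'}$, $s' \in \succ(s)$.

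By the assumption on $L_p$ uncertainty sets,
\[
\uncert(s) = \{\pvector \in \Delta(\succ(s)) \mid \lVert \pvector - \nominal_s \rVert_p \le \radius(s)\}.
\]
This gives three groups of constraints.

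First, the simplex constraints. These are $\envpol_{s,s'} \ge 0$ for every $s' \in \succ(s)$, together with $\sum_{s' \in \succ(s)} \envpol_{s,s'} - 1 = 0$. If we regard $\envpol_s$ as a vector over all of $\states$, we also add $\envpol_{s,s'} = 0$ for every $s' \notin \succ(s)$. All of these are linear, hence polynomial.

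Second, the norm constraint. Since $p$ is a positive integer and both sides are nonnegative, we replace $\lVert \envpol_s - \nominal_s \rVert_p \le \radius(s)$ by the equivalent condition
\[
\sum_{s'} \lvert \envpol_{s,s'} - \nominal_{s,s'} \rvert^p - \radius(s)^p \le 0 .
\]
The coefficient $\radius(s)^p$ is rational and has polynomially many bits because $p$ is a constant. The remaining issue is the absolute value when $p$ is odd; this is the only step that needs care.

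For even $p$ we have $\lvert x \rvert^p = x^p$, so the constraint is already polynomial. For odd $p$ (and, uniformly, for any $p$) we introduce auxiliary variables $y_{s'}$ with
\[
y_{s'} \ge 0, \qquad y_{s'}^2 = (\envpol_{s,s'} - \nominal_{s,s'})^2 ,
\]
which forces $y_{s'} = \lvert \envpol_{s,s'} - \nominal_{s,s'} \rvert$. The norm constraint then becomes $\sum_{s'} y_{s'}^p - \radius(s)^p \le 0$.

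If auxiliary variables should be avoided, there is an alternative. We introduce $\epsilon_{s'} \in \{-1,1\}$ via $\epsilon_{s'}^2 = 1$ and require $\epsilon_{s'}(\envpol_{s,s'} - \nominal_{s,s'}) \ge 0$. Since $\Rb$-formulas allow existential quantification, both variants are acceptable. We expect the auxiliary-variable form to be the simpler one to state.

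Finally, we check equivalence and size. Any $\envpol_s \in \uncert(s)$ extends to a solution by setting $y_{s'}$ to the absolute differences. Conversely, any solution projects to a point of $\uncert(s)$, because $t \mapsto t^{1/p}$ is monotone on $[0,\infty)$. The system has $\mathcal{O}(n)$ variables and constraints, each of degree at most $p$ (a constant), with rational coefficients of polynomial bit-size. So it is a polynomial-size system, which is what is needed to combine it later with the KKT and Bellman constraints.
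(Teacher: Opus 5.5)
Your argument is correct and uses the same core idea as the paper: introduce auxiliary variables $y_{s,t}$ standing in for $\lvert \envpol_{s,t} - \nominal_{s,t} \rvert$, then replace the norm bound by $\sum_t y_{s,t}^p - \radius_s^p \le 0$. The encodings differ in how $y$ is tied to the absolute value.

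\textbf{Paper's encoding.} It imposes only the two \emph{linear} inequalities $\pm(\envpol_{s,t} - \nominal_{s,t}) - y_{s,t} \le 0$. These force merely $y_{s,t} \ge \lvert \envpol_{s,t} - \nominal_{s,t}\rvert$. The projection is still exact because $y \mapsto y^p$ is monotone on $[0,\infty)$.

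\textbf{Your encoding.} You pin $y$ down exactly with the quadratic equality $y_{s'}^2 = (\envpol_{s,s'} - \nominal_{s,s'})^2$, or with sign variables satisfying $\epsilon_{s'}^2 = 1$. You also spell out the simplex constraints, which the paper only adds in Lemma~\ref{lem:kkt_poly}. For the lemma as stated, both encodings work.

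\textbf{Where they differ downstream.} Your closing remark, that your system is what is needed for the later KKT step, does not hold for your encoding. The paper's choice keeps every constraint except $\sum_t y_{s,t}^p \le \radius_s^p$ affine, and that one bound is convex on the feasible region, where $y_{s,t} \ge 0$ is implied. So the adversary's problem in Lemma~\ref{lem:kkt_poly} stays a convex program, and Slater's condition plus KKT sufficiency apply. By contrast, $y_{s'}^2 = (\envpol_{s,s'} - \nominal_{s,s'})^2$ and $\epsilon_{s'}^2 = 1$ are non-affine equality constraints. They make the program nonconvex, so a KKT point would no longer certify the adversary's global optimum.

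The relaxed linear encoding is therefore what makes the lemma compatible with the rest of the $\exists\mathbb{R}$ argument. Your shortcut for even $p$, using $\lvert x\rvert^p = x^p$ with no auxiliary variables, would also be fine there, since $\sum_t (\envpol_{s,t}-\nominal_{s,t})^p$ is convex for even $p$.
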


\begin{proof}
    By definition, the uncertainty set $\uncert(s)$  contains the transition probabilities such that $\sum_{t \in \succ(s)} |\envpol_{s,t} - \nominal_{s,t}|^p \le \radius_s^p$. 
    
    Because the absolute value function is not a polynomial, we introduce auxiliary variables $y_{s,t} \in \R$ to bound the absolute differences. For each $t \in \states$, we add the following pair of linear inequalities:
    \begin{align*}
        \envpol_{s,t} - \nominal_{s,t} - y_{s,t} &\le 0, \\
        \nominal_{s,t} - \envpol_{s,t} - y_{s,t} &\le 0.
    \end{align*}
    These inequalities force $y_{s,t} \ge |\envpol_{s,t} - \nominal_{s,t}|$. 
    Note that the non-negativity of the auxiliary variables is logically implied. 
    We replace the $L_p$ constraint with the following inequality:
    \begin{align*}
        \sum_{t \in \succ(s)} y_{s,t}^p - \radius_s^p \le 0.
    \end{align*}
    
    Since $p > 1$ is a constant integer, $y_{s,t}^p$ is a valid polynomial term. Thus, the condition is expressed by a system of polynomial inequalities.
\end{proof}

Next, we formulate the adversary's optimization problem. By applying the KKT conditions, we can describe the optimal worst-case transition distribution using a polynomial system.

\begin{lemma}\label{lem:kkt_poly}
    For any fixed value vector $\val$, the optimal transition distribution $\envpol_s$ chosen by the adversary can be defined by a system of polynomial equations and inequalities using KKT conditions.
\end{lemma}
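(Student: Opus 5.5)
We set up the adversary's problem at state $s$ as a convex program in the variables $\envpol_{s,t}$ and the auxiliary variables $y_{s,t}$ from Lemma~\ref{lem:uncertainty_poly}. We minimize the linear objective $-\sum_{t \in \succ(s)} \envpol_{s,t}\val_t$ subject to the following constraints:
\begin{itemize}
\item the linear constraints $\envpol_{s,t}-\nominal_{s,t}-y_{s,t}\le 0$ and $\nominal_{s,t}-\envpol_{s,t}-y_{s,t}\le 0$;
\item the convex polynomial constraint $g(y)=\sum_t y_{s,t}^p-\radius_s^p\le 0$;
\item nonnegativity $-\envpol_{s,t}\le 0$;
\item the equality $\sum_t \envpol_{s,t}-1=0$.
\end{itemize}
Every constraint is either linear or, in the case of $g$, convex: each $y\mapsto y^p$ is convex on the feasible region because $y_{s,t}\ge 0$ is implied there. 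The objective is linear, so the program is convex. Moreover, any optimal $(\envpol,y)$ of this lifted program projects to an optimal $\envpol$ of $\max_{\pvector\in\uncert(s)}\pvector^\top\val$, and conversely any optimal $\pvector$ lifts by taking $y_{s,t}=|\pvector_t-\nominal_{s,t}|$. So characterizing the lifted optimum suffices.

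Next I write the KKT system with multipliers: $\mu^{+}_t,\mu^{-}_t\ge 0$ for the two linear bounds, $\mu^{g}\ge0$ for $g$, $\eta_t\ge0$ for nonnegativity, and $\nu\in\R$ for the equality. The four groups of conditions are:
\begin{itemize}
\item primal feasibility;
\item dual feasibility;
\item complementary slackness, namely $\mu^{\pm}_t\cdot(\text{linear bound})=0$, $\mu^g g(y)=0$ and $\eta_t\envpol_{s,t}=0$, all of which are products of polynomials;
\item stationarity, namely $-\val_t+\mu^+_t-\mu^-_t-\eta_t+\nu=0$ in $\envpol_{s,t}$ and $-\mu^+_t-\mu^-_t+p\,\mu^g y_{s,t}^{p-1}=0$ in $y_{s,t}$.
\end{itemize}
Since $p$ is a constant integer, $p\,y^{p-1}$ is a polynomial, so the whole system consists of polynomial equalities and inequalities of size polynomial in $|\succ(s)|$. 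Its coefficients are the rational data $\nominal,\radius_s,p$ together with the entries of $\val$, which we treat as parameters or variables.

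For convex programs the KKT conditions are always sufficient for optimality, because stationarity certifies that the point minimizes the convex Lagrangian. The step needing care is necessity, i.e.\ that an optimal point admits multipliers, which is where Slater's condition enters. Only $g$ is nonlinear, so I need a feasible point with $g<0$ while the linear constraints hold weakly. I take $\envpol_s=\nominal_s$ and $y_{s,t}=0$, giving $g=-\radius_s^p<0$ when $\radius_s>0$. The degenerate case $\radius_s=0$ is handled separately by fixing $\envpol_s=\nominal_s$, which is itself a polynomial equality. Slater's condition then gives strong duality and existence of multipliers, so the polynomial system is satisfiable exactly by the optimal distributions, together with suitable multipliers.

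I expect the main obstacle to be the nonsmoothness of $|\cdot|$ and its interaction with Slater's condition. The lifted formulation resolves the nonsmoothness, but I must check that it does not break the equivalence of optima. I also need to make sure the relative-interior requirement is met even though the probability simplex has empty interior; this works because Slater only requires strict satisfaction of the nonlinear inequalities. A final point is that $\val$ may appear as a variable in the system, making the stationarity constraints bilinear. That is still polynomial, which is all the later $\exists\Rb$ encoding combined with the Bellman equations requires.
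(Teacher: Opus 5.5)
Your proposal is correct and follows essentially the same route as the paper. Both use the same lifted convex program with auxiliary variables $y_{s,t}$, the same Slater point $\envpol_s=\nominal_s$, $y_{s,t}=0$ when $\radius_s>0$, the same separate treatment of $\radius_s=0$ via the linear equalities $\envpol_{s,t}=\nominal_{s,t}$, and the same observation that the only nonlinear derivative, $p\,y_{s,t}^{p-1}$, keeps the KKT system polynomial.

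You are somewhat more explicit than the paper: you write out the multipliers, check that optima of the lifted and original programs correspond, and note that $y^p$ is convex only where $y\ge 0$, which matters for odd $p$. The only slip is cosmetic. With $\val$ as a variable, the stationarity equations stay linear in $\val$; it is the Bellman equation $\val_s=\cost_s+\discount\sum_t\envpol_{s,t}\val_t$ that becomes bilinear.
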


\begin{proof}
    The adversary wants to choose the worst-case transition probabilities $\envpol_s$ to maximize the expected cost $\sum_{t \in \states} \envpol_{s,t} \val_t$, which is equivalent to minimizing $-\sum_{t \in \states} \envpol_{s,t} \val_t$. 

    Using Lemma~\ref{lem:uncertainty_poly}, we can write the feasibility condition $\envpol_s \in \uncert(s)$ using auxiliary variables $y_{s,t}$. Thus, the adversary's goal can be written as the following optimization problem:
    \begin{align*}
        & \text{\textbf{Minimize:}} & & -\sum_{t \in \succ(s)} \envpol_{s,t} \val_t \\
        & \text{\textbf{Subject to:}} 
        & & \left(\sum_{t \in \succ(s)} \envpol_{s,t}\right) - 1 = 0 
        && \text{(Probabilities must sum to 1)} \\
        &&& -\envpol_{s,t} \le 0 
        && \text{(Probabilities cannot be negative)} \\
        &&& \envpol_{s,t} - \nominal_{s,t} - y_{s,t} \le 0 
        && \text{(Upper bound for distance)} \\
        &&& \nominal_{s,t} - \envpol_{s,t} - y_{s,t} \le 0 
        && \text{(Lower bound for distance)} \\
        &&& \left(\sum_{t \in \succ(s)} y_{s,t}^p\right) - \radius_s^p \le 0 
        && \text{(Total $L_p$ distance limit)}
    \end{align*}

    This is a convex optimization problem because the goal function is linear and all constraints are convex. To prove that the KKT technique finds the true optimum, we verify that the problem has Slater's condition. 
    We evaluate this in two cases based on $\radius_s$:

    \textbf{Case 1 ($\radius_s > 0$):} We must find a strictly feasible point where the inequality constraints do not touch the exact edge ($< 0$). We choose the default probabilities ($\envpol_{s,t} = \nominal_{s,t}$) and set the auxiliary variables to zero ($y_{s,t} = 0$).
    \begin{itemize}
        \item All of the linear constraints are satisfied by the definition as
        $\envpol=\nominal_{s,t}$ is a valid probability, and all lower bounds and upper bounds are zero.
        \item The non-linear distance limit evaluates to 
        $(\sum_{t \in \succ(s)} 0^p) - \radius_s^p = - \radius_s^p$.
        Because $\radius_s > 0$, this inequality strictly satisfies.
    \end{itemize}
    Because we found a valid choice safely inside the boundaries, Slater's condition is met.

    \textbf{Case 2 ($\radius_s = 0$)}: In this case, the uncertainty set $\uncert(s)$ is a single point with the nominal distribution. Therefore, the optimal transition distribution is the nominal probabilities ($\envpol_{s,t} = \nominal_{s,t}$). Because there is no optimization here, we bypass the KKT formulation in this case. Instead, we express this deterministic optimal policy using linear equality constraints ($\envpol_{s,t} - \nominal_{s,t} = 0$). Since linear equations are valid polynomial equations, the optimal policy is defined by a polynomial system.

    Because Slater's condition holds, the optimum is characterized by the KKT conditions. By assigning penalty multipliers to the constraints, we construct the Lagrangian $L$. The fourth KKT condition requires the partial derivatives of $L$ with respect to $\envpol_{s,t}$ and $y_{s,t}$ to be zero. 

    Because the objective and all but the last constraint are linear, their partial derivatives evaluate to simple constants. The only non-linear term in the Lagrangian is the distance limit, which contains $\sum_{t \in \succ(s)} y_{s,t}^p$. Its derivative with respect to $y_{s,t}$ is $p y_{s,t}^{p-1}$. 

    Since $p > 1$ is an integer, the exponent $(p-1)$ is a non-negative integer, meaning no fractional powers are introduced. Therefore, the KKT system consists of polynomial equations and inequalities.
\end{proof}

Finally, we combine the Bellman equations and the KKT conditions into a single system. The following lemma proves that this combined polynomial system identifies the optimal solution.

\begin{lemma}\label{lem:unique_system}
    Let $\Sigma$ be the system of polynomial equations and inequalities formed by combining the Bellman equations and the KKT conditions for all states:
    \begin{enumerate}
        \item For every state $s \in \states$, the Bellman equation holds: $\val_s = \cost_s + \discount \sum_{t \in \states} \envpol_{s,t} \val_t$.
        \item For every state $s \in \states$, the KKT polynomial system derived in Lemma~\ref{lem:kkt_poly} holds for $\envpol_s$ and $\val$.
    \end{enumerate}
    A tuple $(\val, \envpol, y, \mu, \nu)$ is a valid solution to $\Sigma$ if and only if $\val$ is the optimal value vector $\optval$ and $\envpol$ is an optimal adversary policy $\envpolopt$.
\end{lemma}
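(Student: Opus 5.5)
We prove the two directions separately. Throughout, the key fact is that for a \emph{fixed} vector $\val$, the KKT system of Lemma~\ref{lem:kkt_poly} at state $s$ (when $\radius_s>0$; when $\radius_s=0$ it is the linear system $\envpol_s=\nominal_s$) is satisfied by $(\envpol_s,y_s,\mu_s,\nu_s)$ if and only if $\envpol_s\in\argmax_{\pvector\in\uncert(s)}\pvector^\top\val$. Necessity holds because Slater's condition gives strong duality, so some multipliers exist at every optimum. Sufficiency holds because the problem is convex, so any KKT point is a global optimum. We also use that feasibility of $(\envpol_{s},y_{s})$ in the KKT system forces $\envpol_s\in\uncert(s)$: the auxiliary bounds give $y_{s,t}\ge|\envpol_{s,t}-\nominal_{s,t}|\ge 0$, and monotonicity of $y\mapsto y^p$ on $[0,\infty)$ then gives $\|\envpol_s-\nominal_s\|_p\le\radius_s$.

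\textbf{($\Rightarrow$)} Let $(\val,\envpol,y,\mu,\nu)$ solve $\Sigma$. By the fact above, for every $s$ we have $\sum_t\envpol_{s,t}\val_t=\max_{\pvector\in\uncert(s)}\pvector^\top\val$. Substituting into the Bellman equations of item~1 gives $\val_s=\cost_s+\discount\max_{\pvector\in\uncert(s)}\pvector^\top\val=(\Bellman\val)_s$, so $\val$ is a fixed point of the robust Bellman operator in Equation~\eqref{Eq:BellmanRMC}. By Lemma~\ref{lemma:contractive_value_iteration}, the fixed point is unique, so $\val=\optval$. Each $\envpol_s$ is a maximizer against $\optval$, and item~1 says $\val$ is the value of the positional policy $\envpol$, i.e.\ $\val^\envpol=(\identitymatrix-\discount\trans^\envpol)^{-1}\cost=\optval$. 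Hence $\envpol$ is an optimal adversary policy.

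\textbf{($\Leftarrow$)} Let $\envpolopt$ be an optimal adversary policy, so $\val^{\envpolopt}=\optval$. We first argue that $\envpolopt_s$ is greedy with respect to $\optval$ at every state. Lemma~\ref{lemma:optimal_value_upper_bound_rmc} and the Bellman equation for $\envpolopt$ give $\optval_s=\cost_s+\discount\,{\pvector^{\envpolopt}_s}^\top\optval\le(\Bellman\optval)_s=\optval_s$, so equality holds. If instead ``optimal'' is taken to mean a policy returned by the greedy step, it is greedy by definition. With $\envpolopt$ greedy, we set $y_{s,t}=|\envpolopt_{s,t}-\nominal_{s,t}|$ and take multipliers $(\mu,\nu)$ from strong duality via Slater's condition (Lemma~\ref{lem:kkt_poly}). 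Then item~2 holds, and item~1 is exactly the policy Bellman equation~\eqref{Eq:BellmanRMCPolicy}.

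\textbf{Main obstacle.} The delicate step is the KKT equivalence for the lifted formulation with auxiliary variables $y$. Two things need checking. First, the lifted problem is jointly convex in $(\envpol,y)$, which holds since $\sum_t y_{s,t}^p$ is convex on the region $y\ge 0$ forced by the linear constraints. Second, its optimal $\envpol$-projection coincides with the original argmax over $\uncert(s)$, and $y$ need not be tight for the converse direction. To secure convexity cleanly, we would add the explicit constraints $y_{s,t}\ge 0$; these are linear, so Slater's condition is unaffected. The case $\radius_s=0$ is handled separately by the linear equalities of Lemma~\ref{lem:kkt_poly}.
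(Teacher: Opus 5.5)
Your proposal is correct and follows essentially the same route as the paper's proof. For a fixed $\val$, convexity plus Slater's condition make the KKT system equivalent to $\envpol_s$ being a maximizer over $\uncert(s)$; hence a solution of $\Sigma$ is a fixed point of $\Bellman$, so $\val=\optval$ by uniqueness, and conversely $\optval$ together with a greedy optimal policy and its KKT multipliers solves $\Sigma$. Your additional details tighten steps the paper passes over quickly without changing the argument: you show that every optimal adversary policy is greedy with respect to $\optval$, that item~1 forces $\val=\val^{\envpol}$ through the invertible linear system, and that the lifted problem is convex only on the region $y\ge 0$, which matters for odd $p$.
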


\begin{proof}    
    First, we show that the optimal value vector $\optval$ and the optimal worst-case policy $\envpolopt$ produce a valid solution to $\Sigma$. 
    By \cref{lemma:bellman}, the robust Bellman operator has a unique fixed point. Thus, the true optimal value vector $\optval$ exists. 
    For this $\optval$, there exists an optimal adversary policy $\envpolopt$ that achieves the worst-case transition, satisfying $\sum_{t \in \states} \envpolopt_{s,t} \optval_t = \max_{p \in \uncert(s)} p^\top \optval$. 
    By substituting this into the first part of $\Sigma$, we get $\optval_s = \cost_s + \discount \sum_{t \in \states} \envpolopt_{s,t} \optval_t$, which satisfies the Bellman equations.
    
    Moreover, because $\envpolopt_s$ maximizes the expected cost for $\optval$, it is the optimal solution to the inner minimization problem in Lemma~\ref{lem:kkt_poly}. Because this inner problem is convex and satisfies Slater's condition, strong duality holds. This guarantees that the KKT conditions are necessary for optimality. Therefore, there exist real auxiliary variables $y$ and KKT multipliers $\mu, \nu$ that satisfy the KKT system for $\optval$ and $\envpolopt_s$. Thus, the tuple $(\optval, \envpolopt, y, \mu, \nu)$ forms a valid solution to $\Sigma$.  
    
    Second, suppose we have a solution $(\val', \envpol', y', \mu', \nu')$ for $\Sigma$. Because $\envpol'_s$ satisfies the KKT conditions for $\val'_s$, and Slater's condition holds, $\envpol'_s$ is the true worst-case transition distribution for $\val'$. 
    This means the inner product reaches the maximum: $\sum_{t \in \states} \envpol'_{s,t} \val'_t = \max_{p \in \uncert(s)} p^\top \val'$.
    Substituting this into the Bellman constraint of $\Sigma$ gives:
    \begin{equation*}
        \val'_s = \cost_s + \discount \max_{p \in \uncert(s)} p^\top \val'
    \end{equation*}
    This means $\val'$ is a fixed point of the robust Bellman operator. By \cref{lemma:bellman}, this operator has a unique fixed point. Therefore, $\val'$ must be the optimal value vector $\optval$. Consequently, $\envpol'$ is an optimal adversary policy.
\end{proof}

We now have a complete polynomial system $\Sigma$ that finds the exact value vector and a best policy for the environment. Since the decision problem asks whether $\optval(s) \geq \lambda$, we can answer it by adding this linear constraint to the system $\Sigma$.

\begin{proof}[Proof of \Cref{thm:lp-comp-class}-Item~\ref{item:etr-easyness}]
    By \cref{lem:unique_system}, the polynomial system $\Sigma$ identifies the optimal value vector $\optval$ and the optimal policy $\envpolopt$ as its real solutions. 
    
    The system $\Sigma$ is constructed entirely using real variables, polynomial equations, and polynomial inequalities, where all coefficients are rational. By definition, finding a solution to a system of polynomial equations and inequalities over the real numbers is a problem in the existential theory of the reals. Thus, the main $\discrmc$ problem can be solved via an $\etr$ solver.
    Moreover, by adding the constraint $\optval(s) \geq \lambda$ to $\Sigma$, we can evaluate the threshold condition. Therefore, the decision problem belongs to the complexity class $\etr$.
\end{proof}

\section{Additional Experimental Details}

\label{appendix:experiments}

\paragraph{Benchmarks.}
We evaluate our algorithm on five benchmarks. For each benchmark, we scale the state-space size up to $n=|\States|=256$. Reward is represented as negated cost.

\begin{itemize}
\item \textbf{Gridworld} (Adapted from \cite{andrew2018reinforcement}). We consider a $k \times k$ grid navigation task containing $n = k^2$ states. The agent starts at the top-left $(0,0)$ and navigates toward an absorbing goal at the bottom-right $(k-1, k-1)$ to receive a continuous +1 reward. Actions consist of the four cardinal directions. The environment dynamics are stochastic: intended movements succeed with a probability of 0.8, while slipping perpendicularly occurs with a 0.1 probability in either direction. Movements into boundaries result in self-loops. An absorbing trap is placed at $x = \lfloor (k-1)/2 \rfloor, y = k - 1 - x$, which imposes a continuous -1 penalty. All other transitions carry a standard step penalty of -0.01.

\item \textbf{Inventory Management.} (Adapted from \cite{puterman94}). The state space $s \in \{0, \dots, n-1\}$ denotes the current inventory level. Let $d_{\max} = \max(1, \lfloor(n-1)/2\rfloor)$ represent the maximum expected demand. At each step, the agent selects an order quantity from $\mathcal{A} = \{0, \operatorname{round}(d_{\max}/2), d_{\max}\}$, where $\operatorname{round}(x)$ rounds to the nearest integer, with half-integers rounded to the nearest even integer. Customer demand is modeled as a discrete triangular distribution supported on $\{0, \dots, d_{\max}\}$ with a peak at $m = \lfloor d_{\max}/2 \rfloor$ and unnormalized weights $w(d) = m - |d - m| + 1$. The reward function evaluates the expected profit under this distribution, assuming a unit revenue of 1, a unit holding cost of 0.1, and a unit ordering cost of 0.5.

\item \textbf{Machine Replacement} (adapted by \cite{wiesemann2013}). The state space $s \in \{0, \dots, n-1\}$ tracks the machine's degradation level, starting pristine at $s=0$ and fully broken at the absorbing state $s=n-1$. The action space consists of 3 options: \texttt{operate}, \texttt{repair}, and \texttt{replace}. The \texttt{operate} action causes the machine to degrade to $s+1$ with a 1/3 probability or remain in $s$ with a 2/3 probability. The \texttt{repair} action improves the condition to $s-1$ with a 3/4 probability, failing and remaining in $s$ with a 1/4 probability. The \texttt{replace} action deterministically resets the machine to $s=0$. Operating provides a state-dependent reward defined as $\frac{n-1-s}{n-1}$, whereas the \texttt{repair} and \texttt{replace} actions impose fixed penalties of -0.25 and -0.5.

\item \textbf{GARNET} (adapted from \cite{archibald1995generation}). Generalized Artificial Random Networks serve as unstructured random MDP baselines. We consider a state space of size $n$, where each state has 4 actions and a fixed transition branching factor of $b = 3$. The successor set $\succ(s,a) \subseteq \mathcal{S}$ of size $b$ is drawn uniformly at random for every state-action pair. We assign normalized integer weights $w(s, a, s') \sim \mathcal{U}\{1, 1000\}$ to all $s' \in \succ(s,a)$. The nominal transition probabilities are calculated as $P(s' \mid s, a) = \frac{w(s, a, s')}{\sum_{s''} w(s, a, s'')}$. Immediate rewards are given by uniformly sampled integers $r(s, a) \sim \mathcal{U}\{0, 10\}$.

\item \textbf{Long Chain}. A construction introduced in this paper to show a $\Theta(n)$ lower bound on policy iteration under a fixed discount factor~$\gamma$. With chain length~$k$ (so $n = 2k+1$), the state space consists of $k$ \emph{path} states $s_0, \dots, s_{k-1}$, $k$ absorbing \emph{leaf} states $l_0, \dots, l_{k-1}$, and an absorbing sink $s_{\mathrm{end}}$. From each path state~$s_i$, action \texttt{path} deterministically advances to $s_{i+1}$ (or to $s_{\mathrm{end}}$ when $i=k-1$), while action \texttt{leaf} jumps to $l_i$. Per-step rewards are $0$ on path states, $1$ on leaves, and $\gamma^{-(k+1)}$ on the sink; the inflated sink reward is calibrated so that \texttt{path} is optimal at every state by a unit margin. Initialized with \texttt{leaf} everywhere, policy iteration flips one path state per outer step, requiring $k = \Theta(n)$ iterations to converge. Although the sink reward grows in $k$, its bit-size remains $\mathcal{O}(k\log(1/\gamma))$, which is polynomial in the input size for any constant $\gamma$.

\end{itemize}

\paragraph{Results}

\Cref{fig:main_results_g9,fig:main_results_g99,fig:main_results_g995} report the outer \texttt{RMDP-PI} and total inner \texttt{RMC-PI} iteration counts for different values of $\gamma,\delta$ under both $L_1$ and $L_\infty$ uncertainty sets.

\begin{figure}[h]
  \centering
  \includegraphics[width=0.8\linewidth]{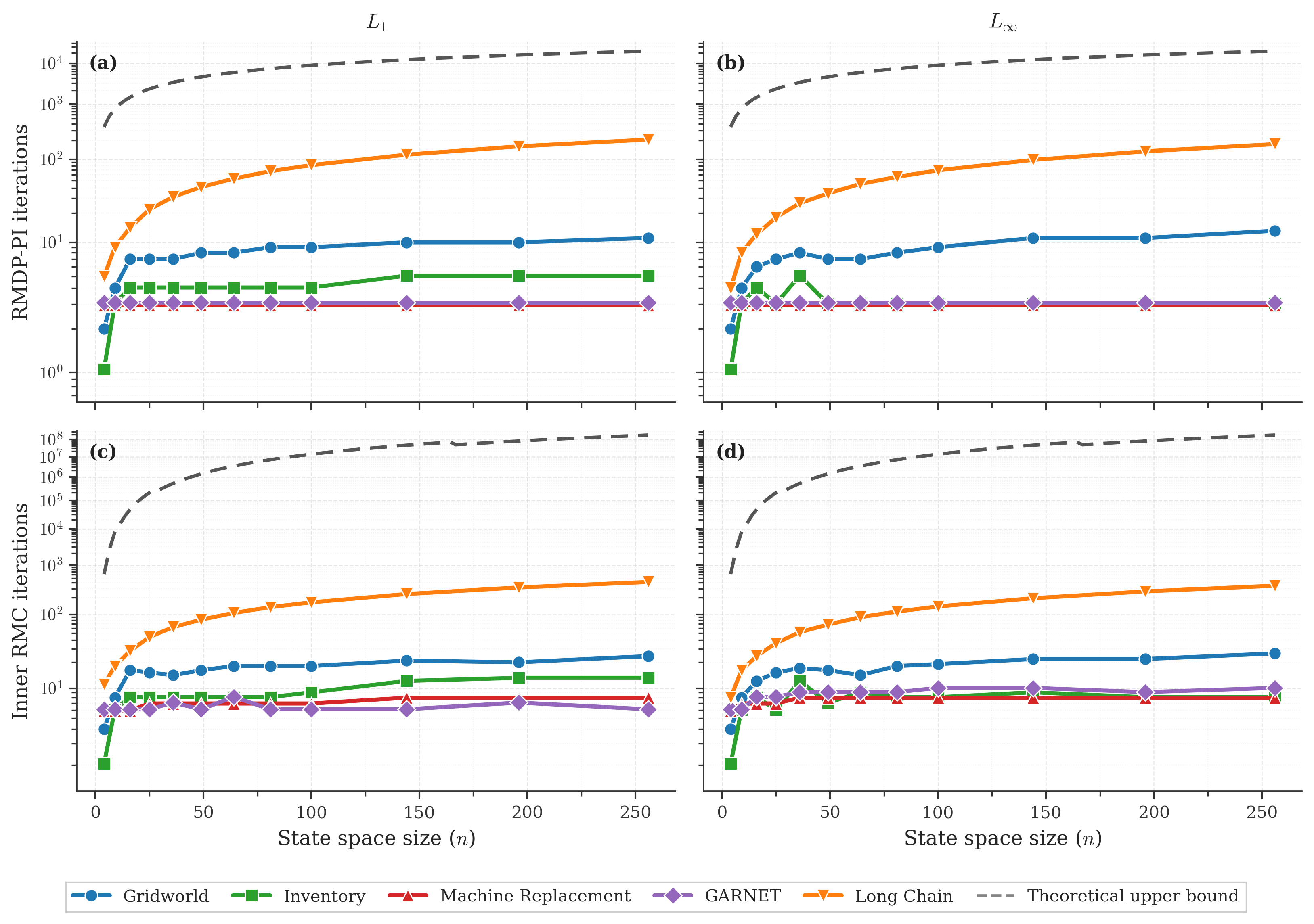}
  \caption{%
     Outer \texttt{RMDP-PI} (a,~b) and total inner \texttt{RMC-PI} (c,~d)
    iterations versus state-space size~$n$ on five benchmarks, under $L_1$
    (left) and $L_\infty$ (right) uncertainty sets, with $\gamma=0.9$,
    $\delta=0.05$. Dashed line: theoretical upper bound.%
  }
  \label{fig:main_results_g9}
\end{figure}

\begin{figure}[h]
  \centering
  \includegraphics[width=0.8\linewidth]{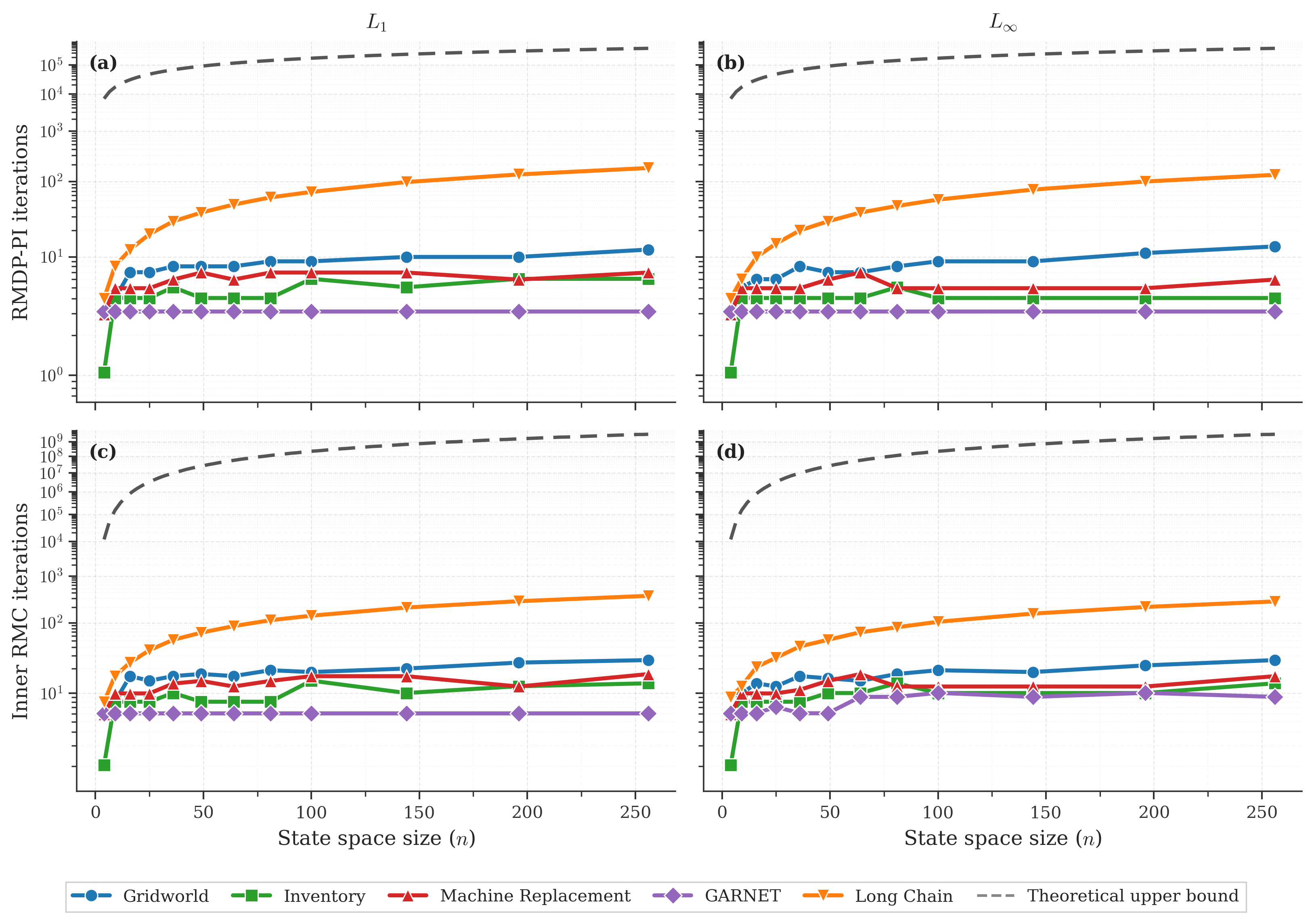}
  \caption{%
     Outer \texttt{RMDP-PI} (a,~b) and total inner \texttt{RMC-PI} (c,~d)
    iterations versus state-space size~$n$ on five benchmarks, under $L_1$
    (left) and $L_\infty$ (right) uncertainty sets, with $\gamma=0.99$,
    $\delta=0.01$. Dashed line: theoretical upper bound.%
  }
  \label{fig:main_results_g99}
\end{figure}

\begin{figure}[h]
  \centering
  \includegraphics[width=0.8\linewidth]{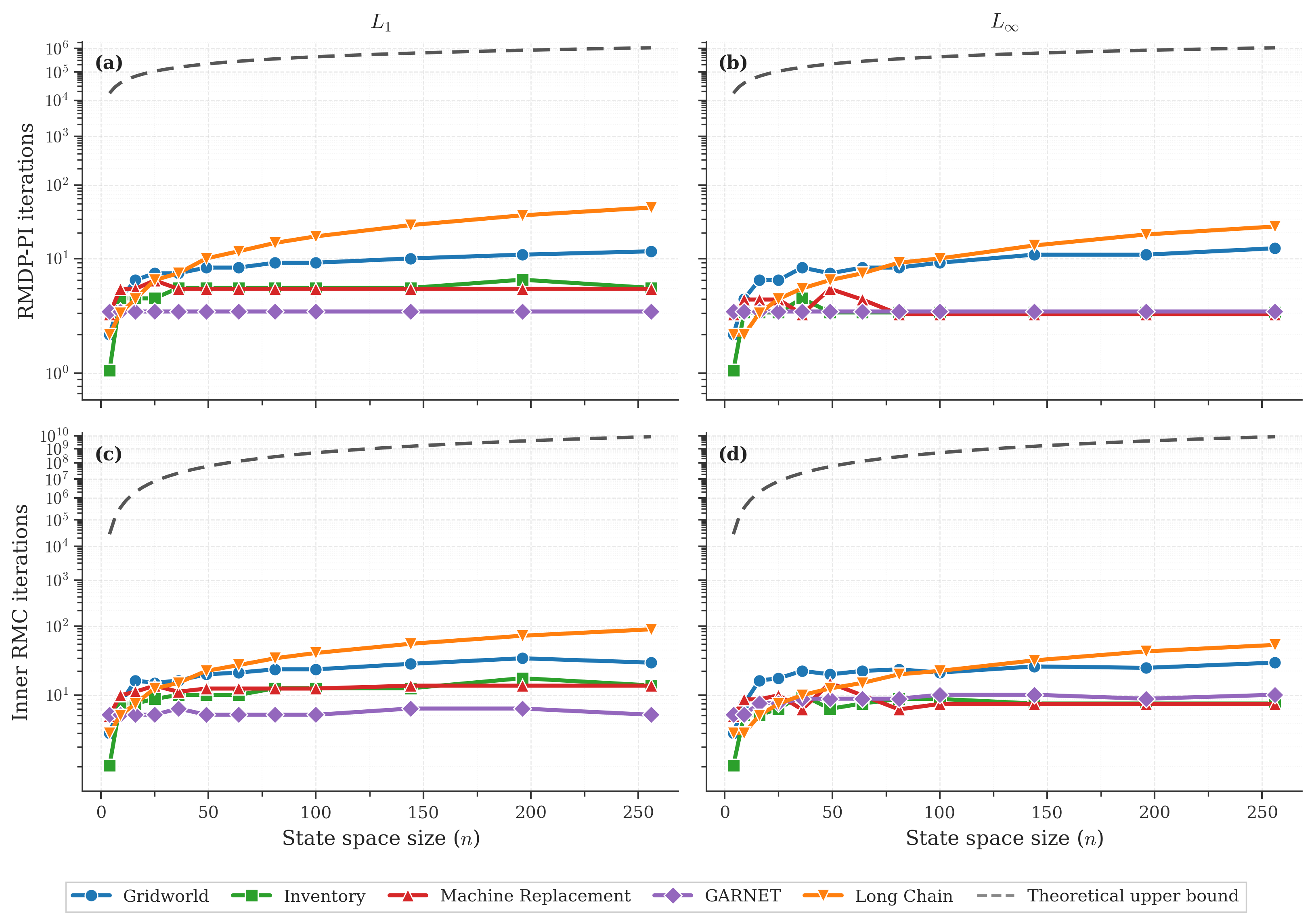}
  \caption{%
     Outer \texttt{RMDP-PI} (a,~b) and total inner \texttt{RMC-PI} (c,~d)
    iterations versus state-space size~$n$ on five benchmarks, under $L_1$
    (left) and $L_\infty$ (right) uncertainty sets, with $\gamma=0.995$,
    $\delta=0.05$. Dashed line: theoretical upper bound.%
  }
  \label{fig:main_results_g995}
\end{figure}

%%%%%%%%%%%%%%%%%%%%%%%%%%%%%%%%%%%%%%%%%%%%%%%%%%%%%%%%%%%%

% \clearpage
% \input{checklist.tex}

\end{document}